\newcommand{\seclab}[1]{\label{sec:#1}}
\newtheorem{theorem}{Theorem}[section]
\newtheorem{corollary}[theorem]{Corollary}
\newtheorem{lemma}[theorem]{Lemma}
\newtheorem{claim}[theorem]{Claim}
\def\NN{\EuScript{N}}
\def \reals{{\mathbb R}}
\def \sphere{{\mathbb S}}
\def\dirtour{{\mathcal D}}
\def\bd{{\partial}}
\def\poly{\diamond}
\def\S{\mathcal{S}}
\def\T{\EuScript{T}}
\newcommand{\ignore}[1]{}
\def\inprod#1#2{\langle #1, #2\rangle}
\def\bisect{b}
\def\Nbrs{N}
\def\distfn{\varphi}
\def\G{{\sf G}}
\def\SDG{\mathop{\mathrm{SDG}}}
\def\DT{\mathop{\mathrm{DT}}}
\def\VD{\mathop{\mathrm{VD}}}
\def\Vor{\mathop{\mathrm{Vor}}}
\begin{document}

\begin{titlepage}

\title{Kinetic Stable Delaunay Graphs\thanks{%
A preliminary version of this paper appeared in {\it Proc. 26th Annual Symposium on Computational Geometry}, 2010, pp. 127--136.}}

\author{Pankaj K. Agarwal\thanks{%
    Department of Computer Science, Duke University, Durham, NC
    27708-0129, USA, {\tt pankaj@cs.duke.edu}.}
\and
Jie Gao\thanks{%
Department of Computer Science, Stony Brook University, Stony
Brook, NY 11794, USA, {\tt jgao@cs.sunysb.edu}. }
\and
Leonidas Guibas\thanks{%
Department of Computer Science, Stanford University, Stanford,
CA 94305, USA, {\tt guibas@cs.stanford.edu}.}
\and
Haim Kaplan\thanks{%
School of Computer Science, Tel Aviv University, Tel~Aviv 69978, Israel.
{\tt haimk@tau.ac.il}.}
\and
Vladlen Koltun\thanks{%
Department of Computer Science, Stanford University, Stanford,
CA 94305-9025, USA, {\tt vladlen@cs.stanford.edu}.}
\and
Natan Rubin\thanks{%
School of Computer Science, Tel Aviv University, Tel~Aviv 69978, Israel.
{\tt rubinnat@tau.ac.il}.}
\and
Micha Sharir\thanks{%
School of Computer Science, Tel Aviv University, Tel~Aviv 69978, Israel;
and Courant Institute of Mathematical Sciences, New York University,
New York, NY~~10012,~USA.  {\tt michas@tau.ac.il}.}
}

\maketitle

\begin{abstract}
We consider the problem of maintaining the Euclidean Delaunay
triangulation $\DT$ of a set $P$ of $n$ moving points in the plane, along algebraic tranjectories of constant description complexity.
Since the best known upper bound on the number of topological changes 
in the full Delaunay triangulation is only nearly cubic, we seek 
to maintain a suitable portion of the diagram that is less volatile 
yet retains many useful properties of the full triangulation.  
We introduce the notion of a {\em stable Delaunay graph}, which is 
a dynamic subgraph of the Delaunay triangulation. The stable Delaunay graph (a) is easy to
define, (b) experiences only a nearly quadratic number of discrete 
changes, (c) is robust under small changes of the norm, and (d)
possesses certain useful properties for further applications.

The stable Delaunay graph ($\SDG$ in short) is defined in terms of
a parameter $\alpha>0$, and consists of Delaunay edges $pq$ for
which the (equal) angles at which $p$ and $q$ see the corresponding
Voronoi edge $e_{pq}$ are at least $\alpha$.
We show that (i) $\SDG$ always contains at least roughly one third of the
Delaunay edges at any fixed time; (ii) it contains the
$\beta$-skeleton of $P$, for $\beta=1+\Omega(\alpha^2)$; (iii) it is
stable, in the sense that its edges survive for long periods of time,
as long as the orientations of the segments connecting (nearby) points
of $P$ do not change by much; and (iv) stable Delaunay edges remain
stable (with an appropriate redefinition of stability) if we
replace the Euclidean norm by any sufficiently close norm.

In particular, if we approximate the Euclidean norm by a polygonal 
norm (with a regular $k$-gon as its unit ball, with
$k=\Theta(1/\alpha)$), we can define and keep track of a Euclidean $\SDG$
by maintaining the full Delaunay triangulation of $P$ under the
polygonal norm (which is trivial to do, and which is known to involve only a
nearly quadratic number of discrete changes).

We describe two kinetic data structures for maintaining $\SDG$ when
the points of $P$ move along pseudo-algebraic trajectories of constant description complexity. The first
uses the polygonal norm approximation noted above, and the second is
slightly more involved, but significantly reduces the dependence of 
its performance on $\alpha$.  Both structures use $O^*(n)$ storage 
and process $O^*(n^2)$ events during the motion, each in $O^*(1)$ time.
(Here the $O^*(\cdot)$ notation hides multiplicative factors which 
are polynomial in $1/\alpha$ and polylogarithmic in $n$.)
\end{abstract}

\end{titlepage}

\section{Introduction}

\paragraph{Delaunay triangulations and Voronoi diagrams.} 
Let $P$ be a (finite) set of points in $\reals^2$. 
Let $\VD(P)$ and $\DT(P)$ denote the Voronoi diagram and Delaunay
triangulation of $P$, respectively. For a point $p \in P$, let
$\Vor(p)$ denote the Voronoi cell of $p$. 
%If $\Vor(p)$ and $\Vor(q)$
%share an edge, we denote it by $e_{pq}$, and call it the \textit{Voronoi edge} of $p$ and $q$. Let $\bisect_{pq}$ denote
The Delaunay triangulation $\DT=\DT(P)$ consists of all 
triangles whose circumcircles do not contain points of $P$ in their
interior. Its edges form the {\em Delaunay graph}, which is the 
straight-edge dual graph of the Voronoi diagram of $P$. That is,
$pq$ is an edge of the Delaunay graph if and only if
$\Vor(p)$ and $\Vor(q)$ share an edge, which we denote by $e_{pq}$.
This is equivalent to the existence of a circle passing through $p$
and $q$ that does not contain any point of $P$ in its interior---any
circle centered at a point on $e_{pq}$ and passing through $p$ and $q$
is such a circle. 
%
%Delaunay triangulations admit a natural generalization to higher-dimensional spaces (with triangles replaced by $d$-simplices).
Delaunay triangulations and Voronoi diagrams are fundamental to much 
of computational geometry and its applications. 
See \cite{AK,Ed2} for a survey and a
textbook on these structures.

In many applications of Delaunay/Voronoi methods (e.g., mesh generation and kinetic collision detection) the points are moving continuously, so
these diagrams need to be efficiently updated as motion occurs.
Even though the motion of the nodes is continuous, the combinatorial and topological structure of the Voronoi and
Delaunay diagrams change only at
discrete times when certain critical events occur. Their evolution
under motion can be studied within the framework of {\em kinetic data
structures} (KDS in short) of Basch {\em et al.}~\cite{bgh-dsmd-99,285869,g-kdssar-98},
a general methodology for designing efficient algorithms for maintaining
such combinatorial attributes of mobile data. 
%Interest in geometric
%structures under motion goes back at least to the work of
%Atallah~\cite{a-dcg-83,a-sdcgp-85}, who posed the question of
%investigating the number of combinatorial changes a geometric
%structure undergoes when its defining elements undergo simple motions.

For the purpose of kinetic maintenance, Delaunay triangulations are 
nice structures, because, as mentioned above, they admit local 
certifications associated with individual triangles.  This makes 
it simple to maintain $\DT$ under point motion: an update is 
necessary only when one of these empty circumcircle conditions 
fails---this corresponds to cocircularities of certain subsets of
four points.\footnote{We assume that the motion of the points is sufficiently generic, so that no more than four points can become cocircular at any given time.} Whenever such an event happens, 
a single edge flip easily restores Delaunayhood. Estimating the 
number of such events, however, has been elusive---the problem 
of bounding the number of combinatorial changes in $\DT$ for 
points moving along semi-algebraic trajectories of constant description complexity has been in the 
computational geometry lore for many years; see \cite{TOPP}.

Let $n$ be the number of moving points in $P$. We
assume that each point moves along an algebraic trajectory of
fixed degree or, more generally, along pseudo-algebraic trajectory of constant description complexity (see Section~\ref{sec:Prelim} for a more formal
definition).
%We consider only the planar case and assume that the
%points move with so-called pseudo-algebraic motions of constant description complexity, meaning that the motion of each point can be described by a constant number of polynomial equalities and inequalities of constant maximum degree (so each
%possible co-circularity of four points can happen only a constant
%number of times).  
%By using lower-envelope techniques, 
Guibas et al.~\cite{gmr-vdmpp-92} showed a roughly cubic upper bound of
$O(n^2 \lambda_s(n))$ on the number of discrete (also known as \textit{topological}) changes in $\DT$, where $\lambda_s(n)$ is the maximum length
of an $(n,s)$-Davenport-Schinzel sequence~\cite{SA95}, and $s$ is a constant
depending on the motions of the points. A substantial gap exists between this upper bound
and the best known quadratic lower bound~\cite{SA95}. 
%See \cite{TOPP} where this problem is highlighted as
%one of the major open problems in the area.

It is thus desirable to find approaches for maintaining a substantial 
portion of $\DT$ that {\em provably} experiences only a nearly 
quadratic number of discrete changes, that is reasonably easy to define and 
maintain, and that retains useful properties for further applications.

\paragraph{Polygonal distance functions.} 
If the ``unit ball" of our
underlying norm is {\em polygonal} then things improve considerably.
In more detail, let $Q$ be a convex polygon with a constant
number, $k$, of edges.  It induces a {\em convex distance function}
$$d_Q(x,y) = \min\{\lambda \mid y\in x+\lambda Q\};$$ 
$d_Q$ is a metric if $Q$ is centrally symmetric with respect to the origin.

We can define the $Q$-Voronoi diagram
of a set $P$ of points in the plane in the usual way, as the
partitioning of the plane into Voronoi cells, so that the cell
$\Vor^\poly(p)$ of a point $p$ is
$\{ x\in\reals^2 \mid d_Q(x,p)=\min_{p'\in P}d_Q(x,p') \}$.
Assuming that the points of $P$ are in general position with respect
to $Q$, these cells are nonempty, have pairwise disjoint interiors,
and cover the plane. 

As in the Euclidean case, the $Q$-Voronoi diagram of $P$ has its 
dual representation, which we refer to as the {\em $Q$-Delaunay
triangulation} $\DT^\poly(P)=\DT^\poly$. A triple of points in $P$ define a
triangle in $\DT^\poly$ if and only if they lie on the boundary of some 
homothetic copy of $Q$ that does not contain any point of $P$ in its
interior. Assuming that $P$ is in general position, these $Q$-Delaunay
triangles form a triangulation of a certain simply-connected polygonal 
region that is contained in the convex hull of $P$.
Unlike the Euclidean case, it does not always coincide with the convex hull (see Figures~\ref{Fig:ConesCertif} and~\ref{Fig:AlmostTriangulation} for examples).
See Chew and Drysdale~\cite{CD} and Leven and Sharir~\cite{LS} for analysis of Voronoi and Delaunay diagrams of this kind.

For kinetic maintenance, polygonal Delaunay triangulations are
``better'' than Euclidean Delaunay triangulations because, as shown by
Chew~\cite{Chew}, when the points of $P$ move (in the algebraic
sense assumed above), the number of topological changes in the 
$Q$-Delaunay triangulation is only nearly quadratic in $n$.
One of
the major observations in this paper is that the \textit{stable portions} of the Euclidean Delaunay triangulation and the $Q$-Delaunay triangulation are closely related.

\paragraph{Stable Delaunay edges.} 
We introduce the notion of \textit{$\alpha$-stable Delaunay edges}, 
for a fixed parameter $\alpha>0$, defined as follows.  Let $pq$ be 
a Delaunay edge under the Euclidean norm, and let $\triangle pqr^+$ and $\triangle pqr^-$ 
be the two Delaunay triangles incident to $pq$.  Then $pq$ is 
called {\em $\alpha$-stable} if its opposite angles in these triangles
satisfy $\angle pr^+q + \angle pr^-q < \pi-\alpha$. (The case where
$pq$ lies on the convex hull of $P$ is treated as if one of $r^+,r^-$ 
lies at infinity, so that the corresponding angle $\angle pr^+q$ or 
$\angle pr^-q$ is equal to $0$.) An equivalent and more useful definition, in terms of the dual Voronoi diagram, is that 
$pq$ is $\alpha$-stable if the equal angles at which $p$ and $q$ 
see their common Voronoi edge $e_{pq}$ are at least $\alpha$.
See Figure \ref{Fig:LongDelaunay}.

\begin{figure}[htbp]
\begin{center}
\input{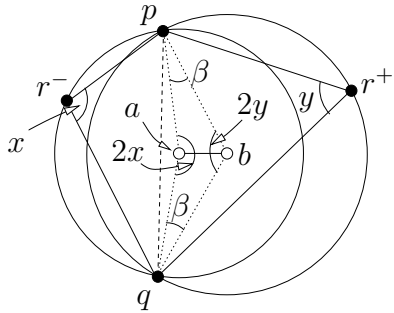}
\caption{\small \sf The points $p$ and $q$ see their common Voronoi edge $ab$ at (equal) angles $\beta$. This is equivalent to the angle condition $x+y=\pi-\beta$ for the two adjacent Delaunay triangles.}
\label{Fig:LongDelaunay}
\end{center}
\end{figure}

A justification for calling such edges stable lies in the following
observation: If a Delaunay edge $pq$ is $\alpha$-stable then it 
remains in $\DT$ during any continuous motion of $P$ for which
every angle $\angle prq$, for $r\in P\setminus\{p,q\}$, changes
by at most $\alpha/2$. This is clear because at the time $pq$ is
$\alpha$-stable we have
$\angle pr^+q + \angle pr^-q < \pi-\alpha$ for \textit{any} pair of points
$r^+$, $r^-$ lying on opposite sides of the line $\ell$ supporting $pq$, so,
if each of these angles change by at most $\alpha/2$ we still have
$\angle pr^+q + \angle pr^-q \le \pi$, which is easily seen to imply
that $pq$ remains an edge of $\DT$. (This argument also covers the cases when a point $r$ crosses $\ell$ from side to side: Since
each point, on either side of $\ell$, sees $pq$ at an angle of $\leq \pi-\alpha$, it follows that no point can cross
$pq$ itself -- the angle has to increase from $\pi-\alpha$ to $\pi$. Any other crossing of $\ell$ by a point $r$ causes $\angle prq$ to
decrease to $0$, and even if it increases to $\alpha/2$ on the other side of $\ell$, $pq$ is still an edge of $\DT$, as is easily checked.)
Hence, as long as the ``small angle change'' condition 
holds, stable Delaunay edges remain a ``long time'' in the
triangulation.

Informally speaking, the non-stable edges $pq$ of $\DT$ are those for $p$ and $q$
are almost cocircular with their two common Delaunay neighbors
$r^+$, $r^-$, and hence are more likely to get flipped ``soon". 

\paragraph{Overview of our results.}
Let $\alpha>0$ be a fixed parameter.
In this paper we show how to maintain a subgraph of the full Delaunay 
triangulation $\DT$, which we call a {\em $(c\alpha,\alpha)$-stable Delaunay graph} ($\SDG$ in short), so that (i) every edge of $\SDG$ is $\alpha$-stable, 
and (ii) every $c\alpha$-stable edge of $\DT$ belongs to $\SDG$, where $c>1$ is some (small) absolute constant.
Note that $\SDG$ is not uniquely defined, even when $c$ is fixed.

In Section \ref{sec:Prelim}, we introduce several useful definitions and show that the number of discrete changes in the $\SDG$s
that we consider
is nearly quadratic. 
What this analysis also implies is that if the true bound for
kinetic changes in a Delaunay triangulation is really close to cubic, then
the overhelming majority of these changes involve edges which never become stable and just flicker in and out of the diagram by cocircularity with their two Delaunay neighbors.

In Sections \ref{Sec:polygProp} and \ref{Sec:ReduceS} we show that $\SDG$ can be
maintained by a kinetic data structure that uses only near-linear
storage (in the terminology of \cite{bgh-dsmd-99}, it is {\em compact}), 
encounters only a nearly quadratic number of critical events 
(it is {\em efficient}), and processes each event in polylogarithmic 
time (it is {\em responsive}). For the second data structure, described in Section \ref{Sec:ReduceS}, can be slightly modified to ensure that each point appears at any time 
in only polylogarithmically many places in the structure (it then becomes 
{\em local}). 

The scheme described in Section \ref{Sec:polygProp} is based on a useful and interesting ``equivalence" connection
between the (Euclidean) $\SDG$ and a suitably defined ``stable" version of the Delaunay triangulation of $P$ under the ``polygonal" norm whose unit ball $Q$ is a regular
$k$-gon, for $k=\Theta(1/\alpha)$. As noted above, Voronoi and Delaunay structures under polygonal norms are particularly
favorable for kinetic maintenance because of Chew's
result~\cite{Chew}, showing that the number of topological changes in
$\DT^\poly(P)$ is $O^*(n^2k^4)$; here
the $O^*(\cdot)$ notation hides a factor that depends
sub-polynomially on both $n$ and $k$. In other words, the scheme simply maintains the ``polygonal" diagram $\DT^\poly(P)$ in its entirety, and selects from it those edges that are also stable edges of the Euclidean diagram $\DT$.

The major disadvantage of the solution in Section \ref{Sec:polygProp} is
the rather high (proportional to $\Theta(1/\alpha^4)$) dependence on 
$1/\alpha(\approx k)$ of the bound on the number of
topological changes. We do not know whether the upper bound $O^*(n^2k^4)$ on the number of topological changes in
$\DT^\poly(P)$ is nearly tight (in its dependence on $k$). 
To remedy this, we present in Section \ref{Sec:ReduceS} an 
alternative scheme for maintaining stable
(Euclidean) Delaunay edges. The scheme is reminiscent of the kinetic
schemes used in \cite{KineticNeighbors} for maintaining closest pairs
and nearest neighbors. It extracts $O^*(n)$ pairs of points of $P$ 
that are candidates for being stable Delaunay edges. Each point
$p\in P$ then runs $O(1/\alpha)$ \textit{kinetic and dynamic tournaments} involving
the other points in its candidate pairs. Roughly, these tournaments
correspond to shooting $O(1/\alpha)$ rays from $P$ in fixed directions and finding along each ray
the nearest point equally distant from $p$ and from some other
candidate point $q$. We show that $pq$ is a stable Delaunay edge if and
only if $q$ wins many (at least some constant number of) consecutive tournaments of $p$ (or $p$ wins many consecutive tournaments of $q$). A careful analysis shows that
the number of events that this scheme processes (and the overall 
processing time) is only $O^*(n^2/\alpha^2)$.

Section \ref{Sec:SDGProperties} establishes several useful properties of stable Delaunay graphs. In particular, we show that at
any given time the stable subgraph contains at least $\left[1-\frac{3}{2(\pi/\alpha-2)}\right]n$ Delaunay
edges, i.e., at least about one third of the maximum possible number of edges. In addition, we 
show that at any moment the $\SDG$ contains the closest pair, the so-called 
\textit{$\beta$-skeleton} of $P$, for $\beta=1+\Omega(\alpha^2)$ (see \cite{Crusts,Skeletons}), and the \textit{crust} of a sufficiently densely sampled point set along a smooth curve (see \cite{Amenta,Crusts}). 
We also extend the connection in Section \ref{Sec:polygProp} to arbitrary distance functions $d_Q$ whose unit ball $Q$ is sufficiently close (in the Hausdorff sense) to the Euclidean one (i.e., the unit disk).
%Completing $\DT^\poly(P)$ into a triangulation
%of the entire hull is an interesting challenge that we are currently exploring in a companion paper \cite{KRS}. 

\section{Preliminaries}\label{sec:Prelim}
\seclab{sdg}\seclab{ddg}
\paragraph{Stable edges in Voronoi diagrams.}
%Let $P$ be a set of $n$ points in $\reals^2$.
%Let $\VD(P)$ and $\DT(P)$ denote the Voronoi diagram and Delaunay
%triangulation of $P$, respectively. For a point $p \in P$, let
%$\Vor(p)$ denote the Voronoi cell of $p$. If $\Vor(p)$ and $\Vor(q)$
%share an edge, we denote it by $e_{pq}$, and call it the \textit{Voronoi edge} of $p$ and $q$. 
%
Let $\{u_0, \ldots, u_{k-1}\} \subset \sphere^1$ be a set of 
$k=\Theta(1/\alpha)$ equally spaced directions in $\reals^2$. For 
concreteness take $u_i = (\cos
(2\pi i/k), -\sin (2\pi i/k))$, $0\le i < k$ (so our directions $u_i$ go clockwise as $i$ increases).\footnote{%
The index arithmetic is modulo $k$, i.e., $u_i=u_{i+k}$.} 
For a point $p\in P$ and a unit vector
$u$ let $u[p]$ denote the ray $\{p+\lambda u\mid \lambda \geq 0\}$ that
emanates from $p$ in direction $u$.  For a pair of points $p,q \in P$ 
let $\bisect_{pq}$ denote the perpendicular bisector of $p$ and $q$.
If $\bisect_{pq}$ intersects $u_i[p]$, then the expression
\begin{equation}\label{Eq:DirectDist}
\distfn_i[p,q]=\frac{\|q-p\|^2}{2\inprod{q-p}{u_i}}
\end{equation}
is the distance
between $p$ and the intersection point of $\bisect_{pq}$ with
$u_i[p]$.
If $\bisect_{pq}$ does not
intersect $u_i[p]$ we define $\distfn_i[p,q] = \infty$.
 The point $q$ minimizes $\distfn_i[p,q']$, among all points
$q'$ for which $\bisect_{pq'}$ intersects $u_i[p]$, if and only if the
intersection between $\bisect_{pq}$ and $u_i[p]$ lies on the Voronoi edge $e_{pq}$. We call $q$ the {\em neighbor of $p$ in direction $u_i$}, 
and denote it by $\Nbrs_i(p)$; see Figure \ref{Fig:StableVoronoi}.

The {\em (angular) extent} of a Voronoi edge $e_{pq}$ of two points 
$p,q\in P$ is
the angle at which it is seen from either $p$ or $q$ (these two
angles are equal).  For a given angle $\alpha \le \pi$, 
$e_{pq}$  is  called {\em $\alpha$-long} (resp., {\em
$\alpha$-short}) if the extent of $e_{pq}$ is at least
 (resp., smaller than) $\alpha$. We also say that
$pq \in \DT(P)$ is {\em $\alpha$-long} (resp., {\em
$\alpha$-short}) if $e_{pq}$ is {\em $\alpha$-long} (resp., {\em
$\alpha$-short}). As noted in the introduction, these notions can also be defined (equivalently) in terms of the angles in the Delaunay triangulation: A Delaunay edge $pq$, which is not a hull edge, is $\alpha$-long if and only if $\angle pr^+q+\angle pr^-q\leq \pi-\alpha$,
where $\triangle pr^+q$ and $\triangle pr^-q$ are the two Delaunay triangles incident to $pq$. See Figure \ref{Fig:LongDelaunay}; hull edges are handled similarly, as discussed in the introduction.

Given parameters $\alpha'>\alpha>0$, we seek to construct (and
maintain under motion) an \emph{$(\alpha',\alpha)$-stable Delaunay
graph} (or \emph{stable Delaunay graph}, for brevity, which we further abbreviate as $\SDG$) of $P$, which
is any subgraph $\G$ of $\DT(P)$ with the following properties:
\begin{itemize}
\item[(S1)]
  Every $\alpha'$-long edge of $\DT(P)$ is an edge of
  $\G$.
\item[(S2)]
  Every edge of $\G$ is an $\alpha$-long edge of $\DT(P)$.
\end{itemize}
An $(\alpha',\alpha)$-stable Delaunay graph need not be
unique. In what follows, $\alpha'$ will always be some fixed (and reasonably small) multiple of $\alpha$.

\paragraph{Kinetic tournaments.} 
Kinetic tournaments were first studied by
Basch \textit{et al.}~\cite{bgh-dsmd-99}, for kinetically maintaining the lowest point in a set $P$ of $n$ points moving on some vertical line, say the $y$-axis, so that their trajectories are algebraic of bounded degree, as above. 
%We also want the tournament to be dynamic, so that we can insert and delete points into/from it. 
These tournaments are a key ingredient in the data structures that we will develop for maintaining stable Delaunay graphs. Such a tournament is 
represented and maintained using the following variant of a heap.
Let $T$ be a minimum-height balanced binary tree, with the points stored
at its leaves (in an arbitrary order). For an internal node $v\in T$,
let $P_v$ denote the set of points stored in the subtree rooted at $v$. At any
specific time $t$, each internal node $v$ stores the lowest point
among the points in $P_v$ at time $t$, which is called the {\em winner\/} at $v$.
The winner at the root is the desired overall lowest point of $P$.

To maintain $T$ we associate a certificate with each internal node $v$, which
asserts which of the two winners, at the left child and at the
right child of $v$, is the winner at $v$. This certificate remains
valid as long as (i) the winners at the children of $v$ do not change,
and (ii) the order along the $y$-axis between these two
``sub-winners'' does not change. The actual certificate caters
only to the second
condition; the first will be taken care of recursively.
Each certificate has an associated failure time, which is the next time
when these two winners switch their order along the $y$-axis.
We store all certificates in another heap, using the failure times
as keys.\footnote{Any ``standard'' heap that supports
  {\bf insert}, {\bf delete}, and {\bf deletemin} in $O(\log n)$
  time is good for our purpose.}
This heap of certificates is called the {\em event queue}.

Processing an event is simple. When the two sub-winners $p,q$ at some node $v$ change their order, we compute the new failure time of the certificate at $v$ (the first future time when $p$ and $q$ meet again), update the event queue accordingly, and propagate the new winner, say $p$, up the tree, revising the certificates at the ancestors of $v$, if needed.

If we assume that the  trajectories of each pair of points intersect at most $r$
times then
the overall number of changes of winners, and
therefore also the overall number of events, is at most
$\sum_v |P(v)| \beta_r(|P(v)|)= O(n \beta_r(n) \log n)$. Here
$\beta_r(n)=\lambda_r(n)/n$, and $\lambda_r(n)$ is the maximum length of a Davenport-Schinzel sequence
of order $r$ on $n$ symbols; see \cite{SA95}.

This is larger by a logarithmic factor than the maximum possible
number of times the lowest point along the $y$-axis can indeed change,
since this latter number is bounded by the complexity of the lower
envelope of the trajectories of the points in $P$ (which, as noted above, records the changes in the winner at the root of $T$).

Agarwal {\em et al.}~\cite{KineticNeighbors} show how to make
such a tournament also {\em dynamic\/}, supporting insertions and deletions of points. They replace the balanced binary tree $T$ by
 a {\em weight-balanced $(BB(\alpha))$ tree} \cite{NR73}
(and see also \cite{Mehlhorn}).  This allows us to insert a new point
anywhere we wish in $T$, and to delete any point from $T$,
in $O(\log n)$ time. Each such insertion or deletion may
change $O(\log n)$ certificates, along the corresponding search path,
and therefore updating the event queue takes $O(\log^2 n)$ time, including the time for the
structural updates of (rotations in) $T$; here $n$ denotes the
actual number of points in $T$, at the step where we perform
the insertion or deletion. The analysis of \cite{KineticNeighbors} is summarized in Theorem \ref{thm:kinetic-tour}.

%This can be achieved by maintaining the tournament as a heap, stored as a 
%a {\em weight-balanced $(BB(\alpha))$ tree} \cite{NR73}
%(and \cite{Mehlhorn}).  Omitting all further details, which can be found in \cite{KineticNeighbors} (and in the full version \cite{StableFull}), %we state:

\begin{theorem}[\textbf{Agarwal \textit{et al.}}~\cite{KineticNeighbors}] \label{thm:kinetic-tour}
A sequence of $m$ insertions and deletions into a kinetic tournament,
whose maximum size at any time is $n$ (assuming $m\ge n$), when
implemented as a weight-balanced tree in the manner described above,
generates at most $O(m\beta_{r+2}(n)\log n)$ events, with a total processing cost
of $O(m\beta_{r+2}(n)\log^2 n)$. Here $r$ is the maximum number of times a pair of points intersect, and $\beta_{r+2}(n)=\lambda_{r+2}(n)/n$.
Processing an update or a tournament event takes
$O(\log ^2 n)$ worst-case time. A dynamic kinetic tournament on $n$
elements can be constructed in $O(n)$ time.
\end{theorem}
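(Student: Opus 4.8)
The plan is to lift the static kinetic-tournament analysis recalled just above to the dynamic setting, following \cite{KineticNeighbors}. Fix the sequence $\sigma$ of $m$ updates and let $\mathcal{T}$ be the weight-balanced $(BB(\alpha))$ tree maintained throughout (cf.\ \cite{NR73}). Every node $v$ that ever appears in $\mathcal{T}$ has a \emph{lifespan}, a contiguous portion of $\sigma$; during its lifespan $v$ stores a time-varying subset $P_v$ of the points, and we let $n_v$ denote the number of maximal time-intervals of membership of points in $P_v$ over $v$'s lifespan (a point that enters, leaves, and re-enters $P_v$, whether through an update or through a rotation, contributes several such intervals). The quantities to control are (a) the total number of events (certificate failures) and (b) the total processing time; the construction-time claim will be immediate.

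First I would bound the number of winner changes at a single node $v$ over its lifespan. At any instant the winner at $v$ is the lowest point of $P_v$, so the winner, as a function of time, traces the lower envelope of the $n_v$ trajectory-arcs, each arc being the restriction of one point's bounded-degree algebraic trajectory to one membership interval. Two such arcs meet at most $r$ times, and the lower envelope of $n_v$ partially defined arcs of this type has $O(\lambda_{r+2}(n_v))$ breakpoints --- the passage from $r$ to $r+2$ being the usual surcharge for partial definedness (see \cite{SA95}). Hence the winner at $v$ changes $O(\lambda_{r+2}(n_v))$ times; each such change, and likewise each of the $O(\log n)$ insertions, deletions, and rotations that touch $v$, forces only $O(1)$ certificate recomputations local to $v$ (the certificate at $v$ itself and the certificate at its parent, the latter being exactly how the change is propagated one level up). Summing over all nodes, the number of events is $O\bigl(\sum_v \lambda_{r+2}(n_v)\bigr)+O(m\log n)$, and the second term is dominated by the first since $\beta_{r+2}(n)\ge 1$.

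The crux is to bound $\sum_v n_v$. Here I would invoke the defining property of weight-balanced trees: along $\sigma$ each inserted point enters the subtrees of only $O(\log n)$ nodes (its ancestors at the time of insertion), and the standard amortized analysis of $BB(\alpha)$ trees shows that the total weight of all subtrees that undergo a rebalancing rotation is $O(m\log n)$; charging to this weight the new membership intervals created whenever a rotation moves a subtree from one place to another gives $\sum_v n_v=O(m\log n)$. Now use the (weak) superadditivity $\lambda_{r+2}(a)+\lambda_{r+2}(b)\le\lambda_{r+2}(a+b)$, which yields $\sum_v\lambda_{r+2}(n_v)\le\lambda_{r+2}\bigl(\sum_v n_v\bigr)=\lambda_{r+2}(O(m\log n))$, together with the fact that $\beta_s$ grows so slowly that $\beta_s(N)=O(\beta_s(n))$ whenever $N$ is polynomially bounded in $n$ (which $m\log n$ is in all our applications, where $m=O(n^2)$). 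This gives the bound $O(m\,\beta_{r+2}(n)\log n)$ on the number of events. I expect this weight-balanced accounting $\sum_v n_v=O(m\log n)$ to be the most delicate point of the proof.

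Finally, for the running time: processing one event (a failed certificate at a node $v$) flips the winner at $v$, recomputes the certificate at $v$ and the certificate at its parent --- that is, $O(1)$ certificate recomputations, each an $O(\log n)$-time heap operation on the event queue --- and any further effect propagates upward as subsequent winner-change events, which are already counted. Hence the total processing cost is $O(\log n)$ times the number of events, namely $O(m\,\beta_{r+2}(n)\log^2 n)$. The worst-case cost of a single popped event, including a cascade all the way to the root, is a root-path of length $O(\log n)$ with $O(\log n)$ heap work per node, i.e.\ $O(\log^2 n)$; an update is handled by an $O(\log n)$ search together with $O(\log n)$ rotations confined to that search path, inducing $O(\log n)$ winner and certificate changes, each costing $O(\log n)$ in the heap --- again $O(\log^2 n)$. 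Building the structure on $n$ points amounts to constructing $\mathcal{T}$ in $O(n)$ time, computing all winners bottom-up in $O(n)$ time, computing the $n-1$ certificates and their failure times in $O(n)$ time, and heapifying them into the event queue in $O(n)$ time, for $O(n)$ in total.
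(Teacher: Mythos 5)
The paper does not prove this theorem; it cites it from Agarwal, Kaplan and Sharir~\cite{KineticNeighbors}, so your proposal is a reconstruction rather than a comparison target. As a reconstruction it is structurally sound, and the key amortization step you flagged as ``delicate'' --- $\sum_v n_v = O(m\log n)$ via the $BB(\alpha)$ charging argument (each update charged $O(\log n)$ for ancestors, rotated subtree weight amortizing to $O(m\log n)$) --- is indeed the right one. The processing-time accounting (worst-case $O(\log^2n)$ for a root-to-ancestor cascade or an update, amortized $O(\log n)$ per event, $O(n)$ construction) also checks out.

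The one genuine soft spot is the final conversion $\beta_{r+2}(m\log n)=O(\beta_{r+2}(n))$. You yourself note this only holds when $m$ is polynomial in $n$; the theorem's hypothesis is merely $m\ge n$, so as stated your argument does not prove the theorem in full generality. The fix is to replace the crude per-node bound $O(\lambda_{r+2}(n_v))$ by the sharper $O(n_v\,\beta_{r+2}(n))$, which exploits the fact that at most $n$ arcs are simultaneously alive at any node. Concretely: partition $v$'s lifespan into $O(n_v/n)$ epochs, each containing at most $n$ arc births; within one epoch the arcs ever present number at most $2n$ (those alive at its start plus those born during it), so the envelope restricted to the epoch is a lower envelope of $\le 2n$ partially defined arcs, with at most $\lambda_{r+2}(2n)=O(\lambda_{r+2}(n))$ breakpoints. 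Summing over the $O(n_v/n)$ epochs gives $O(n_v\,\beta_{r+2}(n))$ winner changes at $v$, and then $\sum_v n_v=O(m\log n)$ immediately gives $O(m\,\beta_{r+2}(n)\log n)$ events with no assumption on $m$. This is almost certainly what the cited source does, and it is the cleaner route; your version proves the theorem only for the parameter ranges actually used in this paper ($m=O(n^2)$).
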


\noindent {\it Remarks:} (1) Theorem \ref{thm:kinetic-tour} subsumes the static case too, by inserting all the elements ``at the beginning of time", and then tracing the kinetic changes. \\
\noindent (ii) Note that the amortized cost of an update or of processing a tournament event is only $O(\log n)$ (as opposed to the $O(\log^2n)$ worst-case cost).

%\paragraph{Space-inefficient maintenance of stable Delaunay graphs.}
\paragraph{Maintenance of an SDG.}
Let $P=\{p_1,\ldots,p_n\}$ be a set of points 
moving in $\reals^2$. Let $p_i(t)=(x_i(t),y_i(t))$ denote the position
of $p_i$ at time $t$. We call the motion of $P$
\emph{algebraic} if each $x_i(t),y_i(t)$ is a
polynomial function of $t$, and the \emph{degree} of motion of $P$ is the maximum
degree of these polynomials.
Throughout this paper we assume that the motion of $P$ is
algebraic and that its degree is bounded by a constant.
In this subsection we present a simple technique for maintaining a
$(2\alpha,\alpha)$-stable Delaunay graph. Unfortunately this
algorithm requires quadratic space.
It is based on the following easy observation (see Figure \ref{Fig:StableVoronoi}),
where $k$ is an integer, and the unit vectors (directions) $u_0,\ldots,u_{k-1}$ are as defined earlier.

\begin{lemma} \label{lem:alpha}
Let $\alpha = 2\pi/k$.
(i) If the extent of $e_{pq}$ is larger than $2\alpha$ then there are two consecutive directions 
$u_i$, $u_{i+1}$, such that $q$ is the neighbor of $p$ in directions $u_i$ and $u_{i+1}$. \\
(ii) If there are two consecutive directions $u_i,u_{i+1}$, such that $q$ is the neighbor of $p$ in both directions $u_i$ and $u_{i+1}$, then
the extent of $e_{pq}$ is at least $\alpha$.
\end{lemma}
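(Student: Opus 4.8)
The plan is to work directly with the geometry of the Voronoi edge $e_{pq}$ and the bundle of rays $u_0[p],\dots,u_{k-1}[p]$ emanating from $p$. Recall that $q=\Nbrs_i(p)$ exactly when the intersection point of the bisector $\bisect_{pq}$ with the ray $u_i[p]$ lies on the Voronoi edge $e_{pq}$; equivalently, the ray $u_i[p]$ ``hits'' the portion of $\bisect_{pq}$ that bounds $\Vor(p)$. So both parts of the lemma reduce to a statement about which of the $k$ equally spaced rays from $p$ stab the segment $e_{pq}\subseteq\bisect_{pq}$. The key quantity is the angular interval $I_{pq}\subseteq\sphere^1$ of directions $u$ (as seen from $p$) for which $u[p]$ meets $e_{pq}$: since $p$ sees $e_{pq}$ at angle equal to its extent $\vartheta$, this interval $I_{pq}$ is a connected arc of length exactly $\vartheta$ (the two endpoints of $e_{pq}$ determine the two rays bounding $I_{pq}$). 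Thus $q=\Nbrs_i(p)$ iff $u_i\in I_{pq}$.

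First I would prove part~(i). Suppose the extent of $e_{pq}$ is $\vartheta > 2\alpha = 4\pi/k$. Then $I_{pq}$ is an arc of length greater than $4\pi/k$ on $\sphere^1$. The $k$ directions $u_0,\dots,u_{k-1}$ are spaced at angular gaps of exactly $2\pi/k = \alpha$. Any arc of length strictly greater than $2\alpha$ must contain at least two of the $u_i$, and in fact must contain two \emph{consecutive} ones: indeed, an arc $I$ of length $> 2\alpha$ contains some closed sub-arc of length $2\alpha$, which always contains at least two consecutive grid directions $u_i,u_{i+1}$ (a closed interval of length equal to the grid spacing times $2$ contains $\geq 3$ grid points if aligned, and $\geq 2$ consecutive ones otherwise). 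Hence there exist consecutive $u_i,u_{i+1}\in I_{pq}$, i.e. $q=\Nbrs_i(p)=\Nbrs_{i+1}(p)$, as claimed. (One should double-check the boundary/alignment cases, but with strict inequality $\vartheta>2\alpha$ there is slack to spare.)

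Next, part~(ii): suppose $u_i,u_{i+1}$ are both in $I_{pq}$, i.e. both rays $u_i[p],u_{i+1}[p]$ stab $e_{pq}$. Since $I_{pq}$ is a connected arc and it contains the two endpoints $u_i,u_{i+1}$ of an arc of length exactly $\alpha=2\pi/k$ (the angular gap between consecutive grid directions), the whole sub-arc from $u_i$ to $u_{i+1}$ lies in $I_{pq}$, so $\vartheta = |I_{pq}| \geq \alpha$. That is, the extent of $e_{pq}$ is at least $\alpha$. The only subtlety is to be sure that ``$I_{pq}$ is connected'' is used correctly — this holds because $e_{pq}$ is a single segment (or a ray, in the unbounded hull case) and the set of directions from $p$ stabbing a convex set is an arc; the unbounded case should be handled by the same convention as in the introduction, treating the relevant ray as a degenerate segment reaching to infinity, which only makes $I_{pq}$ larger.

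I expect the only real care needed — and hence the ``main obstacle,'' though it is minor — is the bookkeeping around the boundary cases in part~(i): precisely quantifying that an open arc of length $>2\alpha$ always captures two \emph{consecutive} grid directions, rather than merely two grid directions that straddle a gap. This is where the strict inequality ``larger than $2\alpha$'' in the hypothesis is essential, and I would phrase the pigeonhole argument so that it visibly uses that slack. Everything else is a direct unwinding of the definition of $\Nbrs_i(p)$ together with the elementary fact that the directions from $p$ meeting a fixed segment form an arc whose length equals the angle subtended, i.e. the extent of $e_{pq}$.
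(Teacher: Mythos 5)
Your proof is correct, and it is essentially the argument the authors have in mind: the paper states Lemma~\ref{lem:alpha} as an ``easy observation'' without a written proof, deferring to the picture in Figure~\ref{Fig:StableVoronoi}. Your reduction is exactly the right one --- observe that $\Nbrs_i(p)=q$ precisely when $u_i$ lies in the arc $I_{pq}\subset\sphere^1$ of directions from $p$ that stab $e_{pq}$, an arc whose length equals the extent of $e_{pq}$, and then finish with an elementary pigeonhole argument over the $k$ equally spaced directions. The key point you correctly identify and use (though you could state it a bit more explicitly) is that the grid directions falling inside a connected arc form a contiguous set of indices, so ``at least two grid points inside'' automatically yields ``two consecutive grid points inside''; this is what makes the strict hypothesis $\vartheta>2\alpha$ suffice. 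In part~(ii) one should also note that $I_{pq}$ has length at most $\pi$, so the arc it contains between $u_i$ and $u_{i+1}$ is necessarily the short one of length $\alpha$ rather than the complementary arc of length $2\pi-\alpha$; you implicitly use this when writing $|I_{pq}|\ge\alpha$, and it would be worth spelling out. These are cosmetic points; the argument itself is sound and complete.
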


\begin{figure}[htbp]
\begin{center}
\input{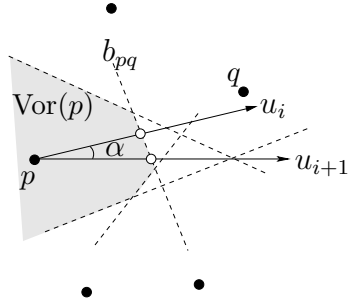}
\caption{\small \sf $q$ is the neighbor of $p$ in the directions $u_i$ and $u_{i+1}$, so the Voronoi edge $e_{pq}$ is $\alpha$-long.}
\label{Fig:StableVoronoi}
\end{center}
\end{figure}

The algorithm maintains Delaunay edges $pq$ such that there are two consecutive directions
$u_i$ and $u_{i+1}$ along which $q$ is the neighbor of $p$.
For each point $p$ and direction $u_i$ we get a set of at most $n-1$ piecewise
continuous functions of time, $\distfn_i[p,q]$, one for each point $q \not=
p$, as defined in (\ref{Eq:DirectDist}). (Recall that $\distfn_i[p,q]=\infty$ when $u_i[p]$
does not intersect $b_{pq}$.) By assumption on the motion of $P$,
for each $p$ and $q$, the domain in which $\distfn_i[p,q](t)$ is 
defined consists of a constant number of intervals.

For each point $p$, and ray $u_i[p]$, 
consider each function  $\distfn_i[p,q]$ as the
trajectory of a point moving
along the ray and corresponding to $q$. The algorithm maintains
these points in a dynamic and kinetic tournament $K_i(p)$ 
(see Theorem \ref{thm:kinetic-tour}) that keeps track of the 
minimum of $\{\distfn_i[p,q](t)\}_{q\neq p}$ over time.
For each pair of points $p$ and $q$ such that $q$
 wins in two consecutive tournaments, $K_i(p)$ and $K_{i+1}(p)$, of $p$,
 it keeps the edge $pq$ in
the stable Delaunay graph. It is trivial to update this graph as a by-product of the updates of the various tournaments.
The analysis of this
data structure is straightforward using Theorem \ref{thm:kinetic-tour},
and yields the following result.
\begin{theorem} \label{thm:ddj}
Let $P$ be a set of $n$ moving points in $\reals^2$ under algebraic
motion of bounded degree, let $k$ be an integer, and let $\alpha = 2\pi/k$.
A $(2\alpha,\alpha)$-stable Delaunay graph
of $P$ can be maintained using
$O(kn^2)$ storage and processing
 $O(kn^2\beta_{r+2}(n)\log n)$ events, for a total cost
of $O(kn^2\beta_{r+2}(n)\log^2 n)$ time.
The processing of each event takes
$O(\log ^2 n)$ worst-case time.
Here $r$ is a constant that depends on the degree of motion of $P$.
%an upper bound on the number of times
%any fixed pair of functions $\distfn_i[p,q_1]$ and $\distfn_i[p,q_2]$ become equal.
\end{theorem}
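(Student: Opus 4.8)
The plan is to establish three things about the data structure described above: that the graph it maintains is a $(2\alpha,\alpha)$-stable Delaunay graph, that its storage and total event/processing bounds are as claimed, and that each event is handled in $O(\log^2 n)$ worst-case time. Correctness follows almost immediately from Lemma~\ref{lem:alpha}. For a fixed point $p$ and direction $u_i$, the winner of the tournament $K_i(p)$ at any time $t$ is, by construction, the point $q$ minimizing $\distfn_i[p,q'](t)$ over all $q'\ne p$ with $\bisect_{pq'}$ meeting $u_i[p]$; by the definition of $\Nbrs_i(\cdot)$ this $q$ is exactly $\Nbrs_i(p)$, and the intersection of $\bisect_{pq}$ with $u_i[p]$ lies on $e_{pq}$, so in particular $pq\in\DT(P)$. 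Hence the edge set $\G$ recorded by the algorithm is precisely $\{\,pq\in\DT(P) : q=\Nbrs_i(p)=\Nbrs_{i+1}(p)\text{ for some }i\,\}$. Lemma~\ref{lem:alpha}(ii) says every such edge is $\alpha$-long, which is property (S2), and Lemma~\ref{lem:alpha}(i) says every $2\alpha$-long Delaunay edge is of this form, which is property (S1) with $\alpha'=2\alpha$; thus $\G$ is a $(2\alpha,\alpha)$-stable Delaunay graph.

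Next I would bound the cost of a single tournament $K_i(p)$. Under algebraic motion of bounded degree each $\distfn_i[p,q](t)=\|q-p\|^2/\bigl(2\inprod{q-p}{u_i}\bigr)$ is a rational function of $t$ of bounded degree, defined on the set where $\inprod{q-p}{u_i}>0$, which is a union of $O(1)$ intervals. I would therefore model $K_i(p)$ as a \emph{dynamic} kinetic tournament (Theorem~\ref{thm:kinetic-tour}), inserting and deleting each $q$ at the $O(1)$ endpoints of these intervals; over the whole motion this gives $m=O(n)$ updates with maximum size at most $n-1$. Two trajectories $\distfn_i[p,q]$, $\distfn_i[p,q']$ agree precisely when $\|q-p\|^2\inprod{q'-p}{u_i}=\|q'-p\|^2\inprod{q-p}{u_i}$, a polynomial identity in $t$ of bounded degree, so any two of them cross at most $r$ times for a constant $r$ depending only on the degree of motion. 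By Theorem~\ref{thm:kinetic-tour}, $K_i(p)$ then uses $O(n)$ storage, generates $O(n\beta_{r+2}(n)\log n)$ events with total processing cost $O(n\beta_{r+2}(n)\log^2 n)$, and handles each event or update in $O(\log^2 n)$ worst-case time.

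Summing over the $kn$ tournaments $K_i(p)$, $p\in P$, $0\le i<k$, would give total storage $O(kn^2)$, at most $O(kn^2\beta_{r+2}(n)\log n)$ events, and total processing cost $O(kn^2\beta_{r+2}(n)\log^2 n)$; the $O(kn^2)$ scheduled insertions and deletions (a constant number for each triple $(p,q,i)$) contribute a further $O(kn^2\log^2 n)$ time, which is absorbed into the previous bound since $\beta_{r+2}(n)=\Omega(1)$. Maintaining $\G$ itself costs nothing extra: each tournament event or update changes the winner of a single $K_i(p)$, after which we inspect $K_{i-1}(p)$ and $K_{i+1}(p)$ and add or remove at most a constant number of edges of $\G$ in $O(1)$ time, so the worst-case per-event cost stays $O(\log^2 n)$. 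The only mildly delicate steps are the reduction to a \emph{dynamic} tournament, i.e.\ correctly scheduling the intervals on which $\distfn_i[p,q]=\infty$, together with the routine verification that the pairwise crossing number $r$ is bounded by a constant; everything else is a direct application of Lemma~\ref{lem:alpha} and Theorem~\ref{thm:kinetic-tour}.
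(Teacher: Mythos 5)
Your proof is correct and takes the approach the paper intends: the paper itself omits the details, remarking only that the analysis is ``straightforward using Theorem~\ref{thm:kinetic-tour}.'' You correctly supply the missing pieces --- the reduction of correctness to the two parts of Lemma~\ref{lem:alpha}, the treatment of the finitely many definition intervals of each $\distfn_i[p,q]$ as $O(1)$ insertions and deletions into the dynamic kinetic tournament $K_i(p)$, and the summation of Theorem~\ref{thm:kinetic-tour}'s bounds over the $kn$ tournaments.
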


Later on, in Section \ref{Sec:ReduceS}, we will revise this approach and reduce the storage to nearly linear, by letting only 
a small number of points to participate in each tournament. The filtering procedure for the points makes the improved solution 
somewhat more involved.
%\section{A Solution Based on Polygonal Voronoi Diagrams}
\section{An SDG Based on Polygonal Voronoi Diagrams}
\label{sec:ViaPolygonal}
\label{Sec:polygProp}
%In the introduction we mentioned the connection between stability in the Euclidean Delaunay triangulation and stability in the Delaunay %triangulation under an arbitrary norm sufficiently close to Euclidean. Here, though, we only focus on polygonal norms, whose unit ball is a regular %$k$-gon. This is done (a) to simplify the presentation, and (b) because of the special property of polygonal norms, that when the input points move, %their Delaunay graph, induced by such a norm, experiences only a nearly quadratic number of topological changes \cite{Chew}. In Section %\ref{Sec:SDGProperties}, we will discuss later the generalization of this analysis to arbitrary norms.

Let $Q=Q_k$ be a regular $k$-gon
 for some even $k=2s$, circumscribed by the unit disk, and let $\alpha = \pi/s$ (this is the angle at which the center of $Q$ sees an edge). 
Let $\VD^\poly(P)$ and $\DT^\poly(P)$ denote the $Q$-Voronoi diagram and
the dual $Q$-Delaunay triangulation of $P$, respectively.
In this section we show that the set of
edges of $\VD^\poly(P)$ with sufficiently many \textit{breakpoints} (see below for details) form a
$(\beta,\beta')$-stable (Euclidean) Delaunay graph for appropriate multiples
$\beta,\beta'$ of $\alpha$.
Thus, by kinetically maintaining $\VD^\poly(P)$ (in its entirety),
we shall get ``for free'' a KDS for keeping track of a stable portion 
of the Euclidean DT.

\subsection{Properties of $\mathbf{VD^\poly(P)}$}
%\subsection{Properties of $\VD^\poly(P)$}
\label{Sec:PolygonalBackground}
We first review the properties of the (stationary)
$\VD^\poly(P)$ and $\DT^\poly(P)$. Then we consider the
kinetic version of these diagrams, as the points of $P$ move, and
review Chew's proof~\cite{Chew} that the number of topological
changes in these diagrams, over time, is only nearly quadratic
in $n$. Finally, we present a straightforward kinetic data structure 
for maintaining $\DT^\poly(P)$ under motion that uses linear storage, 
and that processes a nearly quadratic number of events, 
each in $O(\log n)$ time. 
Although later on we will take $Q$ to be a regular $k$-gon, the analysis in this subsection is more general, and we only assume here that $Q$ is an arbitrary convex $k$-gon, lying in general position with respect to $P$.
%can be extended to general norms (or convex distance functions) induced by convex polygons, but we stick to the case of a regular $k$-gon to simplify the presentation.

\paragraph{Stationary $Q$-diagrams.}
The {\em bisector} $b_{pq}^\poly$ between
two points $p$ and $q$, with respect to $d_Q(\cdot,\cdot)$, is the 
locus of all
placements of the center of any homothetic copy $Q'$ of $Q$ that
touches $p$ and $q$.
$Q'$ can be classified according to the pair of its edges, $e_1$ and $e_2$,
that touch $p$ and $q$, respectively. If we slide $Q'$ so that its
center moves along $b_{pq}^\poly$ (and its size expands or shrinks to
keep it touching $p$ and $q$), and the contact edges, $e_1$ and $e_2$,
remain fixed, the center traces a straight segment.
The bisector is a concatenation of $O(k)$ such segments. They
meet at {\em breakpoints}, which are placements of the center of a
copy $Q'$ that touches $p$ and $q$ and one of the contact points
is a vertex of $Q$; see Figure \ref{Fig:CornerContact}. We call such a 
placement a {\em corner contact} at the appropriate point. 
Note that a corner contact where some vertex $w$ of (a copy $Q'$ of) $Q$ touches $p$
has the property that the center of $Q'$ lies on the fixed ray
emanating from $p$ and parallel to the directed segment from $w$
to the center of $Q$.

\begin{figure}[htbp]
\begin{center}
\input{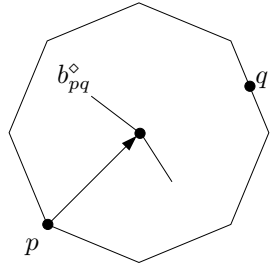}
\caption{\small \sf Each breakpoint on $\bisect_{pq}^\poly$ corresponds to a corner contact of $Q$ at one of the points $p,q$, so that $\partial Q$ also touches the other point.}\label{Fig:CornerContact}
\end{center}
\end{figure}

A useful property of bisectors and Delaunay edges, in the special case where $Q$ is a regular 
$k$-gon, which will be used in the next subsection, is that the breakpoints along a bisector 
$\bisect_{pq}^\poly$ alternate between corner contacts at $p$ and corner contacts at $q$. 
Indeed, assuming general position, each point $w\in\bd Q$ determines a unique
placement of $Q$ where it touches $p$ at $w$ and also touches $q$, as
is easily checked. A symmetric property holds when we interchange $p$
and $q$. Hence, as we slide the center of $Q$ along the bisector
$b^\poly_{pq}$, the points of contact of $\bd Q$ with $p$ and $q$ vary
continuously and monotonically along $\bd Q$. Consider two consecutive
corner contacts, $Q'$, $Q''$, of $Q$ at $p$ along $b^\poly_{pq}$, and
suppose to the contrary that the portion of $b^\poly_{pq}$ between
them is a straight segment, meaning that, within this portion, 
$\bd Q$ touches each of $p$, $q$ at a fixed edge. Since the center of
$Q$ moves along the angle bisector of the lines supporting these
edges (a property that is easily seen to hold for regular $k$-gons), it is easy to see that the distance between the two contact
points of $p$, at the beginning and the end of this sliding, and
the distance between the two contact points of $q$ (measured, say, 
on the boundary of the standard placement of $Q$) are equal. However,
this distance for $p$ is the length of a full edge of $\bd Q$, because
the motion starts and ends with $p$ touching a vertex, and therefore
the same holds for $q$, which is impossible (unless $q$ also starts
and ends at a vertex, which contradicts our general position
assumption).

Another well known property of $Q$-bisectors and Voronoi edges, for arbitrary convex polygons in general position with respect to $P$, is that
two bisectors $b^\poly_{pq_1}$, $b^\poly_{pq_2}$, can intersect at
most once (again, assuming general position), so every $Q$-Voronoi edge $e_{pq}^\poly$ is connected. %\natan{Important!} 
Equivalently, this
asserts that there exists at most one homothetic placement of $Q$ at
which it touches $p$, $q_1$, and $q_2$. Indeed, since homothetic
placements of $Q$ behave like pseudo-disks (see, e.g., \cite{KLPS}),
the boundaries of two distinct homothetic placements of $Q$ intersect
in at most two points, or, in degenerate position, in at most two
connected segments. Clearly, in the former case the boundaries 
cannot both contain $p$, $q_1$, and $q_2$, and this also holds in the
latter case because of our general position assumption.

Consider next an edge $pq$ of $\DT^\poly(P)$. Its dual Voronoi edge
$e_{pq}^\poly$ is a portion of the bisector $b_{pq}^\poly$, and consists of those
center placements along $b_{pq}^\poly$ for which the corresponding copy $Q'$
has an {\em empty interior} (i.e., its interior is disjoint from $P$).
Following the notation of Chew~\cite{Chew}, we call $pq$ a
{\em corner edge} if $e_{pq}^\poly$ contains a breakpoint
(i.e., a placement with a corner contact); otherwise it is a
{\em non-corner edge}, and is therefore a straight segment.

\paragraph{Kinetic $Q$-diagrams.}
Consider next what happens to $\VD^\poly(P)$ and $\DT^\poly(P)$
as the points of $P$ move continuously with time.
In this case $\VD^\poly(P)$ changes
continuously, but undergoes topological
changes at certain critical times, called \emph{events}. There are 
two kinds of events:

\smallskip
\noindent (i) \textsc{Flip Event.}
A Voronoi edge $e_{pq}^\poly$ shrinks to a point, disappears, and is
``flipped'' into a newly emerging Voronoi edge $e_{p'q'}^\poly$.

\smallskip
\noindent (ii) \textsc{Corner Event.}
An endpoint of some Voronoi edge $e_{pq}^\poly$ becomes a breakpoint (a
corner placement). Immediately after this time $e_{pq}^\poly$ either 
gains a new straight segment, or loses such a segment, that it had before the event.

\smallskip
Some comments are in order: 

\smallskip
\noindent(a) A flip event
occurs when the four points $p,q,p',q'$ become ``cocircular'':
there is an empty homothetic copy $Q'$ of $Q$ that touches all four points.

\smallskip
\noindent(b) Only non-corner edges can participate in a flip event, as
both the vanishing edge $e_{pq}^\poly$ and the newly emerging edge
$e_{p'q'}^\poly$ do not have breakpoints near the event.

\smallskip
\noindent(c) A flip event entails a discrete change in the
Delaunay triangulation, whereas a corner event does not.
Still, for algorithmic purposes, we will keep track of both kinds of events.

\smallskip
We first bound the number of corner events.

\begin{lemma} \label{corners}
Let $P$ be a set of $n$ points in $\reals^2$ under algebraic
motion of bounded degree, and let $Q$ be a convex $k$-gon. 
The number of corner events in $\DT^\poly(P)$ is $O(k^2n\lambda_r(n))$, 
where $r$ is a constant that depends on the degree of motion
of $P$.
\end{lemma}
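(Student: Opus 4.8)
The plan is to bound the number of corner events by fixing a point $p\in P$ and a vertex $w$ of $Q$, and counting the corner events in which $w$ (as a vertex of some moving empty copy $Q'$) touches $p$. As noted in the discussion preceding the lemma, such a corner placement of $Q'$ has its center on the fixed ray $\rho_{p,w}$ emanating from $p$ in the direction from $w$ to the center of $Q$. The key observation is that, once we fix $p$ and $w$, a corner event of this type occurs exactly when the center of the \emph{empty} copy $Q'$ having corner contact $w$ at $p$ reaches an endpoint of the corresponding Voronoi edge $e_{pq}^\poly$ for some $q$ — equivalently, when the point on $\rho_{p,w}$ that is $d_Q$-equidistant from $p$ and from its ``$(p,w)$-constrained nearest neighbor'' changes its identity. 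So it suffices to bound, for fixed $p$ and $w$, the number of times this constrained nearest neighbor changes over time.

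First I would set this up as a lower-envelope problem on the ray $\rho_{p,w}$, mirroring the analysis sketched earlier for the (Euclidean) $\SDG$ via directional neighbors $\Nbrs_i(p)$. For each other point $q\in P$, parametrize by the position along $\rho_{p,w}$ of the center of the (unique, by general position) copy $Q'$ of $Q$ that has corner contact $w$ at $p$ and also touches $q$; call this value $g_{p,w,q}(t)$, setting it to $+\infty$ when no such placement exists or when $q$ lies on the wrong side. Since the motion of $P$ is algebraic of bounded degree and $Q$ is a fixed polygon, each $g_{p,w,q}$ is, over the time axis, a piecewise-algebraic function of bounded degree with a bounded number of pieces; thus the graphs of any two of them cross $O(1)$ times. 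The corner events for the pair $(p,w)$ correspond to breakpoints of the lower envelope of $\{g_{p,w,q}(t)\}_{q\neq p}$ in the $(t,\text{position})$ plane, and the complexity of this envelope is $O(\lambda_r(n))$ for a suitable constant $r$ depending on the degree of motion. (One must also argue that each breakpoint of the envelope accounts for only $O(1)$ corner events — at a corner event the identity of the relevant $q$ persists, but the \emph{edge} $e_{pq}^\poly$ gains or loses a segment; since a corner event at $p$ with vertex $w$ forces the center of the empty copy onto $\rho_{p,w}$ at exactly the envelope position, at most a constant number of such events can coincide with a single envelope vertex.)

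Summing over all $n$ choices of $p$ and all $k$ choices of the vertex $w$ of $Q$ gives $O(kn\lambda_r(n))$ corner events of this type. But a corner event involves a corner contact at \emph{one} of the two points $p,q$ defining the edge $e_{pq}^\poly$, and each such contact uses one of the $k$ vertices of $Q'$; charging the event to the point at which the contact occurs and to the vertex of $Q$ involved, we have already counted it. It remains to account for the extra factor of $k$ in the claimed bound $O(k^2 n\lambda_r(n))$: this comes from the fact that the function $g_{p,w,q}$ is itself only piecewise-linear with $O(k)$ pieces as one slides $Q'$ (the contact \emph{edge} on the $q$-side changes $O(k)$ times), so the number of algebraic pieces of $g_{p,w,q}$ as a function of $t$ is $O(k)$ rather than $O(1)$, inflating the envelope complexity — and hence the count — by another factor of $k$. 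I expect the main obstacle to be precisely this careful bookkeeping of the combinatorial complexity of the functions $g_{p,w,q}$ (their number of pieces as a function of $t$, and the number of pairwise crossings between two of them), since the geometry of which edge of $Q'$ touches $q$ as $Q'$ expands along $\rho_{p,w}$ is what controls the exponent of $k$; the Davenport–Schinzel machinery and the summation are then routine.
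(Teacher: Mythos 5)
Your proposal matches the paper's proof essentially verbatim: you fix a point $p$ and a vertex $w$ of $Q$, reduce to the lower envelope over time of the per-point functions giving the center position along the fixed-direction ray, observe that each such function is piecewise algebraic with $O(k)$ pieces (because the contact edge on the $q$-side changes), obtain envelope complexity $O(k\lambda_r(n))$ per $(p,w)$ pair, and multiply by the $nk$ choices of $(p,w)$. The paper does exactly this (calling your $g_{p,w,q}$ by the name $\distfn_\gamma^\poly[p,q]$ and citing \cite[Corollary 1.6]{SA95} for the envelope bound), so there is no substantive difference.
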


\begin{proof}
Fix a point $p$ and a vertex $w$ of $Q$, and consider all the corner
events in which $w$ touches $p$. As noted above, at any such event the
center $c$ of $Q$ lies on a ray $\gamma$ emanating from $p$ at a fixed
direction. (Since $p$ is moving, $\gamma$ is a moving ray, but its orientation remains fixed.) For each other point $q\in P\setminus\{p\}$, let $\distfn_\gamma^\poly[p,q]$
denote the distance, at time $t$, from $p$ along $\gamma$ to the center
of a copy of $Q$ that touches $p$ (at $w$) and $q$.
The value
$\min_q \distfn_\gamma^\poly[p,q](t)$ represents the intersection of
$\bd \Vor^\poly(p)$ with $\gamma$ at time $t$, where $\Vor^\poly(p)$ is the Voronoi cell of $p$ in $\VD^\poly(P)$. The point $q$ that attains the
minimum defines the Voronoi edge $e_{pq}^\poly$ (or vertex if the 
minimum is attained by more than one point $q$) of $\Vor^\poly(p)$ that $\gamma$ intersects.

In other words, we have a collection of $n-1$ partially defined
functions $\distfn_\gamma^\poly[p,q]$, and the breakpoints of their lower envelope
represent the corner events that involve the contact
of $w$ with $p$. By our assumption on the motion of $P$, each 
function $\distfn_\gamma^\poly[p,q]$ is piecewise algebraic, 
with $O(k)$ pieces. Each piece encodes a continuous contact of $q$ 
with a specific edge of $Q'$, and has constant description complexity. Hence (see, e.g., \cite[Corollary 1.6]{SA95}) the complexity of
the envelope is at most $O(k\lambda_r(n))$, for an appropriate constant
$r$. Repeating the analysis for each point $p$ and each vertex $w$ of $Q$, the lemma
follows.
\end{proof}

Consider next flip events. As noted, each flip event involves a 
placement of an empty homothetic copy $Q'$ of $Q$ that touches 
simultaneously four points $p_1,p_2,p_3,p_4$ of $P$, in this 
counterclockwise order along $\partial Q'$, so that the 
Voronoi edge $e_{p_1p_3}^\poly$,
which is a non-corner edge before the event, shrinks to a point
and is replaced by the non-corner edge $e_{p_2p_4}^\poly$. Let $e_i$
denote the edge of $Q'$ that touches $p_i$, for $i=1,2,3,4$. 

We fix the quadruple of edges $e_1,e_2,e_3,e_4$, bound the number
of flip events involving a quadruple contact with these edges,
and sum the bound over all $O(k^4)$ choices of four edges of $Q$.
For a fixed quadruple of edges $e_1,e_2,e_3,e_4$, we replace $Q$ by
the convex hull $Q_0$ of these edges, and note that any flip event
involving these edges is also a flip event
for $Q_0$. We therefore restrict our attention to $Q_0$, which is
a convex $k_0$-gon, for some $k_0\le 8$.

We note that if $(p,q)$ is a Delaunay edge
representing a contact of some homothetic copy $Q'_0$ of $Q_0$ where $p$ and $q$
touch two {\em adjacent} edges of $Q'_0$, then $(p,q)$ must be a corner
edge---shrinking $Q'_0$ towards the vertex common to the two edges,
while it continues to touch $p$ and $q$, will keep it empty, and
eventually reach a placement where either $p$ or $q$ touches a corner
of $Q'_0$.
The same (and actually simpler) argument applies to the case when $p$ and $q$ touch the same edge\footnote{In general position this does not occur, but it can happen at discrete time instances during the motion,} of $Q_0$.

\begin{figure}[htbp]
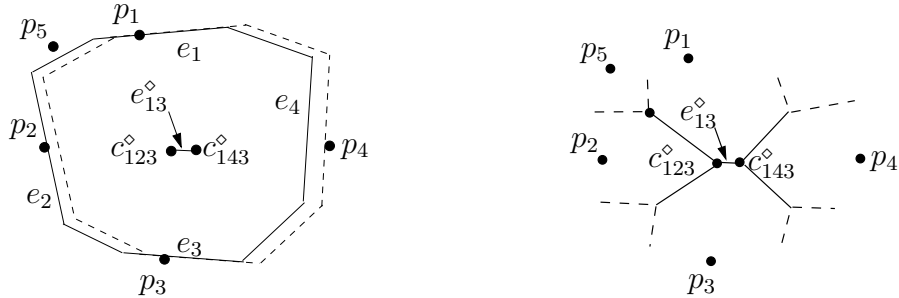

\begin{center}
\input{ChewProof.pstex_t}\hspace{3cm}\input{ChewProof1.pstex_t}
\caption{\small \sf Left: The edge $e_{13}^\poly$ in the diagram $\VD^\poly(P)$ before disappearing. The endpoint $c_{123}^\poly$ (resp., $c_{143}^\poly$) of $e_{13}^\poly$ corresponds to the homothetic copy of $Q_0$ whose edges $e_1,e_2,e_3$ (resp., $e_1,e_4,e_3$) are incident to the respective vertices $p_1,p_2,p_3$ (resp., $p_1,p_4,p_3$). Right: The tree of non-corner edges.}\label{Fig:ChewProof}
\end{center}
\end{figure}

Consider the situation just before the critical event takes place,
as depicted in Figure~\ref{Fig:ChewProof} (left).
The Voronoi edge $e_{p_1p_3}^\poly$ (to simplify the notation, we write this edge as $e_{13}^\poly$, and similarly for the other edges and vertices in this analysis) is delimited by two Voronoi vertices,
one, $c^\poly_{123}$, being the center of a copy of $Q_0$ which
touches $p_1,p_2,p_3$ at the respective edges $e_1,e_2,e_3$,
and the other, $c_{143}^\poly$, being the center of a copy of
$Q_0$ which touches $p_1,p_4,p_3$ at the respective edges
$e_1,e_4,e_3$. Consider the two other Voronoi edges $e_{12}^\poly$ and
$e_{23}^\poly$ adjacent to $c_{123}^\poly$, and
the two Voronoi edges $e_{14}^\poly$ and
$e_{43}^\poly$ adjacent to $c_{143}^\poly$.
Among them, consider only those which are non-corner edges;
assume for simplicity that they all are.
For specificity, consider the edge $e_{12}^\poly$. As we move
the center of $Q_0$ along that edge away from $c_{123}^\poly$,
$Q_0$ loses the contact with $p_3$; it shrinks on the side of
$p_1p_2$ which contains $p_3$ (and $p_4$, already away from $Q_0$),
and expands on the other side. Since this is a non-corner edge,
its other endpoint is a placement where the (artificial) edge
$e_{12}$ of $Q_0$ between $e_1$ and $e_2$ touches another point
$p_{5}$. Now, however, since $e_{12}$ is adjacent to both edges
$e_1$, $e_2$, the new Voronoi edges $e^\poly_{15}$ and
$e_{25}^\poly$ are both corner edges.

Repeating this analysis to each of the other three Voronoi edges
adjacent to $e_{13}^\poly$, we get a tree of non-corner Voronoi edges,
consisting of at most five edges, so that all the other Voronoi edges
adjacent to its edges are corner edges. As long as no discrete change occurs at any of the surrounding corner edges, the tree can undergo only $O(1)$ discrete changes, because all its edges are defined by a total of $O(1)$ points of $P$. When a corner edge undergoes a discrete change, this can affect only $O(1)$ adjacent non-corner trees of the above kind. Hence, the number of changes in non-corner edges is proportional to the number
of changes in corner edges, which, by Lemma \ref{corners} (applied to $Q_0$) is $O(n\lambda_r(n))$. Multiplying by the $O(k^4)$ choices of quadruples of edges of $Q$, we thus obtain:

\begin{theorem} \label{Thm:PolygonalVoronoi}
Let $P$ be a set of $n$ moving points in $\reals^2$ under algebraic
motion of bounded degree, and let $Q$ be a convex $k$-gon.
The number of topological changes in $\VD^\poly(P)$ with respect to $Q$
 is $O(k^4n\lambda_r(n))$, where $r$ is a
constant that depends on the degree of motion of $P$.
\end{theorem}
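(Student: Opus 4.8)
The plan is to bound the two kinds of events identified before the statement --- corner events and flip events --- and to note that their sum dominates the number of topological changes in $\VD^\poly(P)$ (each topological change of the diagram is realized by a flip, while the corner events are needed only as an auxiliary device for controlling the flips). Lemma~\ref{corners} already supplies an $O(k^2n\lambda_r(n))$ bound on the corner events, so the whole problem reduces to proving that the number of flip events is $O(k^4n\lambda_r(n))$; adding the two estimates then gives the theorem, the flip bound being the dominant one.

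To count the flip events I would argue per quadruple of edges of $Q$. Every flip event is witnessed by an empty homothet $Q'$ of $Q$ that touches four points $p_1,p_2,p_3,p_4$ at a fixed ordered quadruple $e_1,e_2,e_3,e_4$ of edges of $Q$; there are $O(k^4)$ such quadruples, and each flip is assigned to the unique quadruple realized at its critical time, so it suffices to bound by $O(n\lambda_r(n))$ the number of flips associated with one fixed quadruple. For a fixed quadruple I would replace $Q$ by the convex hull $Q_0$ of $e_1\cup e_2\cup e_3\cup e_4$, a convex polygon with $k_0\le 8$ edges: every flip for $Q$ with that quadruple is also a flip in $\DT^\poly(P)$ taken with respect to $Q_0$, so it is enough to bound the \emph{total} number of flip events for $Q_0$, exploiting that $Q_0$ has only $O(1)$ edges.

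The core of the argument is the observation, isolated above, that in the $Q_0$-diagram every Delaunay edge realized by a contact of a homothet of $Q_0$ at two adjacent edges (or at the same edge) must be a corner edge: shrink the homothet toward the shared vertex, and one reaches a corner contact. Hence, just before a flip, the vanishing non-corner edge $e_{13}^\poly$ together with the non-corner edges incident to its two endpoints forms a tree of at most five non-corner edges, every remaining neighbor of which is a corner edge --- when a non-corner edge such as $e_{12}^\poly$ terminates, its far endpoint is a placement where the artificial edge of $Q_0$ between $e_1$ and $e_2$ touches a fifth point $p_5$, and, that artificial edge being adjacent to both $e_1$ and $e_2$, the two new edges $e_{15}^\poly$, $e_{25}^\poly$ are corner edges. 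This tree is defined collectively by $O(1)$ points of $P$, so between two consecutive discrete changes of its surrounding corner edges it can undergo only $O(1)$ flips, while a single discrete change of a corner edge touches only $O(1)$ such trees. Therefore the number of flips for $Q_0$ is proportional to the number of corner events for $Q_0$, which is $O(n\lambda_r(n))$ by Lemma~\ref{corners} applied to the $O(1)$-gon $Q_0$. Summing over the $O(k^4)$ quadruples yields $O(k^4n\lambda_r(n))$ flips, and combining with the corner bound completes the proof.

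I expect the only real difficulty to lie in making the last charging fully rigorous: one must pin down the constant-size set of points that determines each non-corner tree, show that its combinatorial type between two consecutive ``bounding'' corner events is governed by a constant number of algebraic conditions on those points (so it changes $O(1)$ times), and verify that each corner event is charged by only a bounded number of trees. The degenerate configurations --- two points touching the same edge of $Q_0$, a Voronoi vertex of higher-than-generic degree, or several of the four surrounding edges being corner edges at once --- also need to be dispatched, but these are the routine ``general position at isolated instants'' issues rather than substantive obstacles.
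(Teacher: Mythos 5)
Your proposal reproduces the paper's argument essentially verbatim: it bounds corner events by Lemma~\ref{corners}, fixes one of $O(k^4)$ quadruples of edges, passes to the $O(1)$-gon $Q_0$, observes that adjacent-edge contacts force corner edges so the shrinking edge sits in a tree of at most five non-corner edges surrounded by corner edges, and charges the $O(1)$ flips of each such tree to bounding corner events. The only difference is that you flag the final charging step as needing more rigor; the paper also leaves that step at the same level of detail.
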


\paragraph{Kinetic maintenance of $\mathbf{VD^\poly(P)}$ and 
$\mathbf{DT^\poly(P)}$.}
As already mentioned, it is a fairly trivial task to maintain 
$\DT^\poly(P)$ and $\VD^\poly(P)$ kinetically, as the points of $P$ 
move. All we need to do is to assert the correctness of the present 
triangulation by a collection of local certificates, one for each edge 
of the diagram, where the certificate of an edge asserts that the two 
homothetic placements $Q^-,Q^+$ of $Q$ that circumscribe the two 
respective adjacent $Q$-Delaunay triangles 
$\triangle pqr^-,\triangle pqr^+$, are such that $Q^-$ does not 
contain $r^+$ and $Q^+$ does not contain $r^-$. The failure time of this certificate is the first time (if one exists) at which $p,q,r^-$, and $r^+$ become $Q$-cocircular---they all lie on the boundary of a common homothetic copy of $Q$. Such an event corresponds to a flip event in $\DT^\poly(P)$. If $pq$ is an edge of the periphery of $\DT^\poly(P)$, so that $\triangle pqr^+$ exists but $\triangle pqr^-$ does not, then $Q^-$ is a limiting wedge bounded by rays supporting two {\it consecutive} edges of (a copy of) $Q$, one passing through $p$ and one through $q$ (see Figure \ref{Fig:ConesCertif}).
The failure time of the corresponding certificate is the first time (if any) at which $r^+$ also lies on the boundary of that wedge.

We maintain the breakpoints using ``sub-certificates", each of which asserts that $Q^-$, say, touches each of $p,q,r^-$ at respective specific edges (and similarly for $Q^+$). The failure time of this sub-certificate is the first failure time when one of $p,q$ or $r^-$ touches $Q^-$ at a vertex. In this case we have a corner event---two of the adjacent Voronoi edges terminate at a corner placement. Note that the failure time of each sub-certificate can be computed in $O(1)$ time. Moreover, for a fixed collection of valid sub-certificates, the failure time of an initial certificate (asserting non-cocircularity) can also be computed in $O(1)$ time (provided that it fails before the failures of the corresponding sub-certificates), because we know the four edges of $Q^-$ involved in the contacts.

\begin{figure}[htbp]
\begin{center}
\input{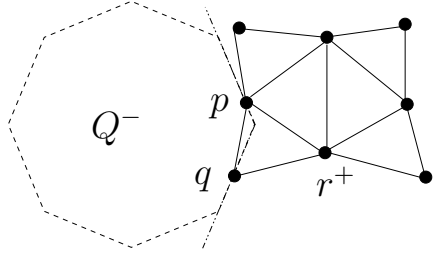}
\caption{\small \sf If $r^-$ does not exist then $Q^-$ is a limiting wedge bounded by rays supporting two consecutive edges of (a copy of) $Q$.}
\label{Fig:ConesCertif}
\end{center}
\end{figure}

We therefore maintain an event queue that stores and updates all the 
active failure times (there are only $O(n)$ of them at any given time---the bound is 
independent of $k$, because they correspond to actual $\DT$ edges. When a sub-certificate fails we do not change 
$\DT^\poly(P)$, but only update the corresponding Voronoi edge, by 
adding or removing a segment and a breakpoint, and by replacing the 
sub-certificate by a new one; we also update the cocircularity certificate 
associated with the edge, because one of the contact edges has changed.
When a cocircularity certificate fails we update $\DT^\poly(P)$ and 
construct $O(1)$ new sub-certificates and certificates. Altogether, each update of the diagram takes $O(\log n)$ time. We thus have

\begin{theorem}\label{Thm:MaintainPolygDT}
Let $P$ be a set of $n$ moving points in $\reals^2$ under algebraic
motion of bounded degree, and let $Q$ be a convex $k$-gon.
$\DT^\poly(P)$ and $\VD^\poly(P)$ can be maintained using
$O(n)$ storage and $O(\log n)$ update time, so that
$O(k^4n\lambda_r(n))$ events are processed, where $r$ is  a
constant that depends on the degree of motion of $P$.
\end{theorem}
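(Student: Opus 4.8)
The plan is to assemble the data structure directly from the certification scheme described in the paragraphs immediately preceding the statement, and then combine the bookkeeping costs with the combinatorial bound of Theorem~\ref{Thm:PolygonalVoronoi}. First I would fix the representation: store $\DT^\poly(P)$ explicitly (as a planar subdivision, with each edge carrying pointers to its two incident $Q$-Delaunay triangles), and alongside it maintain, for each Voronoi edge $e^\poly_{pq}$, the ordered list of its $O(k)$ breakpoints together with the identity of the contact edges of $Q$ between consecutive breakpoints. Each Delaunay edge $pq$ is equipped with one \emph{cocircularity certificate} (asserting that the circumscribing copies $Q^-,Q^+$ of the two adjacent triangles do not contain the opposite vertex), and with two \emph{sub-certificates} per endpoint of $e^\poly_{pq}$ recording which edges of $Q^-$ (resp.\ $Q^+$) touch $p,q,r^-$ (resp.\ $p,q,r^+$); peripheral edges are handled via the limiting-wedge variant, exactly as in Figure~\ref{Fig:ConesCertif}. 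All active failure times go into a single event queue keyed by time.

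Next I would verify the two cost claims that make up the theorem. For \emph{storage}: $\DT^\poly(P)$ has $O(n)$ edges, each triangle contributes $O(1)$ certificates, and although the total length of all breakpoint lists is $O(kn)$, the number of \emph{active} failure times in the queue is only $O(n)$ because each corresponds to an actual edge of $\DT^\poly(P)$ (only the ``current'' sub-certificate of each edge is live), giving $O(n)$ storage as stated. For the \emph{update time}: as observed in the text, computing the failure time of a sub-certificate takes $O(1)$, and for a fixed set of valid sub-certificates the failure time of a cocircularity certificate also takes $O(1)$ since the four relevant edges of $Q^-$ are known; processing either kind of event touches $O(1)$ combinatorial objects and performs $O(1)$ heap operations, for $O(\log n)$ per event. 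Finally, for the \emph{number of events}: corner events are bounded by $O(k^2 n\lambda_r(n))$ via Lemma~\ref{corners}, and flip events (together with the induced changes in the non-corner trees, hence all topological changes) are bounded by $O(k^4 n\lambda_r(n))$ via Theorem~\ref{Thm:PolygonalVoronoi}; since $k^4$ dominates $k^2$, the total is $O(k^4 n\lambda_r(n))$.

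The step I expect to need the most care is the argument that the event queue holds only $O(n)$ entries despite the $\Theta(kn)$ breakpoints in the diagram — i.e., the claim that it suffices to keep, for each Delaunay edge, a single ``next corner event'' and a single ``next flip event'' rather than one certificate per breakpoint. Here I would argue that along each of $e^\poly_{pq}$'s endpoints only the currently-extremal breakpoint (the one adjacent to the Voronoi vertex) can be the next one to change, so a per-edge constant number of live certificates is enough, and the earlier breakpoints are regenerated lazily as the endpoint sweeps inward; one has to check that when a corner event fires we can in $O(1)$ time identify the new contact edges and install the replacement sub-certificate and the updated cocircularity certificate. The remaining details — the precise form of the limiting-wedge certificate for hull edges, and the fact that a single corner-edge change perturbs only $O(1)$ neighbouring non-corner trees (already established in the proof of Theorem~\ref{Thm:PolygonalVoronoi}) — are routine, so I would state them briefly and conclude.
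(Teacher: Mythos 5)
Your proposal follows the paper's certification scheme closely and correctly identifies the key counting issue, namely that the event queue should hold only $O(n)$ live certificates rather than one per breakpoint. However, your chosen representation does not actually meet the claimed $O(n)$ storage bound. In the first paragraph you decide to maintain, ``for each Voronoi edge $e^\poly_{pq}$, the ordered list of its $O(k)$ breakpoints together with the identity of the contact edges of $Q$ between consecutive breakpoints.'' Across all $O(n)$ Voronoi edges this alone is $\Theta(kn)$ storage, and your second paragraph in fact acknowledges that ``the total length of all breakpoint lists is $O(kn)$'' while still concluding ``$O(n)$ storage as stated'' --- the $O(n)$ bound on the number of \emph{active failure times in the queue} does not retroactively shrink the breakpoint lists you explicitly store. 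Your last paragraph, where you note that ``only the currently-extremal breakpoint (the one adjacent to the Voronoi vertex) can be the next one to change'' and that ``the earlier breakpoints are regenerated lazily,'' is the right instinct, but you only apply it to the event queue, not to the representation itself.

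The paper avoids this by not materializing the intermediate breakpoints at all: it keeps only $O(1)$ data per $Q$-Delaunay triangle (the current triple of contact edges of the circumscribing copy $Q^-$, i.e.\ the sub-certificate at its Voronoi vertex) and $O(1)$ per $Q$-Delaunay edge (the cocircularity certificate), which is $O(n)$ total; the full bisector geometry is reconstructible on demand from the current contact edges and point positions. If you drop the explicit breakpoint lists and retain only the sub-certificates at the current Voronoi vertices --- which is all your corrected argument in the final paragraph actually uses --- then the rest of your proof (the $O(1)$ failure-time computations, $O(\log n)$ per event for heap maintenance, $O(k^2 n\lambda_r(n))$ corner events via Lemma~\ref{corners}, and $O(k^4 n\lambda_r(n))$ flip events via Theorem~\ref{Thm:PolygonalVoronoi}, whose sum is dominated by the latter) goes through exactly as you describe.
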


\subsection{Stable Delaunay edges in $\mathbf{DT^\poly(P)}$}
We now restrict $Q$ to be a regular $k$-gon.
Let $v_0,\ldots,v_{k-1}$ be
the vertices of $Q$ arranged in a clockwise direction, with $v_0$ the leftmost.   
We call a homothetic copy of $Q$ whose vertex $v_j$ touches a point $p$, a
{\em $v_j$-placement of $Q$ at $p$}. Let
$u_j$ be the direction of the vector that connects $v_j$ with the 
center of $Q$, for each $0\leq j< k$ (as in Section \ref{sec:Prelim}). See Figure \ref{Fig:Placement} (left).

We follow the machinery in the proof of Lemma~\ref{corners}. That is, for any pair $p,q\in
P$ let $\varphi^\poly_j[p,q]$ denote the distance from $p$ to the point
$u_j[p]\cap \bisect^\poly_{pq}$; we put $\varphi^\poly_j[p,q]=\infty$ if
$u_j[p]$ does not intersect $\bisect^\poly_{pq}$. If
$\varphi^\poly_j[p,q]<\infty$ then the point $\bisect^\poly_{pq}\cap u_j[p]$ is
the center of the $v_j$-placement $Q'$ of $Q$ at $p$
that also touches $q$, and it is easy to see that there is a unique such point.
The value $\varphi^\poly_j[q,p]$ is equal to the circumradius
of $Q'$. See Figure \ref{Fig:Placement} (middle).

\begin{figure}[htbp]
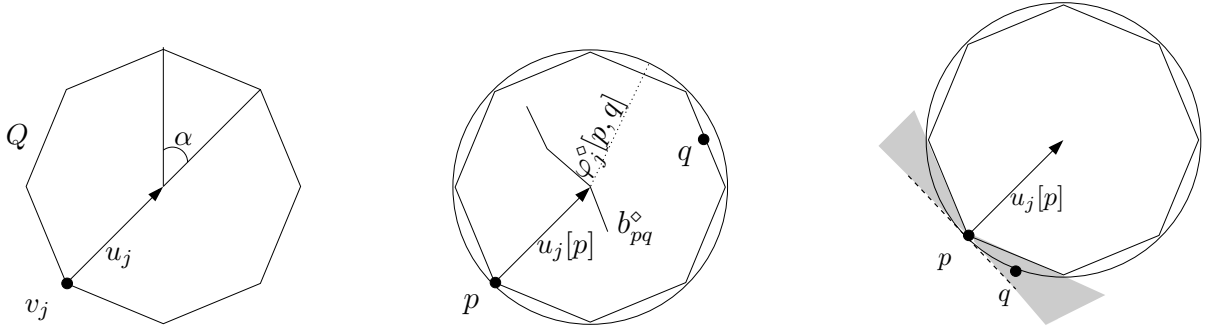

\begin{center}
\input{Regular.pstex_t}\hspace{2cm}\input{PolygonalDist.pstex_t}\hspace{2cm}\input{UndefinedPolygDist.pstex_t}
\caption{\small \sf Left: \sf $u_j$ is the direction of the vector connecting vertex $v_j$ to the center of $Q$. Middle:
The function $\varphi_j^\poly[p,q]$ is equal to the radius of
the circle that circumscribes the $v_j$-placement of $Q$ at
$p$ that also touches $q$.
Right: The case when $\varphi^\poly_j[p,q]=\infty$ while
  $\varphi_j[p,q]<\infty$. In this case $q$ must lie in one of
the shaded wedges.
}\label{Fig:Placement}
\end{center}
\end{figure}

The \textit{neighbor} $\Nbrs^\poly_j[p]$ of $p$ in direction $u_j$ is defined to be the point $q\in
P\setminus\{p\}$ that minimizes $\varphi^\poly_j[p,q]$. Clearly, for any
$p,q\in P$ we have $\Nbrs^\poly_j[p]=q$ if and only if there is an empty
$v_j$-placement $Q'$ of $Q$ at $p$ so that $q$ touches one of
its edges.

%\begin{figure}[htbp]
%\begin{center}
%\input{UndefinedPolygDist.pstex_t}
%\caption{\small The case when $\varphi^\poly_j[p,q]=\infty$ while
  %$\varphi_j[p,q]<\infty$. In this case $q$ must lie in one of the shaded wedges.
%\label{Fig:Bisectors1}}
%\end{center}
%\end{figure}

\smallskip
\noindent{\bf Remark:}
Note that, in the Euclidean case, we have $\varphi_j[p,q]<\infty$ if
and only if the angle between $\overline{pq}$ and $u_j[p]$ is at most
$\pi/2$.  In contrast, $\varphi^\poly_j[p,q]<\infty$ if and only if
the angle between $\overline{pq}$ and $u_j[p]$ is at most
$\pi/2-\pi/k=\pi/2-\alpha/2$. Moreover, we have $\varphi_j[p,q]\leq
\varphi^\poly_j[p,q]$. Therefore, $\varphi^\poly_j[p,q]<\infty$ always
implies $\varphi_j[p,q]<\infty$, but not vice versa; see Figure
\ref{Fig:Placement} (right). Note also that in both the Euclidean and the polygonal cases, the respective quantities $N_j[p]$ and $N_j^\poly[p]$ may be undefined.

\begin{lemma}\label{Thm:LongEucPoly}
Let $p,q\in P$ be a pair of points such that $\Nbrs_j(p)=q$ for 
$h\geq 3$ consecutive indices, say $0\leq j\leq h-1$.
Then for each of these indices, except possibly for the first and the last one, we also have $\Nbrs^\poly_j[p]=q$.
\end{lemma}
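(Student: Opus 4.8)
The plan is to produce, for each interior index $j$ (so $1\le j\le h-2$ and $\Nbrs_{j-1}(p)=\Nbrs_{j}(p)=\Nbrs_{j+1}(p)=q$), an empty $v_j$-placement of $Q$ at $p$ whose boundary passes through $q$; by the characterization of $\Nbrs^\poly_j[\cdot]$ this yields $\Nbrs^\poly_j[p]=q$ after one routine step. Concretely, let $Q'_j$ be \emph{the} $v_j$-placement of $Q$ at $p$ with $q\in\partial Q'_j$. It exists (i.e.\ $\varphi^\poly_j[p,q]<\infty$) because $\Nbrs_{j-1}(p)=\Nbrs_{j+1}(p)=q$ forces $\overline{pq}$ to make an angle $<\pi/2$ with each of $u_{j-1},u_{j+1}$, hence an angle $<\pi/2-\alpha<\pi/2-\alpha/2$ with $u_j$, and then $\varphi^\poly_j[p,q]<\infty$ by the Remark preceding the lemma. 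Suppose we manage to show that $Q'_j$ is empty. Since all $v_j$-placements of $Q$ at $p$ are nested by size, the \emph{maximal} empty such placement $Q^{\max}$ (the one touching $\Nbrs^\poly_j[p]$) cannot be strictly smaller than $Q'_j$: otherwise the point $\Nbrs^\poly_j[p]$ touched by $\partial Q^{\max}$ would lie in $\intr(Q'_j)$ — general position rules out that point lying on an edge of $Q'_j$ incident to $p$ — contradicting emptiness of $Q'_j$. Hence $\varphi^\poly_j[p,q]=\varphi^\poly_j[p,\Nbrs^\poly_j[p]]$, and by general position $\Nbrs^\poly_j[p]=q$.

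So everything reduces to the following geometric statement, whose proof is the crux: $Q'_j\subseteq D_{j-1}\cup D_{j+1}$, where $D_i$ denotes the Euclidean disk bounded by the circle through $p$ and $q$ whose center is $c_i=u_i[p]\cap b_{pq}$. This implies that $Q'_j$ is empty: since $\Nbrs_i(p)=q$ for $i=j-1,j+1$, the point $q$ minimizes $\varphi_i[p,\cdot]$ over $P$, so $\intr(D_i)\cap P=\emptyset$ and (general position) $\partial D_i\cap P=\{p,q\}$; as $p,q\in\partial Q'_j$, this forces $\intr(Q'_j)\cap P=\emptyset$.

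To prove $Q'_j\subseteq D_{j-1}\cup D_{j+1}$ I would put $p$ at the origin with $u_j=(0,1)$, so that $Q'_j$ is symmetric about the $y$-axis, with center $(0,\rho)$, $\rho=\varphi^\poly_j[p,q]$, and vertices $w_i=\rho(-\sin i\alpha,\,1-\cos i\alpha)$, $-s\le i\le s$ (here $k=2s$), with $w_0=p$ and $w_{\pm s}=(0,2\rho)$. Then $Q'_j$ is the union of its two \emph{convex} halves $Q'_j\cap\{x\ge0\}$ and $Q'_j\cap\{x\le0\}$, and it suffices to show $Q'_j\cap\{x\ge0\}\subseteq D_{j+1}$, the reflected claim $Q'_j\cap\{x\le0\}\subseteq D_{j-1}$ being identical; this sidesteps the non-convexity of $D_{j-1}\cup D_{j+1}$, so no Jordan-curve type argument is needed. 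As $Q'_j\cap\{x\ge0\}$ and $D_{j+1}$ are convex, it is enough to check that the vertices of the polygon $Q'_j\cap\{x\ge0\}$ — namely $w_0=p$ (trivial, $p\in\partial D_{j+1}$) and $w_{-1},\dots,w_{-s}$ — lie in $D_{j+1}$. Using $p\in\partial D_{j+1}$, the containment $w_{-i}\in D_{j+1}$ is equivalent to $\|w_{-i}\|^2\le 2\langle w_{-i},c_{j+1}\rangle$, which, after substituting the coordinates of $w_{-i}$ and writing $c_{j+1}=\|c_{j+1}\|\,u_{j+1}$, becomes
\[
\rho\,(1-\cos i\alpha)\ \le\ \|c_{j+1}\|\,\bigl(\cos\alpha-\cos(i+1)\alpha\bigr),\qquad i=0,1,\dots,s.
\]
Here $\|c_{j+1}\|=\varphi_{j+1}[p,q]=\|pq\|/(2\cos(\alpha-\psi))$, with $\psi$ the signed angle between $\overline{pq}$ and $u_j$ (so $|\psi|<\pi/2-\alpha$, since $\Nbrs_{j-1}(p)=q$), while $\rho=\varphi^\poly_j[p,q]$ is squeezed between $\varphi_j[p,q]=\|pq\|/(2\cos\psi)$ (from the Remark, $\varphi_j\le\varphi^\poly_j$) and the upper bound forced by $q$ lying on an \emph{edge}, not outside, of $Q'_j$, i.e.\ $\|q-(0,\rho)\|\ge\rho\cos(\alpha/2)$. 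Substituting these bounds into the displayed inequality and simplifying is a routine trigonometric computation that goes through for every $i$, and the mirror computation handles $D_{j-1}$.

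I expect the main obstacle to be precisely this last trigonometric verification — in particular, pinning $\rho=\varphi^\poly_j[p,q]$ down tightly enough in terms of the Euclidean radii $\|c_{j\pm1}\|$. This is also the only place where the hypothesis on \emph{both} neighbouring directions $u_{j\pm1}$ is used, which is exactly why the first and last index of the run must be dropped.
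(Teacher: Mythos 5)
Your overall strategy mirrors the paper's: exhibit the $v_j$-placement $Q'_j$ at $p$ touching $q$, show it is empty by proving $Q'_j\subseteq D_{j-1}\cup D_{j+1}$, and conclude $\Nbrs^\poly_j[p]=q$. (The paper uses the outermost disks $D_0,D_{h-1}$ rather than $D_{j\pm 1}$, but for $h=3$ these coincide, and your version is the strongest instance anyway.) The gap is in the halving step, and the deferred ``routine trigonometric computation'' would in fact refute it rather than verify it.

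For $i=s$ your displayed inequality reads $\rho\le\|c_{j+1}\|\cos\alpha$. But by your own lower bound $\rho\ge\varphi_j[p,q]=\|pq\|/(2\cos\psi)$, together with $\|c_{j+1}\|=\|pq\|/(2\cos(\alpha-\psi))$, this inequality would force $\cos(\alpha-\psi)\le\cos\alpha\cos\psi$, i.e.\ $\sin\alpha\sin\psi\le 0$, i.e.\ $\psi\le 0$. So the antipodal vertex $w_{-s}=(0,2\rho)$ of $Q'_j$ fails to lie in $D_{j+1}$ whenever $q$ is on the $u_{j+1}$ side of the $u_j$-axis ($\psi>0$); the mirror calculation places $w_{-s}\notin D_{j-1}$ whenever $\psi<0$. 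Since $w_{-s}$ sits on the axis, hence in both closed halves, one of your two half-containments always fails for generic $q$, and the split along the $u_j$-axis cannot establish $Q'_j\subseteq D_{j-1}\cup D_{j+1}$. (The union claim itself is not refuted --- $w_{-s}$ does land in the \emph{other} disk --- but the dividing line must depend on which side of the axis $q$ lies.) The paper's proof is genuinely different at exactly this point: rather than halving along a fixed line and checking vertices, it first shows that the two intersection points, other than $p$, of the circumcircle $D^+$ of $Q'_j$ with $\partial D_1$ and $\partial D_2$ lie on rays from $p$ crossing the edge $e'$ of $Q'_j$ through $q$, then concludes that the $p$-arc of $\partial D^+$ between them lies in $D_1\cup D_2$, and from there that $Q'_j\subseteq D_1\cup D_2$. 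Your $u_j$-axis decomposition would have to be replaced by something of this adaptive kind.
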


\begin{proof}
Let $w_1$ (resp., $w_2$) be the point at which the ray $u_0[p]$ (resp., $u_{h-1}[p]$) hits the edge $e_{pq}$ in $\VD(P)$. (By assumption, both points exist.)
Let $D_1$ and $D_2$ be the disks centered at $w_1$ and $w_2$, respectively, and touching $p$ and $q$. By definition, neither of these disks contains a point of $P$ in its interior. The angle between the tangents to $D_1$ and $D_2$ at $p$ or at $q$ (these angles are equal) is $\beta=(h-1)\alpha$; see Figure \ref{Fig:ProvePolyg} (left).

\begin{figure}[htbp]
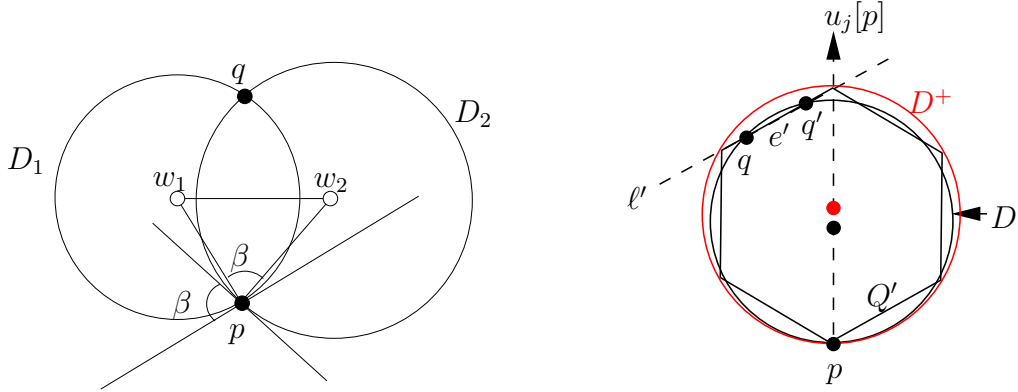

\begin{center}
\input{ProvePolyg1.pstex_t}\hspace{2cm}\input{ProvePolyg2.pstex_t}
\caption{\small \sf Left: The angle between the tangents to $D_1$ and $D_2$ at $p$ (or at $q$) is equal to
  $\angle w_1pw_2= \beta=(h-1)\alpha$. Right: The line $\ell'$ crosses $D$ in a chord $qq'$ which is fully
  contained in $e'$.}\label{Fig:ProvePolyg}
\end{center}
\end{figure}

Fix an arbitrary index $1\leq j\leq h-2$, so $u_j[p]$ intersects $e_{pq}$ 
and forms an angle of at least $\alpha$ with each of ${pw}_1,{pw}_2$.
Let $Q'$ be the $v_j$-placement of $Q$
at $p$ that touches $q$. To see that such a placement exists, we note that, by the preceding remark, it suffices to show that the angle between
$\overline{pq}$ and $u_j[p]$ is at most $\pi/2-\alpha/2$; that is, to rule out the case where $q$ lies in one of the shaded wedges in Figure \ref{Fig:Placement} (right). This case is indeed impossible, because then one of $u_{j-1}[p],u_{j+1}[p]$ would form an angle greater than $\pi/2$ with 
$\overline{pq}$, contradicting the assumption that both of these rays intersect the (Euclidean) $\bisect_{pq}$.

We claim that $Q' \subset D_1\cup D_2$. 
Establishing this property for every $1\leq j\leq h-2$ will complete 
the proof of the lemma. 
Let $e'$ be the edge of $Q'$ passing through $q$. See Figure \ref{Fig:ProvePolyg} (right). Let $D$ be the disk 
whose center lies on $u_j[p]$ and which passes through $p$ and $q$, 
and let $D^+$ be the circumscribing disk of $Q'$. 
Since $q\in \partial D$ and is interior to $D^+$, and since $D$ and 
$D^+$ are centered on the same ray $u_j[q]$ and pass through $p$, it 
follows that $D \subset D^+$. 
The line $\ell'$ containing $e'$ crosses $D$ in a chord $qq'$ that
is fully contained in $e'$. The angle between the tangent to $D$ at 
$q$ and the chord $qq'$ is equal to the angle at which $p$ sees $qq'$. 
This is smaller than the angle at which $p$ sees $e'$, which in turn 
is equal to $\alpha/2$.

Arguing as in the analysis of $D_1$ and $D_2$, the tangent to $D$ at $q$ forms an angle of at least $\alpha$ with each of the tangents to $D_1,D_2$ at $q$, and hence $e'$ forms an angle of at least $\alpha/2$ with each of these tangents; see Figure \ref{Fig:ProvePolyg2} (left).
The line $\ell'$ marks two chords $q_1q,qq_2$ within the respective disks $D_1,D_2$. We claim that $e'$ is fully contained in their union $q_1q_2$. Indeed, the angle $q_1pq$ is equal to the angle between $\ell'$ and the tangent to $D_1$ at $q$, so $\angle q_1pq\geq \alpha/2$.
On the other hand, the angle at which $p$ sees $e'$ is $\alpha/2$, which is smaller. This, and the symmetic argument involving $D_2$, are easily seen to imply the claim.

\begin{figure}[htbp]
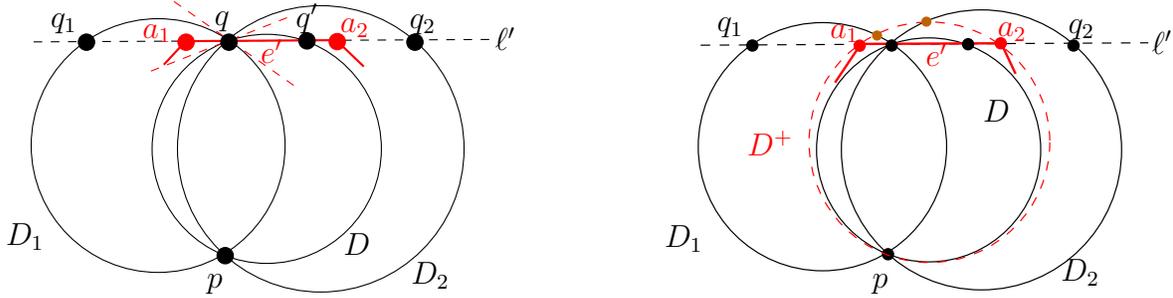

\begin{center}
\input{ProvePolyg3.pstex_t} \hspace{2cm} \input{ProvePolyg4.pstex_t}
\caption{\small \sf Left: The line $\ell'$ forms an angle of at least $\alpha/2$ with each of the tangents to $D_1,D_2$ at $q$. Right: The edge $e'=a_1a_2$ of $Q'$ is fully contained in $D_1\cup D_2$.}\label{Fig:ProvePolyg2}
\end{center}
\end{figure}

Now consider the circumscribing disk $D^+$ of $Q'$. Denote the endpoints of $e'$ as $a_1$ and $a_2$, where $a_1$ lies in $q_1q$ and $a_2$ lies in $qq_2$.
Since the ray $\overline{pa}_1$ hits $\partial D^+$ before hitting $D_1$, and the ray $\overline{pq}$ hits these circles in the reverse order, it follows that the second intersection of $\partial D_1$ and $\partial D^+$ (other than $p$) must lie on a ray from $p$ which lies between the rays $\overline{pa}_1,\overline{pq}$ and thus crosses $e'$. See Figure \ref{Fig:ProvePolyg2} (right).
Symmetrically, the second intersection point of $\partial D_2$ and $\partial D^+$ also lies on a ray which crosses $e'$.

It follows that the arc of $\partial D^+$ delimited by these intersections and containing $p$ is fully contained in $D_1\cup D_2$.
Hence all the vertices of $Q'$ (which lie on this arc) lie in $D_1\cup D_2$. This, combined with the argument in the preceding paragraphs, is easily seen to imply that $Q'\subseteq D_1\cup D_2$, so its interior does not contain points of $P$, which in turn implies that $\Nbrs_j^\poly[p]=q$. 
As noted, this completes the proof of the lemma.
\end{proof}

Since $Q$-Voronoi edges are connected, Lemma~\ref{Thm:LongEucPoly} implies that $e_{pq}^\poly$ is ``long", in the sense that it contains at least $h-2$ breakpoints that represent corner placements at $p$, interleaved (as promised in Section \ref{Sec:PolygonalBackground}) with at least $h-3$ corner placements at $q$.
This property is easily seen to hold also under the weaker assumptions that: (i) for the first and the last indices $j=0,h-1$, the point $\Nbrs_j[p]$ either is equal to $q$ or is undefined, and (ii) for the rest of the indices $j$ we have $\Nbrs_j[p]=q$ and $\varphi^\poly_j[p,q]<\infty$ (i.e., the $v_j$-placement of $Q$ at $p$ that touches $q$ exists).
In this relaxed setting, it is now possible that any of the two points $w_1,w_2$ lies at infinity, in which case the corresponding disk $D_1$ or $D_2$ degenerates into a halfplane. This stronger version of 
Lemma~\ref{Thm:LongEucPoly} is used in the proof of the converse 
Lemma~\ref{Thm:LongPolygEuc}, asserting that every edge $e^\poly_{pq}$ in $\VD^\poly(P)$ with sufficiently many breakpoints has a stable counterpart $e_{pq}$ in $\VD(P)$.

\begin{lemma}\label{Thm:LongPolygEuc}
Let $p,q\in P$ be a pair of points such that $\Nbrs_j^\poly[p]=q$ for at least three consecutive indices $j\in \{0,\ldots, k-1\}$.
Then for each of these indices, except possibly for the first and the last one, we have $\Nbrs_j[p]=q$.
\end{lemma}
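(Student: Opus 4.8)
The plan is to mirror the proof of Lemma~\ref{Thm:LongEucPoly}, with the roles of Euclidean disks and homothetic copies of $Q$ interchanged. Relabel the given run of indices as $\{0,\dots,m-1\}$ with $m\ge 3$, and fix an interior index $1\le j\le m-2$. First I would make sure the Euclidean objects are well defined: since $\Nbrs_j^\poly[p]=q$ forces $\varphi^\poly_j[p,q]<\infty$, the Remark preceding Lemma~\ref{Thm:LongEucPoly} gives $\varphi_j[p,q]<\infty$, so $u_j[p]$ meets $b_{pq}$ in a point $z_j$; let $D_j$ be the disk centered at $z_j$ through $p$ and $q$. Recall that $\Nbrs_j[p]=q$ is equivalent to $z_j$ lying on the Voronoi edge $e_{pq}$, which is in turn equivalent to $D_j$ having empty interior with respect to $P$. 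For each index $i$ in the run let $Q_i'$ be the empty $v_i$-placement of $Q$ at $p$ that touches $q$; it exists precisely because $\Nbrs_i^\poly[p]=q$. Note that for $1\le j\le m-2$ the three copies $Q_{j-1}',Q_j',Q_{j+1}'$ all belong to the run, hence are all empty. So everything reduces to showing that $D_j$ is empty.

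The key step is the containment $D_j\subseteq Q_{j-1}'\cup Q_j'\cup Q_{j+1}'$, the three flanking copies playing the role that the two extreme disks $D_1,D_2$ played in Lemma~\ref{Thm:LongEucPoly}; since those copies are empty, emptiness of $D_j$ follows at once (modulo the harmless general-position remark that no point of $P\setminus\{p,q\}$ lies on the boundaries of all three copies simultaneously). To prove the containment I would run the Lemma~\ref{Thm:LongEucPoly} argument in reverse. As there, $D_j$ is internally tangent at $p$ to the circumscribing disk $D_j^+$ of $Q_j'$ and $D_j\subseteq D_j^+$ (both disks pass through $p$ and are centered on $u_j[p]$, while $q\in\partial D_j$ and $q\in\intr D_j^+$), so $D_j$ protrudes from $Q_j'$ only across edges of $Q_j'$, and it suffices to catch those protrusions. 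Near $p$ this is a wedge count: $D_j$ fills the half-angle-$\pi/2$ wedge around $u_j[p]$, while each $Q_i'$ fills the half-angle-$(\pi/2-\alpha/2)$ wedge around $u_i[p]$ (the interior angle of a regular $k$-gon at the vertex $p$ is $\pi-\alpha$, with bisector $u_i[p]$); since consecutive directions are $\alpha$ apart, the three wedges for $i=j-1,j,j+1$ union to the half-angle-$(\pi/2+\alpha/2)$ wedge around $u_j[p]$, which strictly contains that of $D_j$. Near $q$ I would repeat the tangent/chord angle chase of Lemma~\ref{Thm:LongEucPoly}: writing $e^{(i)}$ for the edge of $Q_i'$ through $q$, the line supporting $e^{(j)}$ cuts $D_j$ in a chord contained in $e^{(j)}$ (the tangent--chord angle at $q$ equals the inscribed angle seen from $p\in\partial D_j$, which is less than $\alpha/2$, whereas $p$ sees $e^{(j)}$ at exactly $\alpha/2$ because $Q_j'$ is regular and $e^{(j)}$ misses the vertex $p$); and comparing the tangents to $D_j$ and to the circumscribing disks of $Q_{j-1}',Q_{j+1}'$ at $q$ with the lines supporting $e^{(j-1)},e^{(j+1)}$ --- whose relative slopes are again pinned down by the $\alpha$-spacing of the $u_i$'s --- shows that the protrusion of $D_j$ across $e^{(j)}$ is absorbed into $Q_{j-1}'\cup Q_{j+1}'$. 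Finally, as in Lemma~\ref{Thm:LongEucPoly}, combining the vertex information with $D_j\subseteq D_j^+$ and convexity of the $Q_i'$ upgrades ``every protrusion is covered'' to the full containment, whence $D_j$ is empty and $\Nbrs_j[p]=q$.

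I expect the near-$q$ analysis to be the main obstacle. Unlike the clean two-disk picture of Lemma~\ref{Thm:LongEucPoly}, here $D_j$ can poke past several edges of $Q_j'$ at once when $q$ is nearly antipodal to $p$ inside $Q_j'$, so one has to verify carefully that the two flanking copies (and not just the central $Q_j'$) really catch every protruding circular segment --- which is exactly why I keep three copies rather than two. The other delicate point is the degenerate configurations: when $p$ lies on the convex hull of $P$, or when a relevant ray runs off to infinity, one of the auxiliary objects in the comparison above degenerates into a half-plane rather than a disk; this is precisely the situation addressed by the strengthened form of Lemma~\ref{Thm:LongEucPoly} quoted just before the present lemma, which I would invoke (applied to the appropriate sub-run of directions) so that the same reasoning still goes through. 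Once the containment $D_j\subseteq Q_{j-1}'\cup Q_j'\cup Q_{j+1}'$, hence the emptiness of $D_j$, is established for every $1\le j\le m-2$, the lemma follows.
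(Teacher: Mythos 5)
The central step of your proposal is the geometric containment $D_j\subseteq Q_{j-1}'\cup Q_j'\cup Q_{j+1}'$, but you only establish its near-$p$ wedge piece; the two places you yourself flag as delicate are exactly where the argument is missing. Near $q$ you assert that the protrusion of $D_j$ across $e^{(j)}$ is ``absorbed'' by the flanking copies, but the needed comparison --- that the angles between the tangent to $D_j$ at $q$ and the supporting lines $\ell^{(j\pm1)}$ of $e^{(j\pm1)}$ swing the right way and by a large enough amount to cover the entire lune $D_j\setminus Q_j'$, including the case where $D_j$ pokes past several consecutive edges of $Q_j'$ --- is never carried out. And the concluding ``upgrade to full containment'' has no analogue of the step it is modeled on: in Lemma~\ref{Thm:LongEucPoly} one bounds the vertices of the inner polygon on an arc and then invokes convexity of the outer disks, but here the inner object $D_j$ has no vertices and the outer set $Q_{j-1}'\cup Q_j'\cup Q_{j+1}'$ is not convex, so showing $\partial D_j$ is covered does not automatically give $D_j$ covered, and nothing in the proposal closes that gap. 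As written, the proof does not go through.

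For comparison, the paper avoids the direct containment entirely. It argues by contradiction: if $\Nbrs_j[p]\ne q$, it picks a witness $r\in P$ with $\varphi_j[p,r]<\varphi_j[p,q]$, uses the fact that, as the ray direction rotates towards the side of $pq$ on which $r$ lies, the inequality $\varphi_i[p,r]<\varphi_i[p,q]$ persists for the two preceding (or following) indices, and then applies the strengthened form of Lemma~\ref{Thm:LongEucPoly} to the \emph{three-point} set $\{p,q,r\}$ and the index triple $\{j-2,j-1,j\}$ (or $\{j,j+1,j+2\}$). That yields $\varphi^\poly_{j-1}[p,r]<\varphi^\poly_{j-1}[p,q]$, contradicting $\Nbrs^\poly_{j-1}[p]=q$. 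This per-witness reduction sidesteps the global disk-inside-union-of-polygons geometry that your proposal would have to establish from scratch. If you want to pursue your route, you would in effect have to re-derive that reduction: for each $r\in\intr D_j$, show $r\in\intr Q_{j-1}'$ or $r\in\intr Q_{j+1}'$ --- which is exactly what the three-point application of Lemma~\ref{Thm:LongEucPoly} delivers cleanly, and what your sketch leaves open.
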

\begin{proof} 
Again, assume with no
  loss of generality that $\Nbrs_j^\poly[p]=q$ for $0\leq j\leq h-1$, with
  $h\geq 3$.  Suppose to the contrary that, for some $1\leq j\leq h-2$, we have $\Nbrs_j[p]\neq q$.  Since $\Nbrs^\poly_j[p]=q$
  by assumption, we have $\varphi_j[p,q]\leq
  \varphi_j^\poly[p,q]<\infty$, so there exists $r\in P$ for which
  $\varphi_j[p,r]<\varphi_j[p,q]$.  Assume with no loss of generality
  that $r$ lies to the left of the line from $p$ to $q$. In this case $\varphi_{j-1}[p,r]<\varphi_{j-1}[p,q]<\infty$. Indeed, we have (i) $\Nbrs_{j-1}^\poly[p]=q$ by assumption,
  so $\varphi_{j-1}^\poly[p,q]<\infty$, and (ii) $\varphi_{j-1}[p,q]\leq
  \varphi_{j-1}^\poly[p,q]$. Moreover,
  because $r$ lies to the left of the line from $p$ to $q$, the orientation of $\bisect_{pr}$ lies counterclockwise to that of $\bisect_{pq}$,
  implying that $\varphi_{j-1}[p,q]<\infty$.
  See Figure \ref{Fig:Converse}. Since $u_j[p]$ hits $\bisect_{pr}$ before hitting $\bisect_{pq}$, any ray emanating from $p$ counterlockwise to $u_j[p]$ must do the same, so we have $\varphi_{j-1}[p,r]<\varphi_{j-1}[p,q]$, as claimed. Similarly, we get that either
  $\varphi_{j-2}[p,r]<\varphi_{j-2}[p,q]<\infty$ or
  $\varphi_{j-2}[p,r]\leq \varphi_{j-2}[p,q]=\infty$ (where the latter
  can occur only for $j=1$).  Now applying (the extended version of)
  Lemma~\ref{Thm:LongEucPoly} to the point set $\{p,q,r\}$ and to
  the index set $\{j-2,j-1,j\}$, we get that
  $\varphi^\poly_{j-1}[p,r]<\varphi^\poly_{j-1}[p,q]$. But this
  contradicts the fact that $\Nbrs_{j-1}^\poly[p]=q$.
\end{proof}

\begin{figure}[htbp]
\begin{center}
\input{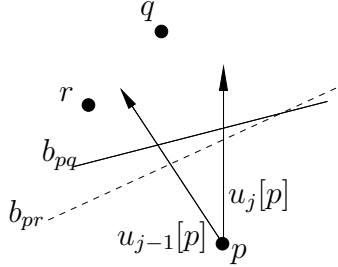}
\caption{\small \sf Proof of Lemma~\ref{Thm:LongPolygEuc}. If $\Nbrs_j[p]\neq q$ because some $r$, lying to the left of the line from $p$ to $r$, satisfies $\varphi_{j}[p,r]<\varphi_{j}[p,q]$. Since $\varphi_{j-1}[p,q]<\varphi_{j-1}^\poly[p,q]<\infty$, we have $\varphi_{j-1}[p,r]<\varphi_{j-1}[p,q]$.}
\label{Fig:Converse}
\end{center}
\end{figure}

\paragraph{Maintaining an SDG using $\mathbf{VD^\poly(P)}$.}
Lemmas~\ref{Thm:LongEucPoly} 
and~\ref{Thm:LongPolygEuc} together imply that an $\SDG$ can be maintained
using the fairly straightforward kinetic algorithm for maintaining 
the whole $\VD^\poly(P)$, provided by Theorem~\ref{Thm:MaintainPolygDT}. 
We use 
$\VD^\poly(P)$ to maintain the graph $\G$ on $P$, whose edges are all 
the pairs $(p,q)\in P\times P$ such that $p$ and $q$ define an edge 
$e^\poly_{pq}$ in $\VD^\poly(P)$ that contains at least seven 
breakpoints. As shown in Theorem \ref{Thm:MaintainPolygDT}, this can 
be done with $O(n)$ storage, $O(\log n)$ update time, and 
$O(k^4n\lambda_r(n))$ updates (for an appropriate $r$).  We claim that $\G$ is a 
$(6\alpha,\alpha)$-$\SDG$ in the Euclidean norm. 

Indeed, if two points $p,q\in P$ define a $6\alpha$-long edge $e_{pq}$ in $\VD(P)$ then
this edge stabs at least six rays $u_j[p]$ emanating from $p$, and at least six rays $u_j[q]$ emanating from $q$.
Thus, according to Lemma~\ref{Thm:LongEucPoly}, $\VD^\poly(P)$ contains the edge $e_{pq}^\poly$ with at least four breakpoints corresponding to corner placements of $Q$ at $p$ that touch $q$, and at least four breakpoints corresponding to corner placements of $Q$ at $q$ that touch $p$. Therefore, $e_{pq}^\poly$ contains at least $8$ breakpoints, so $(p,q)\in \G$.

For the second part, if $p,q\in P$ define an edge $e_{pq}^\poly$ in $\VD^\poly(P)$ with at least $7$ breakpoints then, by the interleaving property of breakpoints, we may assume, without loss of generality, that at least four of these breakpoints correspond to $P$-empty corner placements of $Q$ at $p$ that touch $q$.
Thus, Lemma~\ref{Thm:LongPolygEuc} implies that $\VD(P)$ contains the edge $e_{pq}$, and that this edge is hit by at least two consecutive rays $u_j[p]$.
But then, as observed in Lemma \ref{lem:alpha}, the edge $e_{pq}$ is $\alpha$-long in $\VD(P)$.
We thus obtain the main result of this section.
 
\begin{theorem}\label{Thm:MaintainSDGPolyg}
Let $P$ be a set of $n$ moving points in $\reals^2$ under algebraic
motion of bounded degree, 
and let $\alpha \ge 0$ be a parameter.  A
$(6\alpha,\alpha)$-stable Delaunay graph of $P$ can be maintained by a 
KDS of linear size that processes
 $O(n\lambda_r(n)/\alpha^4)$ events, where $r$ is a
constant that depends on the degree of motion of $P$, and that 
updates the SDG  at each event in $O(\log n)$ time.
\end{theorem}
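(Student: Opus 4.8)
The plan is to assemble the pieces already developed in this section: instantiate the polygonal scheme of Theorem~\ref{Thm:MaintainPolygDT}, read off the desired subgraph from the breakpoint structure of $\VD^\poly(P)$, and certify properties (S1) and (S2) via Lemmas~\ref{Thm:LongEucPoly} and~\ref{Thm:LongPolygEuc}. Concretely, given the parameter $\alpha$ I would take $Q=Q_k$ to be a regular $k$-gon with $k$ an even integer chosen so that $2\pi/k=\alpha$ (rounding $2\pi/\alpha$ to an even integer when necessary, which only affects the absolute constants), so that $k=\Theta(1/\alpha)$. Running the KDS of Theorem~\ref{Thm:MaintainPolygDT} maintains $\VD^\poly(P)$ and $\DT^\poly(P)$ in full, using $O(n)$ storage, $O(\log n)$ time per update, and processing $O(k^4n\lambda_r(n))=O(n\lambda_r(n)/\alpha^4)$ events. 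Alongside it I attach to each current $Q$-Voronoi edge $e_{pq}^\poly$ a counter of its breakpoints, and define $\G$ to be the set of pairs $(p,q)$ whose edge $e_{pq}^\poly$ currently carries at least seven breakpoints. A corner event changes exactly one counter by one; a flip event replaces a breakpoint-free non-corner edge by another breakpoint-free one (and leaves the breakpoint counts of the neighboring edges intact). Hence the membership of the affected edge in $\G$ can be maintained in $O(1)$ additional time per event, leaving the storage and event bounds unchanged.

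For property (S1), I would argue as follows. Suppose $e_{pq}$ is a $6\alpha$-long edge of $\VD(P)$. Since $p$ and $q$ see $e_{pq}$ at equal angles of at least $6\alpha=6\cdot(2\pi/k)$, and the directions $u_j$ are spaced $2\pi/k$ apart, the connected edge $e_{pq}$ is stabbed (in general position) by at least six consecutive rays $u_j[p]$ and, symmetrically, by at least six consecutive rays $u_j[q]$; for each such index $\Nbrs_j(p)=q$, respectively $\Nbrs_j(q)=p$. Applying Lemma~\ref{Thm:LongEucPoly} to these runs of at least six consecutive indices yields at least four indices $j$ with $\Nbrs_j^\poly[p]=q$ and at least four with $\Nbrs_j^\poly[q]=p$, i.e.\ at least four corner placements of $Q$ at $p$ and at least four at $q$, all of which lie on the connected edge $e_{pq}^\poly$. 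Since a breakpoint is a corner contact at exactly one of $p,q$, these placements are distinct, so $e_{pq}^\poly$ carries at least eight breakpoints and $(p,q)\in\G$.

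For property (S2), let $(p,q)\in\G$, so $e_{pq}^\poly$ carries at least seven breakpoints. I would invoke the interleaving property of breakpoints along $\bisect_{pq}^\poly$ established in Section~\ref{Sec:PolygonalBackground}: consecutive breakpoints alternate between corner contacts at $p$ and at $q$, and, since the contact point of $\partial Q$ with $p$ sweeps monotonically along $\partial Q$ as the center slides along $\bisect_{pq}^\poly$, the corner contacts at $p$ occur at \emph{consecutive} vertices of $Q$. Among at least seven alternating breakpoints, at least four are of one type, say corner contacts at $p$; these give $\Nbrs_j^\poly[p]=q$ for at least four consecutive indices $j$. Lemma~\ref{Thm:LongPolygEuc} then yields $\Nbrs_j(p)=q$ for at least the two middle indices, and Lemma~\ref{lem:alpha}(ii) concludes that $e_{pq}$ is $\alpha$-long in $\VD(P)$. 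Combining (S1), (S2), and the cost accounting above proves the theorem.

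I expect the only genuinely delicate point to be the constant-chasing in the last two steps: verifying that a $6\alpha$-long Euclidean edge forces at least seven $Q$-breakpoints and, conversely, that seven $Q$-breakpoints force an $\alpha$-long Euclidean edge. This is exactly where the interleaving of breakpoints, the connectedness of $Q$-Voronoi edges, and the $2\pi/k=\alpha$ spacing of the directions all enter, and where one must be slightly careful with general position when counting how many $u_j[p]$ stab a given arc. Everything else — the KDS itself, the $O(1)$-per-event maintenance of the breakpoint counters and of $\G$, and the substitution $k=\Theta(1/\alpha)$ into the event bound — is routine given Theorem~\ref{Thm:MaintainPolygDT} and Lemmas~\ref{Thm:LongEucPoly} and~\ref{Thm:LongPolygEuc}.
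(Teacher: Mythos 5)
Your argument is essentially the paper's own proof: take $Q$ to be a regular $k$-gon with $k=\Theta(1/\alpha)$, run the KDS of Theorem~\ref{Thm:MaintainPolygDT}, and declare $(p,q)\in\G$ iff $e_{pq}^\poly$ currently has at least seven breakpoints, with (S1) obtained from Lemma~\ref{Thm:LongEucPoly} (a $6\alpha$-long Euclidean edge stabs $\ge 6$ consecutive rays from each endpoint, yielding $\ge 4+4=8$ breakpoints) and (S2) from the alternation of breakpoints plus Lemma~\ref{Thm:LongPolygEuc} and Lemma~\ref{lem:alpha}(ii). The extra bookkeeping you add (breakpoint counters, observing that each corner or flip event perturbs only $O(1)$ counters) is a correct low-level elaboration of what the paper leaves implicit, and the constant-chasing you flag matches the paper's.
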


%\section{Reducing The Dependence on {\large $\alpha$}}
\section{An Improved Data Structure}
\label{Sec:ReduceS}

The data structure of Theorem~\ref{Thm:MaintainSDGPolyg} requires
$O(n)$ storage but the best bound we have on the number of
events it may encounter is $O^*(n^2/\alpha^4)$, which is 
much larger than the number of events encountered by the data structure
of Theorem~\ref{thm:ddj} (which, in terms of the dependence on $\alpha$, is only $O^*(n^2/\alpha)$).  In this
section we present an alternative data structure that requires
$O^*(n/\alpha^2)$ space and $O^*(n^2/\alpha^2)$ overall processing
time.  The structure processes each event in $O^*(1/\alpha)$ time and is also {\it local}, in the sense that each point is stored at only $O^*((1/\alpha)^2)$ places in the structure. 
%This better
%dependence on the ``resolution'' parameter $\alpha$ is obtained at the
%cost of additional multiplicative polylogarithmic factors (hidden in
%the $O^*$ notation). 

\paragraph{Notation.}
We use the directions $u_i$ and the associated quantities $\Nbrs_i[p]$ and  $\varphi_i[p,q]$ defined in
Section \ref{sec:Prelim}. We assume that $k$, the number of canonical directions, is even, and write, as in Section \ref{sec:Prelim}, $k=2s$.
We denote by $C_i$ the cone (or wedge) with apex at the origin that is bounded by
$u_i$ and $u_{i+1}$. Note that $C_i$ and $C_{i\pm s}$ are antipodal.
As before, for a vector $u$, we denote by $u[x]$ the ray emanating from $x$ in direction $u$. Similarly, for a cone $C$ we denote
by $C[x]$ the translation of $C$ that places its apex at $x$.
Let $0\leq \beta\leq \pi/2$ be an angle. For a direction $u \in \sphere^1$
and for two points $p,q \in P$, we say that the edge
$e_{pq}\in \VD(P)$ is \emph{$\beta$-long around  the ray $u[q]$} 
if $p$ is the Voronoi neighbor of $q$ in all directions in the range 
$[u-\beta,u+\beta]$, i.e., for all $v\in[u-\beta,u+\beta]$, the ray 
$v[q]$ intersects $e_{pq}$.
The  {\em $\beta$-cone around  $u[q]$} is the cone whose apex is $q$
and each of its bounding rays makes an angle of $\beta$ with $u[q]$.

\begin{figure}[htbp]
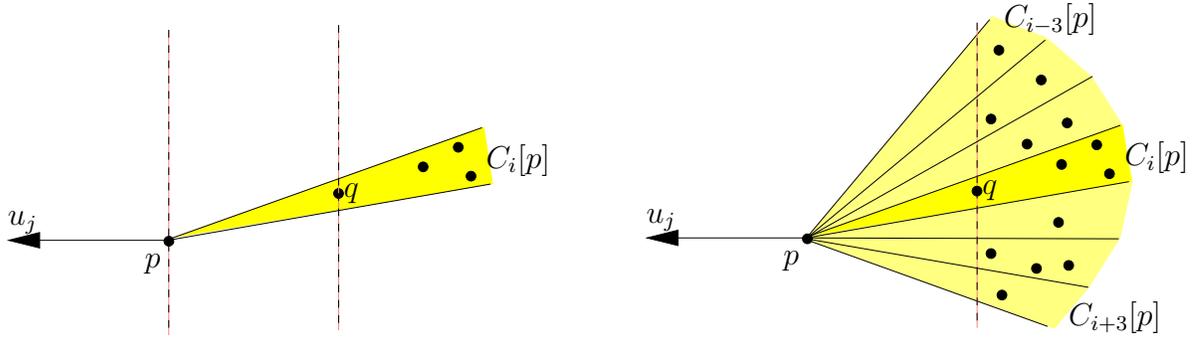

\begin{center}
\input{Jextremal1.pstex_t}\hspace{2cm}\input{StronglyJextremal1.pstex_t}
\caption{\small \sf Left: $q$ is $j$-extremal for $p$. Right: $q$ is \textit{strongly} $j$-extremal for $p$.}
\label{Fig:Jextremal}
\end{center}
\end{figure}

\paragraph{Definition ($j$-extremal points).}
(i) Let $p,q\in P$, let $i$ be the index such that $q\in C_i[p]$,
and let $u_j$ be a direction
 such that
  $\inprod{u_j}{x}\le 0$ for all $x\in C_i$.
We say that $q$ is \emph{$j$-extremal} for $p$ if 
  $q = \arg\max \{\inprod{p'}{u_j}\mid p'\in C_i[p]\cap P\setminus\{p\}\}$.
That is, $q$ is the nearest point to $p$ in this cone, in the $(-u_j)$-direction.
Clearly, a point $p$ has at most $s$ $j$-extremal points,
one for every admissible cone $C_i[p]$, for any fixed $j$. See Figure \ref{Fig:Jextremal} (left).

(ii) For $0\leq i< k$, let $C'_i$ denote the extended cone that
is the union of the seven consecutive cones $C_{i-3},\ldots,
C_{i+3}$. 
Let $p,q\in P$, let $i$ be the index such that $q\in C_i[p]$,
and let $u_j$ be a direction
such that 
 $\inprod{u_j}{x}\le 0$ for all $x\in C'_i$ (such $u_j$'s exist if $\alpha$ is smaller than some appropriate constant).
We say that the point $q\in P$ is \textit{strongly $j$-extremal} for $p$
if $q = \arg\max \{\inprod{p'}{u_j}\mid p'\in C'_i[p]\cap P\setminus\{p\}\}$. 

(iii) We say that a pair $(p,q)\in P\times P$ is {\it (strongly)} $(j,\ell)$-{\it extremal}, for some $0\leq j,\ell \leq k-1$, if $p$ is (strongly) $\ell$-extremal for
$q$ and $q$ is (strongly) $j$-extremal for $p$.

\begin{figure}[htbp]
\begin{center}
\input{necessaryCond.pstex_t} 
\caption{\small \sf Illustration of the setup in Lemma~\ref{Lemma:EmptyCone}: the edge $e_{pq}$ is $\beta$-long around $v[p]$, and the ``tip" $\triangle \sigma^+q\sigma^-$ of the cone $C[q]$ is empty.}
\label{Fig:EmptyCone}
\end{center}
\end{figure}

%In Appendix \ref{AppendSec:ReduceSEmptyCone}, we prove the following lemma, whose proof is illustrated by Figure \ref{Fig:EmptyCone} (left).
\begin{lemma}\label{Lemma:EmptyCone}
Let $p,q\in P$, and let $v$ be a direction such that the
edge $e_{pq}$ appears in $\VD(P)$ and is $\beta$-long around the ray 
$v[p]$. Let $C[q]$ be the $\beta$-cone around the ray from $q$ through 
$p$.  Then $\inprod{p}{v}\geq \inprod{p'}{v}$ for all
$p'\in P\cap C[q]\setminus \{q \}$.
\end{lemma}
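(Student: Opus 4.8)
The plan is to argue by contradiction. Suppose there is a point $p' \in P \cap C[q] \setminus \{q\}$ with $\inprod{p'}{v} > \inprod{p}{v}$. The key geometric idea is that any such $p'$ lies (weakly) on the far side of $p$ with respect to direction $v$, and $p'$ lies inside the narrow cone $C[q]$ of half-angle $\beta$ around the ray $q p$. I want to use these two facts to locate a direction $u$ close to $v$ such that $p'$ "blocks" $q$ from being the Voronoi neighbor of $p$ in direction $u$, i.e., $\varphi[p,p'](u) < \varphi[p,q](u)$, contradicting the hypothesis that $e_{pq}$ is $\beta$-long around $v[p]$ (which asserts $p$ is the Voronoi neighbor of $q$ — equivalently $q$ is the Voronoi neighbor of $p$ — in all directions within angle $\beta$ of $v[p]$).

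First I would set up coordinates with $q$ at the origin and the ray $qp$ along a convenient axis, and record the two facts: (a) the halfplane $h = \{x : \inprod{x}{v} \ge \inprod{p}{v}\}$ contains $p'$ on its closed far side (call its boundary line $\ell$, passing through $p$ perpendicular to $v$); and (b) $p'$ lies in the cone $C[q]$ bounded by the two rays $\sigma^+, \sigma^-$ emanating from $q$ at angle $\beta$ on either side of $qp$. The figure labels suggest introducing $a^\pm = \sigma^\pm \cap \ell$ (or the relevant intersection points $v^\pm$, $b^\pm$) and the direction $v$ rotated by $\pm\beta$. The heart of the argument is then a comparison of the two functions $\varphi_u[p,q]$ and $\varphi_u[p,p']$ as $u$ ranges over the $\beta$-cone around $v[p]$: since $p'$ lies beyond $\ell$ (in the $v$-direction) and inside the $\beta$-cone at $q$, the perpendicular bisector $\bisect_{pp'}$ must cross the portion of $u[p]$ that lies before $\bisect_{pq}$, for some admissible $u$ within angle $\beta$ of $v$. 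Concretely, I would show that the bisector $\bisect_{pp'}$ separates $p$ from $e_{pq}$ along at least one ray $u[p]$ with $u \in [v-\beta, v+\beta]$, using that the angular "spread" of the obstruction caused by $p'$ is controlled precisely by the cone angle $\beta$ of $C[q]$ and by the fact that $p'$ is on the far side of $\ell$.

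The main obstacle, I expect, is making the angular bookkeeping precise: one must verify that the direction $u$ witnessing $\varphi_u[p,p'] < \varphi_u[p,q]$ indeed falls within the closed $\beta$-cone $[v-\beta, v+\beta]$ around $v[p]$, rather than just somewhere near it. This is where the specific choice of $C[q]$ as the $\beta$-cone (same $\beta$ as the long-range parameter) is essential: the half-angle $\beta$ of $C[q]$ at $q$ translates, via the position of $p'$ relative to $\ell$ and to the segment $pq$, into exactly a $\beta$-wide angular window at $p$ in which $p'$ dominates. A clean way to see this is to consider the extreme positions of $p'$ on $\bd C[q]$ (on the rays $\sigma^\pm$) and on $\ell$ — the "corners" $\sigma^\pm \cap \ell$ — and check the claim there; interior positions of $p'$ only make the domination window larger, so the boundary cases suffice. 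Once that window is shown to contain a direction within angle $\beta$ of $v[p]$, the hypothesis that $e_{pq}$ is $\beta$-long around $v[p]$ is violated, completing the contradiction. I would also handle the degenerate boundary case $\inprod{p'}{v} = \inprod{p}{v}$ (with $p'$ on $\ell$ itself) by the same argument, noting it yields $\varphi_u[p,p'] \le \varphi_u[p,q]$ with equality only in nongeneric configurations, which the general-position assumption rules out, so the weak inequality in the statement is exactly what survives.
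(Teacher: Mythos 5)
Your contradiction framework---find a witness direction $u$ within $\beta$ of $v$ along which some $p'$ would displace $q$ as the Voronoi neighbor of $p$---is sound, and it is just the contrapositive of the paper's direct argument; your coordinate setup (line $\ell$ through $p$ orthogonal to $v$, the cone rays from $q$, the rotated directions $v^\pm$) also matches the paper's. But the step you yourself flag as the ``main obstacle''---the angular bookkeeping---is exactly where your sketch breaks down, and it is precisely where the content of the lemma lies. Your plan is to check the claim at the corners $\sigma^\pm\cap\ell$ of the region and then invoke the monotonicity ``interior positions of $p'$ only make the domination window larger.'' That monotonicity is neither obvious nor argued (the domination window of $p'$ is the set of $u\in[v-\beta,v+\beta]$ for which $p'$ lies inside the disk $D_u$ centered at $u[p]\cap e_{pq}$ through $p$ and $q$; why should this set grow as $p'$ moves ``inward,'' and inward along which direction?), and you never actually carry out the corner check.

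What is missing is the one geometric identity that makes the check work. Let $a^+=v^+[p]\cap e_{pq}$ (it exists by $\beta$-longness) and let $\sigma^+$ be the intersection of the upper boundary ray of $C[q]$ with $\ell$. The paper observes that $\angle p a^+\sigma^+=2\beta$ and $\|a^+\sigma^+\|=\|a^+p\|=\|a^+q\|$, so $a^+$ is the circumcenter of $\triangle p\sigma^+q$; the inscribed-angle theorem then gives $\angle\sigma^+qp=\beta$, placing $\sigma^+$ simultaneously on the boundary ray of $C[q]$ and on the empty circle $\partial D_{a^+}$. Convexity of that single empty disk then swallows all of $\triangle pq\sigma^+$ (together with the relevant open edges), and symmetrically $D_{a^-}$ swallows $\triangle pq\sigma^-$, so the entire part of $C[q]$ past $\ell$ is $P$-free. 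This is a single-disk containment argument, not a corner check plus a monotonicity hand-wave, and it is exactly where the precise matching of the two $\beta$'s---which you correctly intuited is essential---is used. As written, your proposal has a genuine gap at this step.
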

\begin{proof}
Refer to Figure \ref{Fig:EmptyCone}. Without loss of generality, we assume that $v$ is the $(+x)$-direction
and that $q$ lies above and to the right of $p$. (In this case the slope of the bisector $\bisect_{pq}$ is negative. Note that $q$ has to
lie to the right of $p$, for otherwise $\bisect_{pq}$ would not cross $v[p]$.) 
Let $v^+$
(resp., $v^-$) be the direction that makes a counterclockwise (resp.,
clockwise) angle of $\beta$ with $v$. Let $a^+$ (resp., $a^-$) be
the intersection  of $e_{pq}$ with $v^+[p]$ (resp., with $v^-[p]$); by assumption, both points exist.
Let $h$ be the vertical line passing through $p$. Let $\sigma^+$
(resp., $\sigma^-$) be the intersection point of $h$ with the ray
emanating from $a^+$ (resp., $a^-$) in the direction opposite to
$v^-$ (resp., $v^+$); see Figure~\ref{Fig:EmptyCone}.

Note that $\angle pa^+\sigma^+=2\beta$, and  that
$\|a^+\sigma^+\|=\|pa^+\|=\|qa^+\|$, i.e., $a^+$ is the circumcenter
of $\triangle p\sigma^+q$.   Therefore $\angle \sigma^+ q p =
\frac12{\angle \sigma^+ a^+p} = \beta$. That is, $\sigma^+$ is the intersection of the upper ray of $C[q]$ with $h$.
Similarly, $\sigma^-$ is the intersection of the lower ray of $C[q]$ with $h$.
Moreover, if there exists a
 point $x\in P$ properly inside the triangle $\triangle pq\sigma^+$
then
$\|a^+x\| < \|a^+p\|$,
contradicting the fact that $a^+$ is on $e_{pq}$. So the interior of $\triangle pq\sigma^+$ (including the relative interiors of edges $pq,\sigma^+q$) is disjoint from $P$.
 Similarly, by a symmetric argument, no points of $P$ lie inside 
$\triangle pq\sigma^-$ or on the relative interiors of its edges $pq,\sigma^-q$.
Hence, the portion of $C[q]$ to the right of $p$ is openly disjoint from $P$, and therefore $p$ is a rightmost point of $P$ (extreme in the $v$ direction) inside $C[q]$.\end{proof}

\begin{corollary}\label{Corol:ExtremalPair}
Let $p,q\in P$. 
\noindent (i) If the edge $e_{pq}$ is $3\alpha$-long in $\VD(P)$
then there are $0\leq j,\ell< k$ for which $(p,q)$
is a $(j,\ell)$-extremal pair.
\noindent(ii) If the edge $e_{pq}$ is $9\alpha$-long in $\VD(P)$
then there are $0\leq j,\ell < k$ for which $(p,q)$
is a strongly $(j,\ell)$-extremal pair.
\end{corollary}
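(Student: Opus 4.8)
The plan is to derive both parts from Lemma~\ref{Lemma:EmptyCone}, applied symmetrically at $p$ and at $q$, together with the observation that a cone $C_i$ of angular width $\alpha$ is contained in a $\beta$-cone around its axis for $\beta = \alpha/2$, and that the seven-cone union $C_i'$ has angular width $7\alpha$, hence is contained in a $\beta$-cone of half-angle $7\alpha/2$ around its axis. So the driving idea is: if $e_{pq}$ is $c\alpha$-long in $\VD(P)$, then it is in particular $\beta$-long around the ray from $p$ through $q$ and around the ray from $q$ through $p$, for $\beta$ a suitable fraction of $c\alpha$, and Lemma~\ref{Lemma:EmptyCone} turns this into an emptiness statement for the relevant cone at the other endpoint, which is exactly what $(j,\ell)$-extremality asserts.

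For part~(i): suppose $e_{pq}$ is $3\alpha$-long. Let $i$ be the index with $q \in C_i[p]$; the axis direction of $C_i$ differs from the direction $\overline{pq}$ by at most $\alpha/2$, so $C_i[p]$ is contained in the $\beta$-cone around $\overline{pq}$ with $\beta = \alpha$ (one gets $\alpha/2$ for the half-width of the cone plus $\alpha/2$ for the axis offset, hence half-angle $\alpha$, so width $2\alpha \le 3\alpha$, and $e_{pq}$ is $\beta$-long around that ray). Here one should note that ``$e_{pq}$ is $3\alpha$-long'' means $p$ is the Voronoi neighbour of $q$ in a range of directions of total width $3\alpha$ centred (by symmetry of the extent) on $\overline{qp}$, and likewise $q$ is the Voronoi neighbour of $p$ in a range of width $3\alpha$ around $\overline{pq}$; this is where the equal-angles characterisation of the extent is used. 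Apply Lemma~\ref{Lemma:EmptyCone} with the roles $(p,q,v)$ where $v$ is the axis of $C_i$: the $\beta$-cone $C[q]$ around $\overline{qp}$ contains $C_i[p]$ and is shown empty of $P$ on the $p$-side except for $q$, so $p$ maximises $\inprod{\cdot}{v}$ over $P \cap C_i[p]$ — wait, more carefully, Lemma~\ref{Lemma:EmptyCone} gives that $p$ is extreme in the $C[q]$ cone at $q$, which contains $C_i[p]$; restricting to $C_i[p] \subseteq C[q]$ still leaves $p$ extreme there. Pick $j$ to be an index with $\inprod{u_j}{x} \le 0$ for all $x \in C_i$ and $u_j$ as close as possible to $v$; then $q$ being $j$-extremal for $p$ is exactly the statement that $q$ maximises $\inprod{\cdot}{u_j}$ over $C_i[p] \cap P \setminus \{p\}$. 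To get this we re-run Lemma~\ref{Lemma:EmptyCone} with the direction $u_j$ (not $v$) in the role of the long-around direction; this is legitimate provided $e_{pq}$ is still $\beta'$-long around $u_j[p]$ for the required $\beta'$, and since $u_j$ lies within $O(\alpha)$ of $\overline{pq}$ the $3\alpha$ slack is enough. Symmetrically, picking $\ell$ for the cone $C_{i'}[q] \ni p$, we get $p$ is $\ell$-extremal for $q$. Hence $(p,q)$ is $(j,\ell)$-extremal.

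For part~(ii): the argument is identical except that the cone $C_i'[p]$ has angular width $7\alpha$, so the axis of $C_i'$ differs from $\overline{pq}$ by at most $7\alpha/2$ and $C_i'[p]$ is contained in the $\beta$-cone around $\overline{pq}$ with $\beta = 7\alpha/2 + \alpha/2 = 4\alpha$; being $9\alpha$-long means $e_{pq}$ is $\beta$-long around $\overline{pq}$ for this $\beta$ (width $8\alpha \le 9\alpha$, with a sliver of slack to absorb the offset of $u_j$ from the axis), and Lemma~\ref{Lemma:EmptyCone} again yields that $p$ is extreme in the $u_j$-direction over $C_i'[p] \cap P$, i.e., $q$ is strongly $j$-extremal for $p$; symmetrically $p$ is strongly $\ell$-extremal for $q$.

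The main obstacle I expect is purely bookkeeping: being careful that the angular widths add up correctly — the half-width of the cone ($\alpha/2$ or $7\alpha/2$), the offset between a canonical direction $u_j$ and the cone axis (another $\le \alpha$ or so), and the requirement in Lemma~\ref{Lemma:EmptyCone} that $e_{pq}$ be $\beta$-long symmetrically around the chosen ray — all have to fit inside the $3\alpha$ (resp.\ $9\alpha$) budget with the factor-of-$3$ headroom that the statement deliberately leaves. One also has to check that an admissible index $j$ (with $\inprod{u_j}{x} \le 0$ for all $x$ in the cone) actually exists and can be chosen close enough to the axis; for part~(ii) this needs $\alpha$ below an absolute constant, as already flagged in the definition. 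None of this is deep, but it is the part where a sloppy constant would break the claim.
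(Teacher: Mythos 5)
Your overall plan---apply Lemma~\ref{Lemma:EmptyCone} symmetrically at $p$ and at $q$, and fold the canonical cone into a $\beta$-cone---is the right one, and is the paper's approach too, but the step that feeds the hypothesis of Lemma~\ref{Lemma:EmptyCone} rests on a false claim. You assert that ``$e_{pq}$ is $3\alpha$-long'' means the arc of directions from $p$ that hit $e_{pq}$ is \emph{centred} on $\overline{pq}$ (``by symmetry of the extent''), and from this you deduce that $e_{pq}$ is $\beta$-long around $\overline{pq}[p]$, or around a canonical $u_j$ chosen near $\overline{pq}$. That centring claim is not true. The extent arc at $p$ is the set of directions $\overline{pw}$ as $w$ ranges over the segment $e_{pq}\subset\bisect_{pq}$; reflecting across $\bisect_{pq}$ matches it with the arc at $q$ and hence gives equality of the two \emph{opening angles}, but says nothing about where the arc at $p$ sits relative to $\overline{pq}$. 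In fact $\overline{pq}$ may lie entirely outside the extent arc---this happens precisely when the midpoint of $pq$ is not on $e_{pq}$---in which case $e_{pq}$ is not $\beta$-long around $\overline{pq}[p]$ for \emph{any} $\beta>0$, and your application of Lemma~\ref{Lemma:EmptyCone} with $v$ near $\overline{pq}$ never gets started.

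The paper sidesteps this by leaving the canonical direction free rather than tying it to $\overline{pq}$: since the extent arc at $p$ has length $\ge 3\alpha$ (resp.\ $\ge 9\alpha$), it contains some canonical $u_\ell$ at angular distance $\ge\alpha$ (resp.\ $\ge 4\alpha$) from both of its endpoints, so $e_{pq}$ is $\alpha$-long (resp.\ $4\alpha$-long) around $u_\ell[p]$---but $u_\ell$ can sit anywhere in the middle of a possibly lopsided arc. Admissibility of $u_\ell$ for the cone at $q$ is then a consequence, not an assumption: $\alpha$-longness around $u_\ell[p]$ forces the angle between $u_\ell$ and $\overline{pq}$ to be $<\pi/2-\alpha$, so $u_\ell$ makes an angle $>\pi/2$ with every direction in the cone $C_{i'}$ at $q$ (all within $\alpha$ of $\overline{qp}$). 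Separately, your write-up also conflates cones at $p$ with cones at $q$ (``$C_i[p]\subseteq C[q]$'' compares sets with different apexes; what is meant is $C_{i'}[q]\subseteq C[q]$) and puts the $\beta$-long hypothesis at the wrong vertex: to make $q$ extreme over $C_i[p]$ you need $e_{pq}$ to be $\beta$-long around $u_j[q]$, not $u_j[p]$, with $u_j$ pointing roughly along $\overline{qp}$, rather than ``as close as possible to $v$'' (the axis of $C_i$, which points roughly along $\overline{pq}$ and is therefore inadmissible, since $\inprod{u_j}{x}\le 0$ for all $x\in C_i$ requires $u_j$ at angle $\ge\pi/2$ from the entire cone $C_i\ni\overline{pq}$).
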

\begin{proof}
To prove part (i), choose $0\leq j, \ell < k$, such that $e_{pq}$ is
$\alpha$-long around each of  $u_{\ell}[p]$ and  $u_{j}[q]$. By Lemma \ref{Lemma:EmptyCone}, $p$ is $u_\ell$-extremal in the $\alpha$-cone $C[q]$ around the ray from $q$ through $p$. Let $i$ be the index such that $p\in C_i[q]$. Since the opening angle of $C[q]$ is $2\alpha$, it follows that $C_i[q]\subseteq C[q]$, so $p$ is $\ell$-extremal with respect to $q$, and, symmetrically, $q$ is $j$-extremal with respect to $p$. To prove part (ii) choose $0\leq
j,\ell< k$, such that $e_{pq}$ is $4\alpha$-long around each of $u_{\ell}[p]$
and $u_{j}[q]$ and apply
Lemma~\ref{Lemma:EmptyCone} as in the proof of part (i).
\end{proof}

\paragraph{The stable Delaunay graph.}
We kinetically maintain a
$(10\alpha,\alpha)$-stable Delaunay graph, whose precise definition is given below,
using a data-structure which is based on a collection of 2-dimensional orthogonal range trees similar to the ones used in \cite{KineticNeighbors}.

%\paragraph{The range trees.}
Fix $0\leq i< s$, and choose a ``sheared" coordinate frame in
 which the rays $u_i$ and $u_{i+1}$ form the $x$- and $y$-axes,
 respectively. That is, in this coordinate frame, $q\in C_i[p]$ if and
 only if $q$ lies in the upper-right quadrant anchored at $p$. 
 
We define a 2-dimensional range tree $\T_i$ consisting of a \textit{primary} balanced binary search tree with the
points of $P$ stored at its leaves ordered by their $x$-coordinates, and of secondary trees, introduced below.
Each internal node $v$ of the primary tree of $\T_i$ is associated with the
\emph{canonical subset} $P_v$ of all points that are stored at the
leaves of the subtree rooted at $v$. A point $p\in P_v$ is said to be
\emph{red} (resp., {\it blue) in $P_v$} if it is stored at the subtree rooted at
the left (resp., right) child of $v$ in $\T_i$. For each primary node
$v$ we maintain a secondary balanced binary search tree $\T_i^v$, whose
leaves store the points of $P_v$ ordered by their $y$-coordinates.
We refer to a node $w$ in a secondary
tree  $\T_i^v$  as a {\em secondary node $w$ of $\T_i$}.

Each node $w$ of a secondary tree $\T_i^v$ is associated with a canonical
subset $P_w\subseteq P_v$ of points stored at the leaves of the
subtree of $\T_i^v$  rooted at $w$. We also associate with $w$ the sets
$R_w\subset P_w$ and $B_w\subset P_w$ of points  residing in the
\emph{left} (resp., \emph{right}) \emph{subtree} of $w$ and are red
(resp., blue) in $P_v$.  It is easy to verify that the
 sum of the sizes of the sets $R_w$ and $B_w$
over all secondary nodes of $\T_i$
 is $O(n\log^2n)$.

For each secondary
node $w\in \T_i$ and each $0\leq j< k$ we maintain the points 
$$
\xi^R_{i,j}(w)=\arg \max_{p\in
R_w}\inprod{p}{u_j}, \quad \xi^B_{i,j}(w)=\arg \max_{p\in B_w}\inprod{p}{u_j},
$$ provided that both $R_w,B_w$ are not empty. 
See Figure \ref{Fig:JLextremalNode}. It is straightforward to show that if $(p,q)$ is a $(j,\ell)$-extremal pair, so that $q\in C_i[p]$,
then there
is a secondary node $w\in \T_i$ for which $q=\xi^B_{i,j}(w)$
 and $p=\xi^R_{i,\ell}(w)$.
 
\begin{figure}[htbp]
\begin{center}
\input{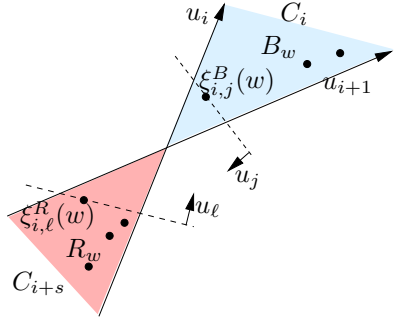}
\caption{\small \sf The points $\xi_{i,\ell}^R(w),\xi_{i,j}^B(w)$ for a secondary node $w$ of $\T_i$.}
\label{Fig:JLextremalNode}
\end{center}
\end{figure}

For each $p\in P$ we construct a set $\NN[p]$ containing all points $q\in
P$ for which $(p,q)$ is a $(j,\ell)$-extremal pair, for some pair of indices $0\leq j,\ell< k$. Specifically, for
each $0\le i < s$, and each secondary node $w\in \T_i$ such that
$p=\xi^R_{i,\ell}(w)$ for some $0 \le \ell < k$, we include in
$\NN[p]$ all the points $q$ such that $q=\xi^B_{i,j}(w)$ for some $0 \le
j < k$. Similarly,
for each $0\le i < s$, and each secondary node
$w\in \T_i$ such that $p=\xi^B_{i,\ell}(w)$ for some $0 \le \ell <
k$ we include in $\NN[p]$ all the points $q$ such that
$q=\xi^R_{i,j}(w)$ for some $0 \le j < k$.
It is easy to verify that, for each $(i,\ell)$-extremal pair $(p,q)$, for some $0\leq j,\ell<k$,
$q$ is placed in $\NN[p]$ by the preceding process. The converse, however, does not always hold, so in general $\{p\}\times\NN[p]$ is a superset of
the pairs that we want. 

For each $0\leq i< s$, each
point $p\in P$ belongs to $O(\log^2n)$ sets $R_w$ and $B_w$, so the size of
$\NN[p]$ is bounded by $O(s^2\log^2n)$. Indeed, $p$ may be coupled with up to $k=2s$ neighbors at each of the $O(s\log^2n)$ nodes containing it.

%\paragraph{The tournaments.}
For each point $p\in P$ and $0\leq \ell < k$ we 
maintain all points in $\NN[p]$ in a kinetic and dynamic
tournament $\dirtour_\ell[p]$ whose winner $q$ minimizes
the directional distance $\varphi_\ell[p,q]$, as given in (\ref{Eq:DirectDist}). That is, 
the winner in $\dirtour_\ell[p]$ is $\Nbrs_\ell[p]$ in the Voronoi diagram of
$\{p\}\cup \NN[p]$. 

We are now ready to define the stable Delaunay graph $\G$ that we maintain.
For each pair of points $p,q\in
P$ we add the edge $(p,q)$ to $\G$ if 
the following hold.
\begin{itemize}
\item[(G1)] There is an index $0\leq \ell < k$ such that $q$ wins the 8
consecutive tournaments $\dirtour_\ell[p],\ldots,\dirtour_{\ell+7}[p]$.
\item[(G2)] The point $p$ is strongly $(\ell+3)$-extremal and strongly
$(\ell+4)$-extremal for $q$.
\end{itemize}

The $(10\alpha,\alpha)$-stability of $\G$ is implied by a combination of Theorems \ref{Thm:Completeness} and \ref{Thm:Soundness}.
\begin{theorem}\label{Thm:Completeness}
For every $10\alpha$-long edge $e_{pq}\in \VD(P)$, the graph $\G$
contains the edge $(p,q)$.
\end{theorem}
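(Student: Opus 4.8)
The goal is to show that if $e_{pq}$ is $10\alpha$-long in $\VD(P)$, then conditions (G1) and (G2) both hold, so $(p,q)\in\G$. I would start from the observation that $e_{pq}$ being $10\alpha$-long means there are at least ten consecutive canonical rays $u_i[p]$ (and, symmetrically, ten consecutive rays $u_i[q]$) that stab $e_{pq}$. Indeed, since consecutive rays are $\alpha$ apart and $e_{pq}$ subtends an angle $\ge 10\alpha$ at $p$, we get (at least) nine consecutive indices $j$ with $\Nbrs_j(p)=q$, and likewise at $q$; with a bit of care one extracts a run of consecutive indices, say $\ell,\ell+1,\dots,\ell+7$, all of which lie strictly inside the angular range where $p$ is the Voronoi neighbor of $q$ and $q$ is the Voronoi neighbor of $p$, with at least one extra stabbing ray on each side.

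For (G1): I need to show that for this index $\ell$, $q$ wins each of the tournaments $\dirtour_{\ell}[p],\dots,\dirtour_{\ell+7}[p]$. The winner of $\dirtour_m[p]$ is the point of $\NN[p]$ minimizing $\varphi_m[p,\cdot]$, i.e., $q$ wins iff $q$ is the true Voronoi neighbor $\Nbrs_m(p)$ \emph{and} $q\in\NN[p]$. The first half is immediate from $e_{pq}$ being $10\alpha$-long (so $\Nbrs_m(p)=q$ for all $m$ in the run, in fact across the whole $[\ell,\ell+7]$ range and a bit more). The second half, $q\in\NN[p]$, is where Corollary~\ref{Corol:ExtremalPair} enters: since $e_{pq}$ is $10\alpha$-long, it is in particular $9\alpha$-long (even $3\alpha$-long), so by Corollary~\ref{Corol:ExtremalPair}(i) there are indices $j,\ell'$ making $(p,q)$ a $(j,\ell')$-extremal pair; then, as noted in the paragraph defining $\NN[p]$ via the range tree $\T_i$, there is a secondary node $w\in\T_i$ (where $C_i[p]$ contains $q$) with $q=\xi^B_{i,j}(w)$ and $p=\xi^R_{i,\ell'}(w)$, hence $q$ is placed into $\NN[p]$ by the construction. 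So $q\in\NN[p]$ and $q$ wins all eight tournaments.

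For (G2): I need $p$ to be strongly $(\ell+3)$-extremal and strongly $(\ell+4)$-extremal for $q$. Here I invoke Corollary~\ref{Corol:ExtremalPair}(ii): since $e_{pq}$ is $10\alpha$-long it is $9\alpha$-long, so $(p,q)$ is a \emph{strongly} $(j',\ell'')$-extremal pair for some indices. But I actually need the specific indices $\ell+3,\ell+4$. This should follow by re-running the argument of Lemma~\ref{Lemma:EmptyCone} and Corollary~\ref{Corol:ExtremalPair}(ii) with the ray $u_{\ell+3}[q]$ (resp.\ $u_{\ell+4}[q]$): because $e_{pq}$ is $10\alpha$-long and $\ell,\dots,\ell+7$ sit in the ``interior'' of its stabbing range, the ray $u_{\ell+3}[q]$ (resp.\ $u_{\ell+4}[q]$) has at least $4\alpha$ of slack on either side within the range where $p=\Nbrs(q)$, i.e.\ $e_{pq}$ is $4\alpha$-long around it; Lemma~\ref{Lemma:EmptyCone} then says $p$ is the extreme point of $P$ in the $4\alpha$-cone $C[q]$ around the ray $q\to p$, and since the extended cone $C'_{i}$ has opening angle $7\alpha < 8\alpha$, it is contained in that $4\alpha$-cone, so $p$ is strongly $(\ell+3)$-extremal (resp.\ $(\ell+4)$-extremal) for $q$. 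One has to check that $\ell+3$ (and $\ell+4$) is an admissible direction for the cone $C'_i[q]$ containing $p$, which is exactly the kind of bookkeeping the corollary's proof already does.

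\textbf{Main obstacle.} The delicate part is the index bookkeeping: extracting from ``$e_{pq}$ is $10\alpha$-long'' a run of exactly eight consecutive indices $\ell,\dots,\ell+7$ such that simultaneously (a) $q$ is the Voronoi neighbor of $p$ in all of them, (b) $(p,q)$ is a $(j,\ell')$-extremal pair witnessing $q\in\NN[p]$, and (c) the middle indices $\ell+3,\ell+4$ are far enough inside the stabbing range of $e_{pq}$ at $q$ (with $\ge 4\alpha$ slack) that Lemma~\ref{Lemma:EmptyCone}'s strong-extremality conclusion applies with the $7\alpha$-wide extended cone. Tracking that $10\alpha = 8\alpha + 2\alpha$ (eight tournament directions plus $\alpha$ of slack on each side) and $9\alpha$ suffices for strong extremality is the arithmetic one must get right; everything else is a direct appeal to Corollary~\ref{Corol:ExtremalPair} and the already-established $\NN[p]$ construction.
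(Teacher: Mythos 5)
Your proof is correct and follows essentially the same route as the paper's: appeal to Corollary~\ref{Corol:ExtremalPair}(i) to get $q\in\NN[p]$, use the $10\alpha$-longness to extract eight consecutive indices $\ell',\dots,\ell'+7$ with $N_{\ell'}[p]=\dots=N_{\ell'+7}[p]=q$ (hence eight tournament wins), and then re-run the proof of Corollary~\ref{Corol:ExtremalPair}(ii) with the specific direction $u_{\ell'+3}$ (resp.\ $u_{\ell'+4}$) to get strong extremality. One small notational slip: the $4\alpha$-longness that feeds into Lemma~\ref{Lemma:EmptyCone} is around the rays $u_{\ell+3}[p]$ and $u_{\ell+4}[p]$, not $u_{\ell+3}[q]$, $u_{\ell+4}[q]$ (the lemma's hypothesis concerns $v[p]$, while its conclusion is about the cone $C[q]$, which you do state correctly); also, the containment $C'_i[q]\subseteq C[q]$ is not merely a comparison of opening angles ($7\alpha$ vs.\ $8\alpha$) but relies on the ray $q\to p$ lying inside $C_i[q]$, so that $C[q]$ extends at least $3\alpha$ beyond each boundary of $C_i[q]$.
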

\begin{proof}
By Corollary~\ref{Corol:ExtremalPair} (i), 
there are $j$ and $\ell$ such that $(p,q)$ is a $(j,\ell)$-extremal pair.
By the preceding discussion this implies that $q$ is in $\NN[p]$.
 Now since $e_{pq}$ is  $10\alpha$-long 
  there is an $\ell'$ such that 
$N_{\ell'}[p],\ldots,N_{\ell'+7}[p]=q$ in $\VD(P)$, and therefore also
in the Voronoi diagram of $\{p\}\cup \NN[p]$.
So it follows that $q$ indeed wins the tournaments
 $\dirtour_{\ell'}[p],\ldots,\dirtour_{\ell'+7}[p]$.

By the proof of Corollary~\ref{Corol:ExtremalPair} (ii), $p$ is
strongly  $(\ell'+3)$-extremal and
strongly $(\ell'+4)$-extremal for $q$.
\end{proof}

\begin{theorem}\label{Thm:Soundness}
For every edge $(p,q)\in \G$, the edge $e_{pq}$ belongs to $\VD(P)$ and is $\alpha$-long there.
\end{theorem}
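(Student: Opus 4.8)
The plan is to read off from (G1) that $q$ behaves like a Voronoi neighbour of $p$ over a wide angular range in the \emph{restricted} diagram, and then to ``lift'' this to the full diagram $\VD(P)$ using (G2). Write $S=\{p\}\cup\NN[p]$ and let $\ell$ be the index witnessing (G1)--(G2). For $j\in\{\ell,\dots,\ell+7\}$ let $D_j$ be the disk through $p$ and $q$ whose centre $a_j:=u_j[p]\cap\bisect_{pq}$ lies on $u_j[p]$. Since the winner of $\dirtour_j[p]$ is exactly the Voronoi neighbour of $p$ in direction $u_j$ in $\VD(S)$, condition (G1) says that $q$ is this neighbour for all eight directions, i.e.\ $\intr(D_j)\cap S=\emptyset$ for $j=\ell,\dots,\ell+7$; and, by convexity of the Voronoi cell of $p$ in $\VD(S)$, the edge $e_{pq}$ of $\VD(S)$ is met by every ray $v[p]$ with $v$ between $u_\ell$ and $u_{\ell+7}$, hence is $3\alpha$-long around each of $u_{\ell+3}[p]$ and $u_{\ell+4}[p]$ in $\VD(S)$.

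Next, by Lemma~\ref{lem:alpha}(ii) it suffices to find two \emph{consecutive} canonical directions in which $q$ is the Voronoi neighbour of $p$ in the full diagram $\VD(P)$ --- this gives simultaneously that $e_{pq}$ is an edge of $\VD(P)$ and that it is $\alpha$-long. I would aim at $u_{\ell+3}$ and $u_{\ell+4}$, i.e.\ show that $\intr(D_{\ell+3})$ and $\intr(D_{\ell+4})$ contain no point of $P$; by the previous paragraph they contain no point of $S$, so only points of $P\setminus S$ need be excluded. Suppose, for contradiction, that $r\in P\setminus S$ (so $r\ne p,q$) lies in $\intr(D_{\ell+3})$ (the case $\intr(D_{\ell+4})$ is symmetric). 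Since $D_{\ell+3}$ passes through $p$ and is centred on $u_{\ell+3}[p]$, every interior point $x$ has $\inprod{x-p}{u_{\ell+3}}>0$, so $\inprod{r}{u_{\ell+3}}>\inprod{p}{u_{\ell+3}}$. Letting $i'$ be the index with $p\in C_{i'}[q]$, condition (G2) gives $p=\arg\max\{\inprod{p'}{u_{\ell+3}}\mid p'\in C'_{i'}[q]\cap P\setminus\{q\}\}$; so to reach a contradiction it suffices to prove $r\in C'_{i'}[q]$, equivalently $\angle pqr<3\alpha$ (since $C'_{i'}$ reaches three cones to either side of the cone $C_{i'}$ containing the direction from $q$ to $p$).

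To establish $\angle pqr<3\alpha$ I would first replace $r$ by the \emph{true} Voronoi neighbour $\Nbrs_{\ell+3}(p)$ in $\VD(P)$ (if that point is $q$ we are already done for $u_{\ell+3}$; otherwise it lies in $\intr(D_{\ell+3})$, is not in $S$ by the first paragraph, and still overtakes $p$ in direction $u_{\ell+3}$). The claim should then follow by combining: the emptiness with respect to $S$ of \emph{all eight} disks $D_\ell,\dots,D_{\ell+7}$; Lemma~\ref{Lemma:EmptyCone} applied to $S$ around the rays $u_{\ell+3}[p]$ and $u_{\ell+4}[p]$ (so that $p$ is $S$-extreme, in directions $u_{\ell+3}$ and $u_{\ell+4}$, within a $3\alpha$-cone about the ray $q\to p$); and the way $\NN[p]$ is assembled from the range trees $\T_i$, which records the extreme point of every cone $C_i[p]$ in every canonical direction, while (G2) records $p$ among the corresponding $\xi$-values on the $q$-side. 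Concretely, in the spirit of the candidate-set arguments of \cite{KineticNeighbors}: if $r$ is not itself a ``cone-extremal partner'' of $p$, one follows a finite chain of strictly-$p$-closer cone-extremal points starting at $r$; the chain stays inside the cone $C_i[p]$ that contains $r$ (hence, since $C_i[p]\cap\intr(D_{\ell+3})$ is an interval of distances from $p$, it stays inside $\intr(D_{\ell+3})$) and terminates at a cone-extremal partner of $p$, which therefore lies in $\NN[p]\subset S$ --- contradicting $\intr(D_{\ell+3})\cap S=\emptyset$. Thus no such $r$ exists, $q=\Nbrs_{\ell+3}(p)=\Nbrs_{\ell+4}(p)$ in $\VD(P)$, and the theorem follows.

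The hard part is precisely this last step. An arbitrary interior point of $D_{\ell+3}$ can be seen from $q$ in almost an entire half-plane of directions, so the bound $\angle pqr<3\alpha$ is far from free: one must use simultaneously all eight emptiness conditions of (G1), the strong extremality of $p$ as seen from $q$ in (G2), and the exact structure of $\NN[p]$ as built from the secondary nodes of the trees $\T_i$, and verify carefully that the chain of ever-$p$-closer cone-extremal points neither leaves the cone $C_i[p]$ nor escapes $\intr(D_{\ell+3})$ before it terminates.
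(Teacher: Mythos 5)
Your setup is right and the first steps track the paper's proof: you correctly reduce to showing that $u_{\ell+3}[p]$ and $u_{\ell+4}[p]$ stab $e_{pq}$ in $\VD(P)$, you correctly read off from (G1) that $\intr(D_j)\cap S=\emptyset$ for $j=\ell,\dots,\ell+7$ where $S=\{p\}\cup\NN[p]$, and you correctly set up the contradiction from a point $r\in P\setminus S$ in the interior of $D_{\ell+4}$ (say). After that, however, your two alternative strategies both fail.

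Your first strategy — showing $\angle pqr<3\alpha$, so that $r\in C'_{i'}[q]$ and directly violates (G2) — is not what the paper does and cannot work. Nothing about $r\in\intr(D_{\ell+4})$ puts $r$ anywhere near the ray from $q$ through $p$. In fact, the paper's Lemma~\ref{Lemma:qExtremal} explicitly exploits the emptiness of the triangle $\triangle\sigma^+q\sigma^-$ carved out by the strong-extremality condition (G2) to conclude that the witness $r$ (chosen with smallest $u_{\ell+4}$-projection among the points of $P$ in $D$) lies \emph{outside} the $3\alpha$-cone at $q$ towards $p$; that is, $r\notin C'_{i'}[q]$, which is the opposite of what your strategy needs. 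What the lemma actually delivers is an angular condition at $r$, not at $q$: it shows $\gamma^\pm=\angle pr\sigma^+,\angle prt^-\geq\alpha$, which forces the canonical cone $C_{i'}[r]$ containing $p$ to lie inside a cone in which $p$ is $u_{\ell+4}$-extreme, i.e.\ $p$ is $(\ell+4)$-extremal for $r$. From there the paper feeds $r$ into the range-tree structure: there is a secondary node $w$ with $p=\xi^R_{i,\ell+4}(w)$ and $r\in B_w$, and it picks $r'=\xi^B_{i,j}(w)\in\NN[p]$ for a carefully chosen $j$.

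Your fallback ``chain'' argument is closer in spirit but the crucial assertion — that the chain ``stays inside $\intr(D_{\ell+3})$'' — is precisely what fails. The paper shows that $r'$ may lie \emph{outside} the disk $D$; this is the hard case, and it is exactly why (G1) demands eight consecutive tournament wins rather than just two or three. In that case the contradiction is not with the tournament $\dirtour_{\ell+3}[p]$ or $\dirtour_{\ell+4}[p]$, but with $\dirtour_{\ell+7}[p]$ (or symmetrically $\dirtour_{\ell+1}[p]$), via the argument that the intersection $z=\bisect_{pr}\cap\bisect_{pr'}$ lies in the cone between $u_{\ell+4}[p]$ and $u_{\ell+7}[p]$, so that $u_{\ell+7}[p]$ meets $\bisect_{pr'}$ before $\bisect_{pq}$. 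Without Lemma~\ref{Lemma:qExtremal} (which converts strong extremality of $p$ for $q$ into extremality of $p$ for some $r$ in $D$ — a statement about the cone at $r$, not at $q$) and without the out-of-disk case handled at $\dirtour_{\ell+7}[p]$, the proof has a genuine gap, which you yourself flag as ``the hard part''.
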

\begin{proof}

Since $(p,q)\in \G$ we know that $q$ is in $\NN[p]$ and wins the tournaments
$\dirtour_\ell[p],\dirtour_{\ell+1}[p],\ldots,\dirtour_{\ell+7}[p]$, for some $0\leq \ell <k$ and
that the point $p$ is strongly $(\ell+3)$-extremal and $(\ell+4)$-extremal for $q$. 
We prove that the rays $u_{\ell+3}[p]$ and $u_{\ell+4}[p]$ stab $e_{pq}$,
from which the theorem follows.

Assume then that
 one of the rays $u_{\ell+3}[p], u_{\ell+4}[p]$ does not stab
 $e_{pq}$; suppose it is the ray $u_{\ell+4}[q]$. (This includes the case when $e_{pq}$ is not present at
 all in $\VD(P)$.)  By definition, this means that
 $r=N_{\ell+4}[p]\neq q$. We use Lemma~\ref{Lemma:qExtremal}, given shortly below, to show that
 $q$ cannot win in at least one of the tournaments among
 $\dirtour_{\ell}[p],\ldots,\dirtour_{\ell+7}[p]$ and thereby get a
 contradiction.

\begin{figure}[htbp]
\begin{center}
\input{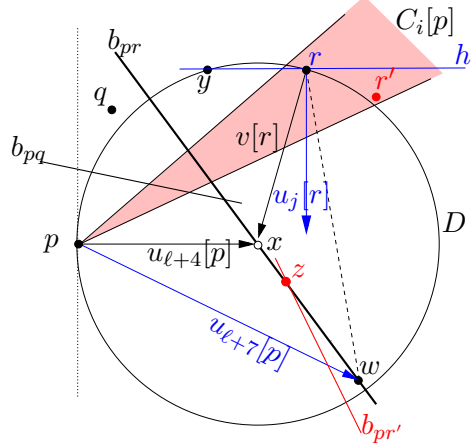}
\end{center}
\caption{\sf \small Proof of Theorem~\ref{Thm:Soundness}: the case when $r$
is to the right of the line from $p$ to $q$. The line $h$
orthogonal to $u_j$ through $r$
intersects the circle $D$ at a point $y$ outside
$C_i[p]$, which implies that  $r'$ 
is to the right of the line from $p$ to $r$.
 Assuming $r\neq r'$, the point $z=\bisect_{pr}\cap
\bisect_{pr'}$ is inside the cone bounded by $u_{\ell+4}[p]$ and
$u_{\ell+7}[p]$. Hence, $u_{\ell+7}[p]$ hits $\bisect_{pr'}$ before
$\bisect_{pr}$.} \label{Fig:Clockwise}
\end{figure}

According to Lemma~\ref{Lemma:qExtremal},
there exists a point $r$ such that $\varphi_{\ell+4}[p,r] < \varphi_{\ell+4}[p,q]$
and  
$p$ is $(\ell+4)$-extremal for $r$. 
Let  $x=u_{\ell+4}[p]\cap \bisect_{pr}$ and let
$D$ be the circle which is centered at $x$, and passes through $r$ and $p$; see Figure \ref{Fig:Clockwise}.

% We show that if $r$ is to the right of the directed line
%from $p$ to $q$ then
%$q$ cannot win in
%$\dirtour_{\ell+9}[p]$ and otherwise $q$ cannot win in
%$\dirtour_{\ell+1}[p]$.

We consider the case where $r$ is to the right of the  line
from $p$ to $q$; 
%(so $\varphi_{\ell+5}[p,r]<\varphi_{\ell+5}[p,q]<\infty,\ldots,\varphi_{\ell+9}[p,r]<\varphi_{\ell+9}[p,q]<\infty$); 
the other case is treated symmetrically.  In this case the
intersection of $\bisect_{pr}$ and $\bisect_{pq}$ is to the left of
the directed line from $p$ to $x$. Let $0\leq i\leq k-1$ be the index for
which $r\in C_i[p]$.  If $i\le s-1$ then there is a secondary
node $w$ in the tree $\T_i$ for which $p\in R_w$ and $r\in B_w$, and
since $p$ is $(\ell+4)$-extremal for $r$, $\xi^R_{i,\ell+4}(w)$ is equal to
$p$. If $i > s$ then, symmetrically, we have a node $w\in \T_{i-s}$
such that $r\in R_w$ and $p\in B_w$ and $\xi^B_{i,\ell+4}(w)$ is equal to
$p$. We assume that $i\le s-1$ in the sequel; the other case is treated in a fully symmetric manner.

Let $v[r]$ be the ray from $r$ through $x$, for an appropriate direction $v\in \S^1$, and let $u_j$ be the direction which lies counterclockwise to $v$ and forms with it an angle of at least $\alpha$ and at most $2\alpha$.
Put
%\footnote{To be pedantic, we must make sure that $\inprod{u_j}{x}<0$ for every $x\in C_i$, which easily follows because the edge
%$e_{pr}$ is $\alpha$-long around $u_j[r]$ in $\VD(\{p,q,r\})$; see Figure \ref{Fig:Clockwise} and the proof of Lemma \ref{Lemma:EmptyCone}.
%Indeed, $\varphi_{\ell+5}[p,r]<\varphi_{\ell+5}[p,q]<\infty$ and $\varphi_{\ell+7}[p,r]<\varphi_{\ell+7}[p,q]<\infty$, so the rays $u_{j+1}[r]$ %and $u_{j-1}[r]$ hit $\bisect_{pr}$ between
%$u_{\ell+5}[p]\cap\bisect_{pr}$ and $u_{\ell+7}[p]\cap\bisect_{pr}$.
%}
$r'=\xi^B_{i,j}(w)$, implying that $r'\in C_i[p]$ and
$\inprod{r'}{u_j}\geq \inprod{r}{u_j}$. 
In particular, $r'$ belongs
to $\NN[p]$. 
If $r'$ is inside $D$ (and in particular if $r'=r$) then $q$ cannot win 
 the tournament $\dirtour_{\ell+4}[p]$ which is the contradiction we are after.
So we may assume that $r'$ is outside $D$.

% We want to show that $\varphi_{\ell+9}[p,r']\leq
%\varphi_{\ell+9}[p,r](<\varphi_{\ell+9}[p,q])$, which would imply that $q$
%cannot win in the tournament $\dirtour_{\ell+9}[p]$. We can assume that
%$r'\neq r$; otherwise the claim  trivially follows.

Let $h$ be the line through $r$ orthogonal to $u_j$.  
Clearly, $h$ intersects $D$ at two points, $r$ and another point
$y$ (lying counterclockwise to $r$ along $\partial D$, by the choice of $u_j$). Since $\angle rpy=\frac{1}{2}\angle rxy$, and $\angle rxy$ equal to twice the angle between $v$ and $u_j$, $\angle rpy$ is at least $\alpha$,
 so $y$ is outside $C_i[p]$.  By assumption, $r'$ lies in the halfplane bounded by $h$ and containing $p$. Since we assume that $r'$
is not in $D$ it must be to the right of the  line
from $x$ to $r$.  It follows that $b_{pr'}$ intersects $b_{pr}$ at
some point $z$ to the right of the  line from $p$ to $x$; see
Figure~\ref{Fig:Clockwise}.

We claim that $z$ is inside the cone with apex $p$ bounded by the rays
$u_{\ell+4}[p]$ and $u_{\ell+7}[p]$. 
%To establish this it suffices 
%to prove that $z$ lies
%inside the $\alpha$-cone around $v[r]$, because $\angle xpz=\angle xrz$.
Indeed, suppose to the contrary that the claim is false.
It follows that in the diagram $\VD(\{r,r',p\})$ the edge
$e_{pr}$ is $\alpha$-long around $u_j[r]$. Indeed, denote the intersection point of $u_{\ell+7}[p]$ and $\bisect_{pr}$ as $w$ (see Figure \ref{Fig:Clockwise}). Then $\angle xrw=\angle xpw= 3\alpha$. Since the angle between $v[r]$ and $u_j[r]$ is between $\alpha$ and $2\alpha$, the claim follows. Now, according to
Lemma~\ref{Lemma:EmptyCone}, $\inprod{r}{u_j}\leq\inprod{r'}{u_j}$,
which contradicts the choice of $r'$. It follows that $z$ is in
the cone bounded by $u_{\ell+4}[p]$ and $u_{\ell+7}[p]$ and thus
$u_{\ell+7}[p]$ hits $b_{pr'}$ before $b_{pr}$, and therefore also before $\bisect_{pq}$. Hence, $q$ cannot win $\dirtour_{\ell+7}[p]$, and we get the final contradiction which completes the proof of the theorem.
\end{proof}

\begin{figure}[htbp]
\begin{center}
\input{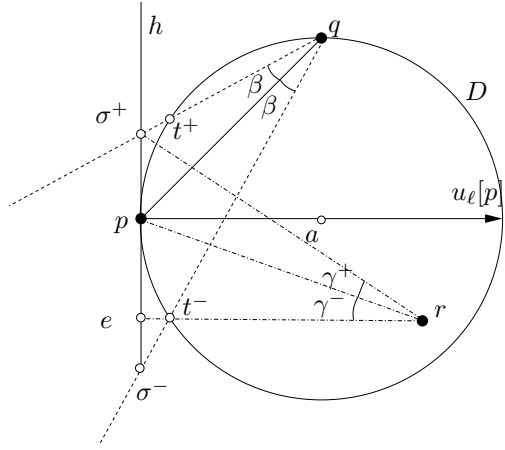}
\caption{\sf \small The proof of Lemma \ref{Lemma:qExtremal}: The point $p$ is strongly $\ell$-extremal for $q$ and
$\ell$-extremal for $r$.}\label{Fig:Extremal1}
\end{center}
\end{figure}

\noindent {\bf Remark:} We have not made any serious attempt to reduce the constants $c$ appearing in the definitions of various $(c\alpha,\alpha)$-$\SDG$s that we maintain. We suspect, though, that they can be significantly reduced.

To complete the proof of Theorem \ref{Thm:Soundness}, we provide the missing lemma.

\begin{lemma}\label{Lemma:qExtremal}
Let $p,q\in P$ be a pair of points and $0\leq \ell \leq k-1$ an index,
such that the point $p$ is strongly $\ell$-extremal for $q$ but
$N_\ell[p]\neq q$.
Then there exists a point $r$ such that $\varphi_\ell[p,r] < \varphi_\ell[p,q]$ and
 $p$ is $\ell$-extremal
for  $r$.
\end{lemma}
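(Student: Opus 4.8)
The plan is to produce the required point as $r:=N_\ell[p]$, the global winner of $p$ in direction $u_\ell$, and to extract its two properties from the empty-disk characterization of Voronoi neighbors together with the angular slack built into the notion of \emph{strong} extremality. First I would check that $\varphi_\ell[p,q]<\infty$, so that $N_\ell[p]$ is meaningful: if $i'$ is the index with $p\in C_{i'}[q]$, then by definition of strong $\ell$-extremality the extended cone $C'_{i'}=C_{i'-3}\cup\dots\cup C_{i'+3}$ lies in the halfplane $\{x:\inprod{u_\ell}{x}\le 0\}$, with $C_{i'}$ as its central cone, so the direction $p-q\in C_{i'}$ makes an angle of at least $\pi/2+3\alpha$ with $u_\ell$; hence $\inprod{q-p}{u_\ell}>0$ and $\varphi_\ell[p,q]<\infty$. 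Since the points are in general position and $N_\ell[p]\ne q$, the point $r:=N_\ell[p]$ then satisfies $\varphi_\ell[p,r]<\varphi_\ell[p,q]$, which is the first assertion of the lemma.

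For the second assertion I would set up the geometry as follows. Because $r=N_\ell[p]$, the ray $u_\ell[p]$ crosses $\bisect_{pr}$ at a point $a$ lying on the Voronoi edge $e_{pr}$, so the disk $D$ centered at $a$ through $p$ and $r$ has $\intr D\cap P=\emptyset$, and $D$ is contained in the analogous (possibly non-empty) disk $D_q$ for the pair $p,q$. Moreover $\varphi_\ell[p,r]<\infty$ gives $\inprod{p-r}{u_\ell}<0$, and since $u_\ell$ is a canonical direction, the cone $C_m$ with $p-r\in C_m$ lies entirely in $\{x:\inprod{u_\ell}{x}\le 0\}$, hence is an admissible cone for the pair $(r,p)$ — this is needed even to state that $p$ is $\ell$-extremal for $r$.

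It then remains to show that $p$ maximizes $\inprod{\cdot}{u_\ell}$ over $C_m[r]\cap P\setminus\{r\}$. Suppose not, and pick $z$ in this set with $\inprod{z}{u_\ell}>\inprod{p}{u_\ell}$. Emptiness of $D$ forces $z\notin\intr D$, i.e.\ $\varphi_\ell[p,z]\ge\varphi_\ell[p,r]$; but because $z-r$ and $p-r$ both lie in the width-$\alpha$ cone $C_m$, a short trigonometric computation shows that $\varphi_\ell[p,z]$ can exceed $\varphi_\ell[p,r]$ by at most an $O(\alpha^2)$-fraction of $\|p-r\|$, so $z$ sits only just outside $D$. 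One then runs an inscribed-angle argument of the type used in the proof of Lemma~\ref{Lemma:EmptyCone} — comparing the disk through $p$ and $z$ centered on $u_\ell[p]$ with the cones anchored at $r$ and at $q$ — to conclude that $z$ in fact lies inside the seven-cone region $C'_{i'}[q]$, the $3\alpha$ of surplus width of $C'_{i'}$ over the single cone $C_{i'}$ carrying $p-q$ being exactly what absorbs this $O(\alpha)$ discrepancy. Since $p$ is strongly $\ell$-extremal for $q$, it maximizes $\inprod{\cdot}{u_\ell}$ over $C'_{i'}[q]\cap P\setminus\{q\}$, so $z\in C'_{i'}[q]\cap P$ with $\inprod{z}{u_\ell}>\inprod{p}{u_\ell}$ is a contradiction. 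Hence $p$ is $\ell$-extremal for $r$. (Should $N_\ell[p]$ itself fail this in some configuration, one iterates: replace $r$ by the $u_\ell$-maximum of $C_m[r]\cap P$; then $\inprod{\cdot}{u_\ell}$ strictly decreases along the resulting sequence and every term stays outside $C'_{i'}[q]$, so the process terminates at a valid $r$.)

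The main obstacle is the middle step of the last paragraph: certifying that a point $z$ which beats $p$ in direction $u_\ell$ from inside the narrow cone $C_m[r]$ must land in the wedge $C'_{i'}[q]$. The naive hope that such a $z$ would lie in the empty disk $D$ fails by an $O(\alpha^2)$ margin, so emptiness of $D$ alone yields nothing; the proof must use the extra $3\alpha$ of width in the seven-cone together with a careful angular comparison of the three circles/cones centered at $p$, $r$ and $q$. This is precisely the delicate computation, and it is what forces the definition of $\G$ to use the ``strong'' rather than the plain notion of extremality.
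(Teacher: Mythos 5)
The proposal sets $r := N_\ell[p]$ and tries to show directly that $p$ is $\ell$-extremal for this $r$, by arguing that any violator $z\in C_m[r]\cap P$ with $\inprod{z}{u_\ell}>\inprod{p}{u_\ell}$ must fall inside $C'_{i'}[q]$ and thereby contradict the strong $\ell$-extremality of $p$ for $q$. You acknowledge yourself that this is ``precisely the delicate computation,'' but you do not carry it out, and I do not believe it can be carried out as stated. The region of $C_m[r]$ lying to the right of $h$ and outside the empty disk $D_r$ consists of thin slivers near the two ends of the chord $C_m[r]\cap h$; these can sit at distance roughly $\alpha\,\|r-p\|$ from $p$ along $h$, while $\|r-p\|$ can be as large as the diameter $2\varphi_\ell[p,q]$, which is $\|q-p\|/\cos\theta_q$ with $\cos\theta_q$ possibly as small as $\sin 3\alpha$. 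The angular deviation of these slivers from the direction $q\to p$, as seen from $q$, is therefore only bounded by a constant, not by $O(\alpha)$, so for small $\alpha$ there is no reason for $z$ to land in the width-$7\alpha$ cone $C'_{i'}[q]$. Emptiness of $D_r$ plus narrowness of $C_m$ simply does not pin $z$ down tightly enough relative to $q$.

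The fallback iteration is also not a proof: you say ``$\inprod{\cdot}{u_\ell}$ strictly decreases'' after replacing $r$ by the $u_\ell$-\emph{maximum} of $C_m[r]\cap P$, which is the opposite of what that replacement does, and ``every term stays outside $C'_{i'}[q]$'' undercuts the very contradiction the argument was supposed to produce; there is no invariant keeping $\varphi_\ell[p,\cdot]<\varphi_\ell[p,q]$ along the sequence, nor any termination argument. You have also not verified that the cone $C_m$ at the chosen $r$ satisfies $\inprod{u_\ell}{x}\le 0$ for $x\in C_m$, which is a prerequisite for ``$p$ is $\ell$-extremal for $r$'' even to be well-formed.

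The paper takes a genuinely different route, and it is worth seeing why it sidesteps all of this. It does \emph{not} take $r=N_\ell[p]$; instead it lets $D$ be the disk centered at $u_\ell[p]\cap\bisect_{pq}$ through $p$ and $q$ and chooses $r$ to be the point of $P\cap D$ that is extreme in the $-u_\ell$ direction (smallest $x$-coordinate). It then uses strong $\ell$-extremality of $p$ for $q$ (via Lemma~\ref{Lemma:EmptyCone}) to conclude that the tip $\triangle\sigma^+q\sigma^-$ of the $3\alpha$-cone at $q$ is empty, so $r$ lies above $q\sigma^+$ or below $q\sigma^-$; the leftmost choice of $r$ then forces the explicit triangle $\triangle\sigma^+re$ to be empty. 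From there it is a pure angle computation: $\gamma^-=\angle prt^-\ge 3\alpha$ by the inscribed-angle inequality, and an explicit extremization over positions of $r$ and $q$ shows $\tan\gamma^+\ge\frac13\tan 3\alpha>\tan\alpha$, so both $\gamma^+$ and $\gamma^-$ exceed $\alpha$ and the width-$\alpha$ cone at $r$ containing $p$ fits inside the empty wedge. This is where the factor $3$ in ``$3\alpha$'' of the strong-extremality definition is actually spent. Your intuition that strong extremality is the enabling hypothesis is right, but the mechanism you propose for exploiting it is missing the key construction (the leftmost-in-$D$ choice of $r$ and the empty triangle) and the quantitative angle bound that makes the argument close.
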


\begin{proof}

Let $0\leq i\leq k-1$ be the index for which $q\in C_i[p]$ and let
$h$ be the 
line through $p$, orthogonal to $u_\ell$.
Assume without loss of generality that $h$ is vertical and
the ray $u_\ell[p]$ extends to the right of $h$.

\begin{figure}[htb]
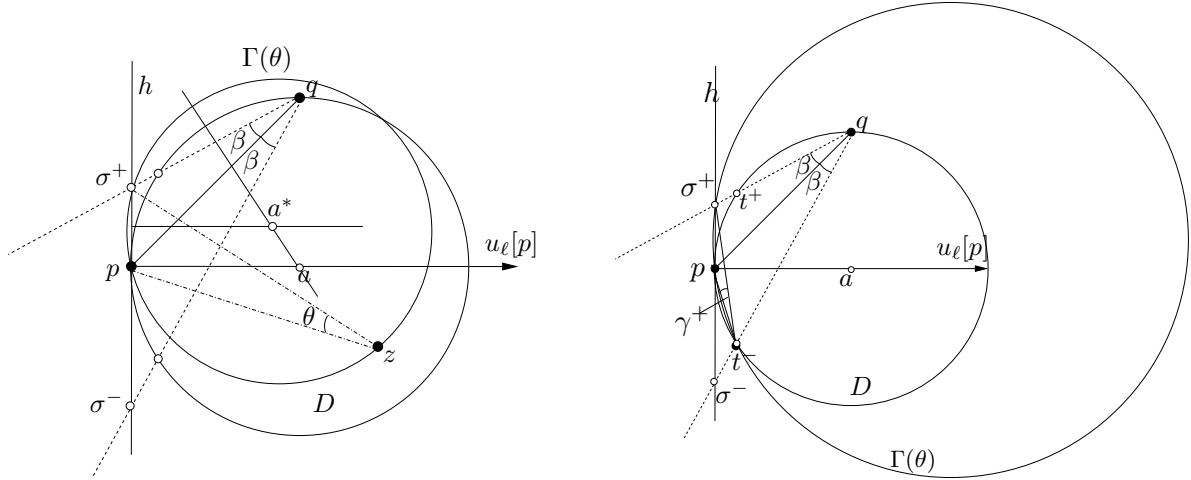

\begin{center}
\input{fix3.pstex_t}\hspace{1cm}\input{fix4.pstex_t}
\caption{\sf \small Left: The circular arc $\Gamma(\theta)$ is the locus of all points
which are to the right of $p\sigma^+$ and see it at angle $\theta$. Right: To minimize $\theta$ we increase the radius of
$\Gamma(\theta)$ until one of its intersection points with $D$
coincides with $t^-$.}\label{Fig:Extremal2}
\end{center}
\end{figure}

Let $a$ be the point at which $u_\ell[p]$ intersects the bisector
$\bisect_{pq}$, and let $D$ be the disk centered at $a$
whose boundary contains both $p$ and $q$. Since
$N_\ell[p]\neq q$, the interior of $D$ must contain some other point $r\in
P$; see Figure~\ref{Fig:Extremal1}. 

Let $C[q]$ be the cone 
emanating from $q$ such that each of its bounding rays makes an angle of
$\beta=3\alpha$ with 
the ray from $q$ through $p$;
 in particular $C[q]$ contains $p$. 
Let $\sigma^+$ (resp., $\sigma^-$) denote the upper (resp., lower) endpoint of the
intersection of  $C[q]$ and  $h$. Since $p$ is strongly $\ell$-extremal for
$q$, the interior of the triangle $\triangle \sigma^+q\sigma^-$ does not contain any
points of $P$. Hence, $r$ must be outside 
the triangle $\triangle \sigma^+q\sigma^-$. So either $r$ is above
 $q\sigma^+$ (and inside $D$) or below $q\sigma^-$ (and inside $D$).

Assume, without loss of generality, that $r$ is below $q\sigma^-$, as shown in
Figure~\ref{Fig:Extremal1}. (The case where $r$ is above
 $q\sigma^+$ is fully symmetric.) 
 Let $t^+$ and $t^-$ denote the intersection points
$q\sigma^+\cap\bd D$ and $q\sigma^-\cap\bd D$, respectively. Let $e$
be the point at which the ray 
from $r$ through $t^{-}$
intersects $h$. Then
the intersection of the triangle $\triangle \sigma^+re$ and $\triangle
\sigma^+q\sigma^-$ is empty.
Among all the points of $P$ in $D$ we choose $r$ so that its $x$-coordinate is
the smallest. For this choice of $r$ we also
have that $\triangle \sigma^+re \setminus \triangle
\sigma^+q\sigma^-$ is empty (since it is contained in $D$ and lies to the left of $r$).
In other words, $\triangle \sigma^+re$ is empty.

Let $\gamma^+$ (resp., $\gamma^-$) denote the angle $\angle pr\sigma^+$
(resp., $\angle pr t^{-}$). It remains to show that
$\gamma^+\geq \frac{1}{3}\beta$ and $\gamma^-\geq \frac{1}{3}\beta$.
This will imply that the cone $C_{i'}[r]$ that contains $p$ is fully contained in the cone bounded by the rays from $r$ through $\sigma^+$ and $t^-$, so $p$ is extreme in the $u_\ell$-direction within $C_{i'}[r]$, which is what the lemma asserts. Since $r$ is inside $D$, it is clear that
$\gamma^-\ge \angle pqt^{-}=\beta$. 
The angle
$\gamma^+$ however may be smaller than $\beta$, but, as we next show,
$\tan\gamma^+ \ge \frac13\tan \beta$.
Indeed, fix an angle $\theta$ and
let $\Gamma(\theta)$ denote the circular arc which is the locus of
all points $z$ that are to the right of $h$ and 
the angle $\angle pz\sigma^+$ is 
$\theta$. The endpoints of $\Gamma(\theta)$ are $p$ and
$\sigma^+$, and its center $a^*$ is on the (horizontal) bisector of
$p\sigma^+$; see Figure~\ref{Fig:Extremal2} (left). 

Notice that $\Gamma(\theta)$ intersects $\bd D$ at two points, one of which is $p$, which
are symmetric with respect to the line 
through $a$ and $a^*$.  As $\theta$ decreases
$a^*$ moves to the right, and the intersection of $\Gamma(\theta)$
with $\bd D$ rotates clockwise around $\bd D$.
Consider the smallest $\theta$ such that $\Gamma(\theta)$ intersects $D$ on or below  $qt^{-}$. It follows that this intersection is at $t^-$.
 See Figure~\ref{Fig:Extremal2} (right).

This shows that
for  fixed  $p$ and $q$, the position of $r$
in $D$ below the line $qt^{-}$ which minimizes $\gamma^+$ is at
$t^{-}$. 
To complete the analysis, we look for the position of $q$ that minimizes $\gamma^+$ when
$r$ is at $t^{-}$. 
Note that, as $q$ moves along $\partial D$, the points $t^+$ and $t^-$ do not change.
As shown in Figure~\ref{Fig:Extremal3} (left),
 $\gamma^+$ decreases when 
 $q$ tends counterclockwise to $t^+$.
When $q$ is at $t^{+}$,
 $q\sigma^+$ is tangent to $D$.
A simple calculation, 
 illustrated in Figure~\ref{Fig:Extremal3} (right), shows that
 $\tan\gamma^+ = \frac13 \tan \beta$. By the
 inequality $\tan (3x) > 3\tan x$, for $x$ sufficiently small, it follows that $\gamma^+ >
 \frac{1}{3}\beta$, implying, as noted above, that the point $p$ is $\ell$-extremal 
 for $r$. This completes the proof of the lemma.
\end{proof}

\begin{figure}[htbp]
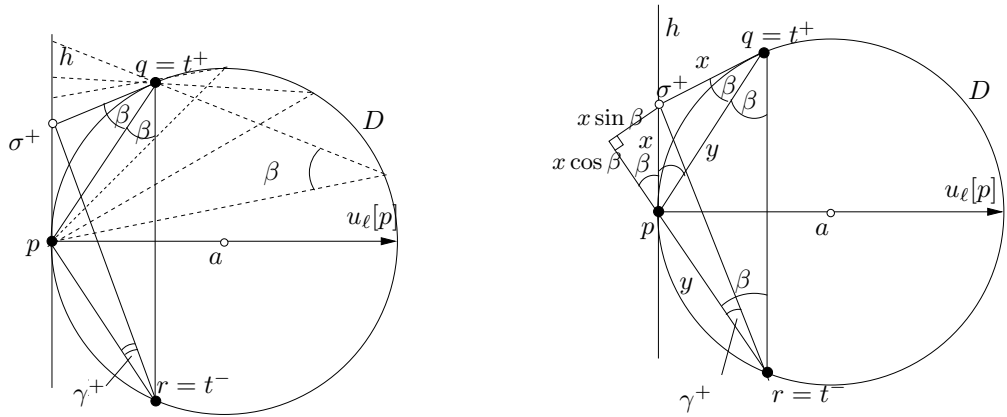

\begin{center}
\input{fix5.pstex_t}\hspace{2cm}\input{fix6.pstex_t} 
\caption{\sf \small Left: $\gamma^+$ is minimized as $q$ tends counterclockwise to $t^+$.
Right: Proving that $\tan \gamma^+=\frac{1}{3}\tan \beta$ when $q=t^+$ and
$r=t^-$. The triangles $\triangle q\sigma^+p$ and $\triangle pqr$ are isosceles and similar, and $y=2x\cos\beta$. Thus $\tan
\gamma^+=\frac{x\sin\beta}{x\cos\beta+y}=\frac{1}{3}\tan\beta$. }\label{Fig:Extremal3}
\end{center}
\end{figure}

In Section \ref{Subsec:ReducedMaintenNaive} we describe a naive algorithm for kinetic maintenance of $\G$, which encounters a total of $O^*(k^4n^2)$ events in the tournaments $\dirtour_\ell[p]$. In Section \ref{Subsec:ReducedMaintenImprove} we consider a slightly more economical definition of the tournaments $\dirtour_\ell[p]$, yielding a solution which processes only $O^*(k^2n^2)$ events in $O^*(k^2n^2)$ overall time.
\subsection{Naive maintenance of $\G$}\label{Subsec:ReducedMaintenNaive}
As the points of $P$ move, we need to update the $\SDG$ $\G$, which, as we recall, contains those edges $(p,q)$ such that $q$ wins $8$ consecutive tournaments $\dirtour_{\ell}[p],\ldots,\dirtour_{\ell+7}[p]$ of $p$, and $p$ is strongly $(\ell+3)$-extremal and $(\ell+4)$-extremal for $q$. We thus need to detect and process instances at which one of these conditions changes. There are several events at which such a change can occur:\\
\indent(a) A change in the sets of neighbors $\NN[p]$, for $p\neq P$.\\
\indent(b) A change in the status of being strongly $\ell$-extremal for some pair $(p,q)$.\\
\indent(c) A change in the winner of some tournament $\dirtour_\ell[p]$ (at which two existing members of $\NN[p]$ attain the same minimum distance in the direction $u_\ell$).

Note that each of the events (a)--(b) can arise only during a swap of two points in one of the $s$ directions $u_0,\ldots ,u_{s-1}$ or in one of the directions orthogonal to these vectors. 

For each $0\leq i\leq s-1$ we maintain two lists. The first list, $L_i$, stores the
points of $P$ ordered by their projections on a line in the $u_i$-direction, and the second list, $K_i$, stores the points ordered by their projections on a line orthogonal to the $u_i$-direction. We note that, as long as the order in each of the $2s$ lists $K_i,L_i$ remains unchanged, the discrete structure of the range trees $\T_i$, and the auxiliary items $\xi_{i,\ell}^R(w),\xi_{i,j}^B(w)$, does not change either. More precisely, the structure of $\T_i$ changes only when two consecutive elements in $K_i$
or in $K_{i+1}$ swap their order in the respective list; whereas the auxiliary items $\xi_{i,j}^R(w),\xi_{i,j}^B(w)$, stored at secondary nodes of $\T_i$, may also change when two consecutive points swap their order in the list $L_j$. 
There are $O(sn^2)=O(n^2)$ discrete events where consecutive points in $K_i$ or $L_i$
swap. We call these events {\em $K_i$-swaps} and \textit{$L_i$-swaps}, respectively. Each such event happens
when the line trough a pair of points becomes orthogonal or parallel to $u_i$. We
can maintain each list in linear space for a total of $O(sn)$ space
for all lists. Processing a swap takes $O(\log n)$ time to replace a
constant number of elements in the event queue (and more time to update the various structures, as discussed next).

\smallskip

\noindent{\bf The range trees $\T_i$.}
As just noted, the structure of $\T_i$ changes either at a $K_i$-swap or at a
$K_{i+1}$-swap. As described in \cite[Section 4]{KineticNeighbors}, we
can update $\T_i$ when such a swap occurs, including the various auxiliary data that it stores, in $O(s\log^2 n)$ time. 
(The factor $s$ is due to the fact that we maintain
$O(s)$ extreme points $\xi_{i,\ell}^B(w)$ and $\xi_{i,j}^R(w)$ in each
secondary node $w$ of $\T_i$, whereas in \cite{KineticNeighbors} only
two points are maintained.)

In a similar manner, an $L_j$-swap of two points $p,q$ may affect one of the items $\xi_{i,j}^B(w)$
and $\xi_{i,j}^R(w)$ stored at any secondary node $w$ of any $\T_i$, for $0\leq i\leq s-1$, such that both $p,q$ belong to $R_w$ or to $B_w$. Each $\T_i$ has only $O(\log^2n)$ such nodes, and the data structure of \cite{KineticNeighbors} allows us to
update $\T_i$, when an $L_j$-swap occurs in $O(\log^2 n)$ time. Summing up over all $0\leq i\leq s-1$, we get
that the total update time of the range trees after an $L_j$-swap is
$O(s\log^2 n)$.  As follows from the analysis in~\cite[Section 4]{KineticNeighbors}, the
trees $\T_i$, for $0\leq i\leq s-1$, require a total of $O(s^2n\log
n)$ storage (because of the $O(s)$ items
$\xi_{i,\ell}^B(w),\xi_{i,j}^R(w)$ stored at each secondary node of each of the $s$ trees).

\smallskip

\noindent{\bf The tournaments $\dirtour_\ell[p]$.}
The kinetic tournament $\dirtour_\ell[p]$, for $p\in P$ and $0\leq
\ell\leq k-1$ contains the points in the set $\NN[p]$.  Since $\NN[p]$
varies both kinetically and dynamically and therefore the tournaments
$\dirtour_\ell[p]$ need to be maintained as kinetic and dynamic tournaments, in the manner reviewed in Section \ref{sec:Prelim}. 

For $0\leq i\leq s-1$, we define $\Pi_{i}$ to be
the set of pairs of points $(p,q)$, such that there exists a
secondary node $w$ in $\T_i$, and indices $0\leq j,\ell\leq k-1$, for which
$p=\xi_{i,\ell}^R(w)$ and $q=\xi_{i,j}^B(w)$.
For a fixed $i$, a point $p$ belongs to $O(s\log^2n)$ pairs $(p,q)$ in
$\Pi_{i}$, for a total of $O(s^2\log^2n)$ pairs over all sets $\Pi_{i}$.  It follows that the total size of all the
sets $\Pi_i$ is $O(s^2n\log^2n)$.  Any secondary node of any tree
$\T_i$, for $0\leq i\leq s-1$, contributes at most $O(s^2)$ pairs to
the respective set $\Pi_{i}$.
%\footnote{A more careful
%argument shows that the total size of the sets $\Pi_i$ is
%$O(s^2n\log n)$. Indeed, the number of secondary nodes in the
%trees $\T_i$
%of subtree whose size is $s$ or larger is $O(n\log n)$, each such node 
%contributes at most $s^2$ pairs to the corresponding set $\Pi_i$. The contribution
%of other secondary nodes is smaller.}

The set $\NN[p]$ consists of all the points $q$ such that there exists a set
$\Pi_{i}$ that contains the pair $(p,q)$ or the pair $(q,p)$.  So the
total size of the sets $\NN[p]$, over all points $p$, is $O(s^2 n\log^2
n)$. A set $\NN[p]$ changes only when one of the sets $\Pi_{i}$ changes,
which can happen only as the result of a swap.

Specifically, when $\xi_{i,\ell}^R(w)$ changes for some $0\leq i\le
s-1$ and $0\le \ell \leq k-1$, from a point $p$ to a point $p'$, we
make the following updates.  
(i) If $p\not= \xi_{i,\ell'}^R(w)$ for all
$\ell' \not= \ell$ then for every $0\le j \le k-1$ we delete the pair
$(p,\xi_{i,j}^B(w))$ from $\Pi_{i}$. (ii) We add the
pair $(p',\xi_{i,j}^B(w))$ to $\Pi_{i}$.  We make analogous updates
when one of the values $\xi_{i,j}^B(w)$ changes.  When a node $w$ is created, deleted,
or involved in a rotation, we update the pairs ($\xi_{i,\ell}^B(w)$,
$\xi_{i,j}^R(w)$) in $\Pi_i$ for every $\ell$ and $j$. In such a case
we say that node $w$ {\em is changed}.

A change of $\xi_{i,\ell}^R(w)$ or $\xi_{i,j}^B(w)$ in an existing
node $w$ generates $O(s)$ changes in $\Pi_{i}$ and thereby $O(s)$
changes to the sets $\NN[p]$.  Thus, it may generate $O(s^2)$ updates to the
tournaments $\dirtour_\ell[p]$. A change of a secondary node may
generate $O(s^2)$ changes to the sets $\NN[p]$ and thereby $O(s^3)$ updates
to the tournaments $\dirtour_\ell[p]$.

A point $\xi_{i,\ell}^R(w)$ or $\xi_{i,\ell}^B(w)$ changes during either
a $K_i$, $K_{i+1}$, or $L_\ell$-swap.  Each $L_\ell$-swap, for any $\ell$, causes $O(s\log^2n)$
points $\xi_{i,\ell}^R(w)$ or $\xi_{i,\ell}^B(w)$ to change (over the entire collection of trees), and
therefore each swap causes $O(s^3 \log^2 n)$ updates to the tournamnets
$\dirtour_\ell[p]$.  The number of nodes which
change in $\T_i$ by a $K_i$ or $K_{i+1}$-swap is $O(\log^2n)$. Each
such change causes $O(s^3)$ updates to the tournaments $\dirtour_\ell[p]$.
Therefore the total number of updates to tournaments due to changes of
nodes is also $O(s^3\log^2n)$ per swap.

The number of swaps is $O(sn^2)$, so overall we get $O(s^4n^2\log^2n)$
updates to the tournaments.  The size of each individual tournament is
$O(s^2 \log^2 n)$.
By Theorem \ref{thm:kinetic-tour} these updates generate
$$
O(s^4n^2\log^2n \cdot\beta_{r+2}(s^2\log^2 n)\log(s^2\log^2n))
=O(s^4n^2\beta_{r+2}(s\log n)\log^2n\log(s\log n))
$$
tournament
events, which are processed in
$$
O(s^4n^2\log^2 n\cdot\beta_{r+2}(s^2\log^2 n)\log^2(s^2\log^2n))
=O(s^4n^2\cdot \beta_{r+2}(s\log n)\log^2n\log^2(s\log n))
$$ 
time.
Processing each individual tournament event takes $O(\log^2\log
n+\log^2 s)$ time.
 
Since the size of each tournament is $O(s^2\log^2n)$ and there are $O(ns)$ tournaments, the
total size of all tournaments is $O( s^3 n\log^2 n)$.

\smallskip

\noindent{\bf Testing whether $p$ is strongly $\ell$-extremal for the
  winner of $\dirtour_\ell[p]$.}
For each $0\leq i\leq s-1$, and for each pair $(p,q)\in \Pi_{i}$ we maintain
those indices $0\le \ell\le k-1$ (if there are any) for which
$p$ is  strongly $\ell$-extremal for $q$.
Recall that each point $p$ belongs to $O(s^2\log^2 n)$ pairs in the sets
$\Pi_{i}$.

We use the trees $\T_j$ for $i-3 \le j\le i+3$ to find, for a query 
$q$, the  point $\arg\max_{q'\in P\cap
  C'_{i}[q]}\inprod{q'}{u_\ell}$, for each $0\le \ell\le k-1$. The query time is $O(s\log^2n)$ 
Using this information we easily determine,
 for a pair $(p,q)$,
for which values of $\ell$
$p$ is  strongly $\ell$-extremal for $q$.

%For this purpose we maintain for every  $0\leq i\leq s-1$
%a  (kinetic) two dimensional range tree $\T'_i$ 
%which, given a
%query point $q$, can
%efficiently report the point $\arg\max_{p\in P\cap
%  C'_{i+s}[p]}\inprod{p}{u_\ell}$, for each $0\le \ell\le k-1$. 
%By performing such a query we can determine, for a pair $(p,q)$,
%for which values of $\ell$
%$p$ is  strongly $\ell$-extremal for $q$.
%These trees are simpler variants of the trees $\T_i$. The tree $\T_i'$ uses the orientations of the rays bounding the larger cone $C_i'$ as the coordinate directions. These trees can be maintained kinetically using the algorithm of \cite{bgh-dsmd-99} 
%or the algorithm of  \cite{KineticNeighbors}. They require
%$O(s^2n\log n)$ space and can
%answer a query  in $O(s\log^2n)$
%time.

As explained above, every swap changes
$O(s^2\log^2n)$ pairs of the sets $\Pi_{i}$.
When a new pair is added to a set  $\Pi_{i}$ we query the trees
$\T_j$, $i-3 \le j\le i+3$, to find for which values of $\ell$, $p$ is
strongly $\ell$-extremal for $q$ (and vice versa).
This takes a total of $O(s^3\log^4n)$ time for each
swap.

Furthermore, a point $p$ can cease (or start) being  strongly
$\ell$-extremal for $q$ only during a swap which involves either $p$ or $q$. 
So when we process a swap 
between $p$ and some other point we
recompute, for all pairs $(p,x)$ and $(x,p)$ in the current 
sets $\Pi_{i}$ and for every
$0\le \ell \le k-1$,
 whether $p$ is  strongly $\ell$-extremal for $x$, and whether $x$ remains  strongly $\ell$-extremal for $p$.
This adds an overhead of
$O(s^3\log^4n)$ time at each
swap. 

The following theorem summarizes the results obtained so far in this section.

\begin{theorem}
The $\SDG$ $\G$ can be maintained using a data structure which
requires 
$O\left(\left(n/\alpha^3\right) \log^2 n\right)$ space
 and encounters two types of
events: swaps and tournament events.\\ 
There are $O(n^2/\alpha)$ swaps, 
each processed in $O\left(\log^4n/\alpha^3\right)$ time. 
There are
$$
 O\left(\left(n^2/\alpha^4\right)\log^2n \beta_{r+2}(\log n/\alpha)\log(\log n/\alpha)\right)
$$  tournament 
events which are processed in overall
$$
O\left(\left(n^2/\alpha^4\right)\log^2 n\beta_{r+2}(\log n/\alpha)\log^2(\log n/\alpha)\right)
$$ 
time.
Processing each individual tournament event takes  $O(\log^2\log n+\log^2 (1/\alpha))$ time.
\end{theorem}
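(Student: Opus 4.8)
The plan is to assemble the components analyzed in this section into a single accounting, and then substitute $k=2s=\Theta(1/\alpha)$ (so that $s=\Theta(1/\alpha)$) at the very end. Correctness --- that the maintained graph is exactly the $\G$ defined by (G1)--(G2), which is a $(10\alpha,\alpha)$-$\SDG$ by Theorems~\ref{Thm:Completeness} and~\ref{Thm:Soundness} --- follows from the observation that any change to $\G$ is triggered by one of the events (a)--(c): events (a) and (b) occur only at a $K_i$- or $L_i$-swap (a pair of points becoming parallel or orthogonal to some $u_i$), and event (c) is precisely a tournament event in some $\dirtour_\ell[p]$; all of these are scheduled in the global event queue. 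So the real content of the proof is the quantitative part, which I would carry out in three blocks.

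\emph{Space.} The $2s$ sorted lists $K_i,L_i$ use $O(sn)$; the range trees $\T_i$, $0\le i<s$, use $O(s^2 n\log n)$ in total, since each secondary node stores the $O(s)$ extreme points $\xi^R_{i,j}(w),\xi^B_{i,j}(w)$; the sets $\Pi_i$, the neighbor sets $\NN[p]$, and the lists of strongly-$\ell$-extremal indices attached to the pairs of $\bigcup_i\Pi_i$ contribute $O(s^3 n\log^2 n)$; and the $O(ns)$ kinetic tournaments $\dirtour_\ell[p]$, each of size $O(s^2\log^2 n)$, contribute $O(s^3 n\log^2 n)$. The dominant term is $O(s^3 n\log^2 n)=O\big((n/\alpha^3)\log^2 n\big)$.

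\emph{Swaps.} There are $O(sn^2)=O(n^2/\alpha)$ swaps. Each costs $O(\log n)$ to refresh the event queue, $O(s\log^2 n)$ to update the range trees across all $\T_i$ (invoking the machinery of \cite{KineticNeighbors}), and --- the dominant term --- $O(s^3\log^4 n)$ to recompute, for the $O(s^2\log^2 n)$ pairs of $\bigcup_i\Pi_i$ affected by the swap, their strongly-$\ell$-extremal status via range queries into $\T_{i-3},\dots,\T_{i+3}$ (each query $O(s\log^2 n)$, for all $0\le\ell<k$); issuing the tournament updates generated by the swap is likewise subsumed in this term. Hence each swap is handled in $O(s^3\log^4 n)=O(\log^4 n/\alpha^3)$ time.

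\emph{Tournament events.} Here I would trace the cascade carefully: a swap changes $O(s\log^2 n)$ of the extreme points $\xi^{R/B}_{i,\cdot}(w)$ over all trees, each change induces $O(s)$ modifications of the $\NN$-memberships and thus $O(s^2)$ tournament updates, while a changed (rotated/created/deleted) secondary node costs an extra factor of $s$ but there are only $O(\log^2 n)$ of those per swap; together each swap generates $O(s^3\log^2 n)$ tournament updates, for a grand total of $m=O(s^4 n^2\log^2 n)$ insertions and deletions spread over the tournaments, each of maximum size $N=O(s^2\log^2 n)$. Feeding $m$ and $N$ into Theorem~\ref{thm:kinetic-tour} gives $O(m\,\beta_{r+2}(N)\log N)$ tournament events processed in $O(m\,\beta_{r+2}(N)\log^2 N)$ total time, each individual event in $O(\log^2 N)$ time; using $\beta_{r+2}(s^2\log^2 n)=O(\beta_{r+2}(s\log n))$, $\log N=O(\log(s\log n))$, and $\log^2 N=O(\log^2\log n+\log^2 s)$, and substituting $s=\Theta(1/\alpha)$, yields exactly the stated bounds. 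I expect the main (and only genuinely delicate) point to be pinning down the $O(s^3)$ per-swap tournament-update factor in this cascade --- i.e., keeping the ordinary $\xi$-change case and the node-rotation case cleanly separated --- since an error here directly shifts the exponent of $1/\alpha$ in the final bound.
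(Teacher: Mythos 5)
Your proposal is correct and follows the paper's own accounting essentially line by line: the same space decomposition (lists, range trees, $\Pi_i$/$\NN[p]$/strong-extremality data, and the $O(ns)$ tournaments of size $O(s^2\log^2 n)$ each, with the last two dominating at $O(s^3n\log^2 n)$), the same $O(s^3\log^4 n)$ per-swap cost dominated by the strong-extremality range queries, and the same cascade from a swap through $O(s\log^2 n)$ $\xi$-changes and $O(\log^2 n)$ node changes to $O(s^3\log^2 n)$ tournament updates, fed into Theorem~\ref{thm:kinetic-tour} with $m=O(s^4n^2\log^2 n)$ and $N=O(s^2\log^2 n)$. The one spot you flagged as delicate --- cleanly separating the per-$\xi$-change $O(s^2)$ factor from the per-node-change $O(s^3)$ factor --- is indeed exactly where the paper spends its care, and you handled it the same way.
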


\subsection{An even faster data structure}\label{Subsec:ReducedMaintenImprove}

We next reduce the overall time and space required to maintain $\G$
roughly by factors of $s^2$ and $s$, respectively (bringing the dependence on $s$ of both bounds down to roughly $s^2$).  We achieve that by
restricting each tournament $\dirtour_\ell[p]$ to contain a carefully chosen
subset $\NN_\ell[p]\subseteq \NN[p]$ of size $O(s\log^2n)$ (recall that the size of the entire set $\NN[p]$ is $O(s^2\log^2n)$).
%, for all $p\in
%P$ and $0\leq \ell\leq k-1$.
  The definition of $\NN_\ell[p]$ is based on
the following lemma. Its simple proof is given in Figure \ref{Fig:Compatible}.

\begin{lemma}\label{Lemma:Compatible}
Let $p,q\in P$ and let $i$ be the index for which 
$q \in C_i[p]$. Let $0\leq \ell\leq k-1$ be an index, and $v\in \mathbb{S}^1$ a direction such that
the rays $u_\ell[p]$ and $v[q]$ intersect
$\bisect_{pq}$ at the same point. 
Then $v$ lies in one of the two consecutive cones $C_{\zeta(i,\ell)},C_{\zeta(i,\ell)+1}$, where $\zeta(i,\ell)=2i+s-\ell$.
\end{lemma}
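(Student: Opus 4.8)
The plan is to translate everything into polar angles and to use a single reflection. Let $x$ be the common point at which the rays $u_\ell[p]$ and $v[q]$ meet $\bisect_{pq}$ (its existence is part of the hypothesis). Since $x\in\bisect_{pq}$, the triangle $pxq$ is isosceles; more to the point, the reflection $\rho$ across the line $\bisect_{pq}$ fixes $x$ and interchanges $p$ and $q$, hence it carries the ray from $p$ through $x$ onto the ray from $q$ through $x$. The direction of the first ray is $u_\ell$ and that of the second is $v$, so, passing to linear parts, $v$ is the image of $u_\ell$ under reflection in the line through the origin parallel to $\bisect_{pq}$. The whole lemma is then just bookkeeping of the angle of this reflected direction.

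First I would record the conventions: $\alpha=\pi/s=2\pi/k$, the direction $u_j$ has polar angle $-j\alpha$, and consequently the cone $C_m$ is exactly the set of directions whose polar angle lies in $[-(m+1)\alpha,\,-m\alpha]$. The assumption $q\in C_i[p]$ says that the segment $pq$ has polar angle in $[-(i+1)\alpha,\,-i\alpha]$, so the line $\bisect_{pq}$, being orthogonal to $pq$, has polar angle $\mu$ with $\mu\in[-(i+1)\alpha+\tfrac{\pi}{2},\,-i\alpha+\tfrac{\pi}{2}]$ (modulo $\pi$). Then I would invoke the elementary fact that reflecting a direction of polar angle $\psi$ in a line of polar angle $\mu$ gives a direction of polar angle $2\mu-\psi$: applying it with $\psi=-\ell\alpha$ shows that the polar angle of $v$ equals $2\mu+\ell\alpha$, and therefore ranges over an interval of length $2\alpha$ with endpoints $-(2i+2)\alpha+\pi+\ell\alpha$ and $-2i\alpha+\pi+\ell\alpha$. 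Substituting $\pi=s\alpha$ and reducing the cone index modulo $k=2s$, these endpoints become $-(\zeta(i,\ell)+2)\alpha$ and $-\zeta(i,\ell)\alpha$, where $\zeta(i,\ell)=2i+s-\ell$. Since the angular interval $[-(\zeta+2)\alpha,\,-\zeta\alpha]$ is precisely $C_\zeta\cup C_{\zeta+1}$, it follows that $v\in C_{\zeta(i,\ell)}\cup C_{\zeta(i,\ell)+1}$, which is the claim.

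The only point requiring any care — and essentially the only thing worth spelling out — is the arithmetic of angles modulo $2\pi$ (with the line angle $\mu$ taken modulo $\pi$): one must verify that the shift ``$+s$'' coming from $\pi=s\alpha$ and the reduction ``$-2s\equiv 0\pmod{k}$'' combine to give exactly $\zeta(i,\ell)=2i+s-\ell$, with no off-by-one in the index. A single labelled figure showing $p$, $q$, $x$, the isosceles triangle $pxq$, and the two rays $u_\ell[p]$ and $v[q]$ makes this transparent, so the proof can reasonably be carried out in the figure; nothing else demands a computation.
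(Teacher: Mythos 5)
Your proof is correct and follows essentially the same route as the paper's: the key fact in both is the isosceles symmetry at the common point of $\bisect_{pq}$, from which the orientation of $v$ is obtained by elementary angle arithmetic. Your reformulation via the reflection across $\bisect_{pq}$ (so that $v$ is the mirror image $2\mu-\psi$ of $u_\ell$ in the line direction of the bisector) is a slightly tidier way of organizing the same angle chase that the paper carries out in its figure caption, and it correctly accounts for the $-s\mapsto +s$ shift that comes from reducing the cone index modulo $k=2s$.
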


\begin{figure}[htbp]
\begin{center}
\input{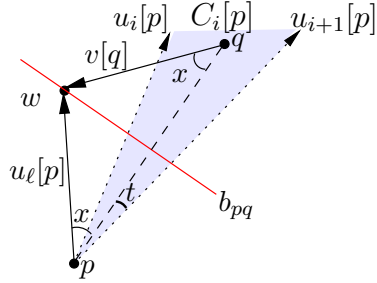}
\caption{\sf \small Proof of Lemma \ref{Lemma:Compatible}: We assume that $q\in C_i[p]$, and that the rays $u_\ell[p]$ and $v[q]$ hit $\bisect_{pq}$ at the \textit{same point} $w$. Then the angle $x=\angle wpq=(i+1-\ell)\alpha-t$, for some $0\leq t\leq \alpha$. The orientation of $\overline{qp}$ is $(i+1)\alpha-t+\pi=(i+s+1)\alpha-t$. Hence, the orientation of $v$ is $(i+s+1)\alpha-t+x=(2i+s-\ell+2)\alpha-2t$. Thus, the direction $v$ lies in the union of the two consecutive cones $C_{\zeta(i,\ell)},C_{\zeta(i,\ell)+1}$, for $\zeta(i,\ell)=2i+s-\ell$.}\label{Fig:Compatible}
\end{center}
\end{figure}

It follows that in Corollary~\ref{Corol:ExtremalPair},
we can require that the indices $0\leq j,\ell\leq k-1$, for which
$(p,q)$ is a (strongly) $(j,\ell)$-extremal pair, satisfy 
$\zeta(i,\ell) \le j \le \zeta(i,\ell)+2$.  
Indeed, we may require that the vectors $u_j[q],u_{\ell}[p]$ hit $\bisect_{pq}$ at the respective points $x$ and $y$ for which the angle $\angle xpy=\angle xqy$ is at most $\alpha$, which, in turn, happens only if $u_j$ bounds one of the cones $C_{\zeta(i,\ell)},C_{\zeta(i,\ell)+1}$.

For all $0\le i \le s-1$ and $0\le \ell\leq k-1$ we define a set $\Pi_{i,\ell}$ which
consists of all pairs $(p,q)$ of points of $P$ such that there exists a
secondary node $w$ in $\T_i$, and indices $\ell$ and $\zeta(i,\ell)
\le j \le \zeta(i,\ell)+2$, such that $p=\xi_{i,\ell}^B(w)$ and
$q=\xi_{i,j}^R(w)$ or  $p=\xi_{i,\ell}^R(w)$
and $q=\xi_{i,j}^B(w)$.  We define the set $\NN_\ell[p]$ to consist of
all points $q$ such that  $(p,q) \in \Pi_{i,\ell}$.
For a point $p$ the set of points that participate in the 
{\em reduced} tournament $\dirtour_\ell[p]$ is 
$\bigcup_{\ell'=\ell-3}^{\ell+3} \NN_{\ell'}[p]$.
(Note that this rule distributes a point $q\in \NN_{\ell}[p]$ to only seven nearby tournaments. Nevertheless, when the edge $pq$ is sufficiently long, $q$ will belong to several consecutive neighborhoods $\NN_{\ell}[p]$, and therefore will appear in more tournaments, in particular in at least eight consecutive tournaments at which it should win, according to the definition of our $\SDG$.)

We claim that, with this redefinition of the tournaments
$\dirtour_\ell[p]$, Theorems \ref{Thm:Completeness} and \ref{Thm:Soundness} still hold.
To verify that  Theorem \ref{Thm:Completeness} holds one has to follow its (short) proof and
notice that, by Lemma 
\ref{Lemma:Compatible}, the point $q$  belongs to the eight reduced tournaments 
which it is supposed to win.

We next indicate the changes required in the proof of Theorem
\ref{Thm:Soundness}. We use the same notation as in the original
proof of Theorem \ref{Thm:Soundness}, and recall that it assumed by contradiction that, say,
$N_{\ell+4}[p]\neq q$ even though $q$ wins the tournaments
$\dirtour_\ell[p],\dirtour_{\ell+1}[p],\ldots,\dirtour_{\ell+7}[p]$,
and the point $p$ is strongly $(\ell+3)$- and $(\ell+4)$-extremal for
$q$. 
%We then obtained a contradiction by showing that 
%We then show that $q$ cannot win in one of the tournaments
%$\dirtour_{\ell+1}[q]$ or $\dirtour_{\ell+9}[q]$. 
We use
Lemma~\ref{Lemma:qExtremal} to establish the existence of some point
$r\in P$ such that $\varphi_{\ell+4}[p,r]<\varphi_{\ell+4}[p,q]$ and $p$ is
$(\ell+4)$-extremal for $r$.  Let $i$ be the index for which $r\in C_i[p]$, and let $w$ be the secondary node in $\T_i$ for which $r\in B_w$
and $p\in R_w$.  Note that $p=\xi_{i,\ell+4}^R(w)$.  We next choose an
 index $j$ such that the point $r'=\xi_{i,j}^B(w)$ either satisfies 
that
$\varphi_{\ell+7}[p,r']<\varphi_{\ell+7}[p,q]$ if $r$ is to the
right of the line from $p$ to $q$,  or that
$\varphi_{\ell+1}[p,r']<\varphi_{\ell+1}[p,q]$ if $r$ is to the
left of the line from $p$ to $q$.  To re-establish Theorem \ref{Thm:Soundness} it suffices to show that
$r'$ participates in the reduced tournament $\dirtour_{\ell+7}[p]$ (resp., $\dirtour_{\ell+1}[p]$) if $r$ is to the
right (resp., left) of the line from $p$ to $q$.

It follows from the way we defined $j$ in the original proof and from Lemma \ref{Lemma:Compatible} 
that $\zeta(i,\ell+4)-2\le j \le \zeta(i,\ell+4)-1$ 
(if $r$ is to the
right of the line from $p$ to $q$)
or $\zeta(i,\ell+4)+1\le j \le \zeta(i,\ell+4)+2$ (if $r$ is to the
left of the line from $p$ to $q$). 
So  $r' \in \NN_{\ell+4}[p]$ and therefore $r'$ does participate in the reduced tournament $\dirtour_{\ell+1}[p]$ or $\dirtour_{\ell+7}[p]$.
Indeed, the direction $v$ used in that proof lies in one of the cones $C_{\zeta(i,\ell+4)}, C_{\zeta(i,\ell+4)+1}$. The direction $u_j$ then forms an angle between $\alpha$ and $2\alpha$ with $v$, which lies counterclockwise from $v$ if $r$ lies to the right of the line from $p$ to $q$, or clockwise from $v$ in the other case. This is easily seen to imply the two corresponding constraints on $j$; see Figure \ref{Fig:Clockwise}.

%\natan{In Thm 4.4 we started with $\bisect_{pr}$ which defeats $\bisect_{pq}$ in $u_{\ell+4}$ and to the right.
%If we replace $r$ but some $j$-extremal $r'$ in $C_i[p]$ (containing also $r$) it still works well for $u_{\ell+7}$. Lemma 4.7 specifies the cones %of $v[r]$ (meeting $u_\ell[p]\cap \bisect_{pr}$). This restricts $j$, since $u_j[r]$ is the first grid direction shifted by at least $\alpha$ from %$v[r]$. Note that $r'$ should go to $\dirtour_{j+7}$ rather than $\dirtour_{j+4}$.}

We change our algorithm accordingly to maintain only the reduced tournaments. 

Now every secondary node $w$ of any range tree $\T_i$ contributes only seven pairs to each set
$\Pi_{i,\ell}$, for $0\leq \ell\leq k-1$, so the size of each such set is $O(n\log n)$. %\natan{Micha asks where we saved log? Answer: on the total size of $\T_i$. } \haim{I think we do not really save a log, the log was not there before too, i added a footnote there.} 
Since there
are $O(s^2)$ sets $\Pi_{i,\ell}$, their total size is $O(s^2 n\log
n)$.  Each pair in each $\Pi_{i,\ell}$ contributes an item to a constant
number of tournaments, so the total size of the tournaments is 
$O(s^2 n \log n)$.  Each individual tournament $\dirtour_{\ell}[p]$ is now
of size $O(s\log^2n)$, because $p$ belongs to $O(\log^2 n)$ pairs in each
set $\Pi_{i,\ell'}$ for $0\le i\le s-1$, $0\leq \ell'\leq k-1$, and $\dirtour_\ell[p]$ inherits only those points $q$ that come from pairs $(p,q)\in \Pi_{i,\ell'}$, for $0\leq i\leq s-1$ and $\ell-3\leq \ell'\leq \ell+3$.

When $\xi_{i,\ell}^B[w]$
 changes from $p$ to $p'$ for some $0\le i\le s-1$ and $0\le \ell \le
k-1$, at most a constant number of pairs
$(p,\xi_{i,j}^R(w))$ for $\zeta(i,\ell)\leq j\leq \zeta(i,\ell)+2$
are deleted from $\Pi_{i,\ell}$, and a constant number of pairs
$(p',\xi_{i,j}^R(w))$ for $\zeta(i,\ell)\leq j\leq \zeta(i,\ell)+2$
are added to $\Pi_{i,\ell}$.
Similar changes take place in $\Pi_{i,j}$ for those three indices $j$ satisfying
$\zeta(i,j)\leq \ell \leq \zeta(i,j)+2$. 
When $\xi_{i,j}^R[w]$ 
changes from $q$ to $q'$ for
some $0\leq i \le s-1$ and $0\le j \leq k-1$,
at most a constant number of pairs $(\xi_{i,\ell}^B(w),q)$ are deleted
from $\Pi_{i,j}$ for the indices $\ell$ satisfying $\zeta(i,j)\leq \ell \leq
\zeta(i,j)+2$, and a constant number of pairs
$(\xi_{i,\ell}^B(w),q')$ are added for the same values of
$\ell$.  Similarly,
at most a constant number of pairs $(\xi_{i,\ell}^B(w),q)$ are deleted
from $\Pi_{i,\ell}$ for the indices $\ell$ satisfying $\zeta(i,\ell)\leq j\leq
\zeta(i,\ell)+2$, and a constant number of pairs
$(\xi_{i,\ell}^B(w),q')$ are added for the same values of
$\ell$.

  A change
of a secondary node $w$ in the tree $\T_i$ causes
 $O(s)$ pairs in the sets $\Pi_{i,\ell}$
to change.

Any $K_i$-swap changes 
$O(\log^2 n)$ nodes in 
  $\T_i$ and thereby causes $O(s\log^2 n)$ pairs in the sets $\Pi_{i,\ell}$
to change. 
Any $L_j$-swap changes $O(s\log^2 n)$ extremal points $\xi_{i,j}^R[w]$,
$\xi_{i,j}^B[w]$ at secondary nodes $w$ of the trees $\T_i$, and thereby  causes 
 $O(s\log^2 n)$ pairs in the sets $\Pi_{i,\ell}$
to change. Since each pair in $\Pi_{i,\ell}$ contributes an item to a constant
number of tournaments it follows that 
$O(s\log^2n)$ points are inserted to and deleted from
the tournaments $\dirtour_\ell[p]$ at each swap.

According to Theorem \ref{thm:kinetic-tour}
the size of each tournament is $O(s\log^2 n)$ -- the number
of elements that it contains. So the total size of all tournaments
is $O(s^2n\log n)$. In total we get that there are 
$O(s^2n^2 \log^2 n)$ updates to tournaments during swaps.
These updates generate 
$$
O(s^2n^2\log^2n\beta_{r+2}(s\log n)\log(s\log n))$$ 
tournament events
that are processed in overall
$$
O(s^2n^2\log^2n\beta_{r+2}(s\log n)\log^2(s\log n))
$$ 
time.
Each  individual tournament event 
is processed in $O(\log^2\log n + \log^2 s)$ time and each swap
can be processed in  $O(s\log^2n\log^2(s\log n))$
time.

\smallskip

In addition, for each pair $(p,q)\in \Pi_{i,\ell}$
we record whether $p$ is strongly
$\ell$-extremal for $q$.
We maintain this information using the 
trees $\T_j$, for $i-3\leq j\leq i+3$, as described above, which allow
for any $p,q\in P$ and $0\leq \ell\leq k-1$ to test, in $O(\log^2n)$
time, if $p$ is strongly $\ell$-extremal for $q$.  At each swap event
we spend $O(s\log^4n)$ extra time to compute for
$O(s\log^2n)$ pairs $(p,q)$ which are added to the sets $\Pi_{i,\ell}$
whether $p$ is strongly
$\ell$-extremal for $q$.

Consider a pair $(p,q) \in \Pi_{i,\ell}$. The point $p$ may stop being
strongly
$\ell$-extremal for $q$ only during a swap which involves $p$ 
or $q$. So, as before, at each swap we find the $O(s\log^2n)$ pairs
containing one of the points involved in the swap, and recompute, in $O(s\log^4n)$
total time,
for each such pair $(p,q)$, whether
the strong extremal relation holds. 
 We thus obtain the following summary result.

%In the full version of the paper \cite{StableFull} we prove that $\G$ is $(9\alpha,\alpha)$-stable. We also show how to kinetically maintain (a %refined version of) $\G$ efficiently, and establish the following theorem. 
\begin{theorem}\label{Thm:ReducedS}
Let $P$ be a set of $n$ moving points in $\reals^2$ under algebraic
motion of bounded degree, 
and let $\alpha > 0$ be a sufficiently small parameter. A $(10\alpha,\alpha)$-SDG of $P$
can be maintained using a data structure that requires 
$O((n/\alpha^2) \log n)$ space and encounters two types of
events: swap events 
%(of pairs of points in their order in some $u_j$-perpendicular, or $u_j$-parallel, direction) 
and tournament events.  There are $O(n^2/\alpha)$ swap events, 
each processed in $O(\log^4(n)/\alpha)$ time.
There are
$$O((n/\alpha)^2 \beta_{r+2}(\log (n)/\alpha)\log^2n\log(\log (n)/\alpha))$$
 tournament 
events, which are handled in a total of 
$$O((n/\alpha)^2 \beta_{r+2}(\log (n)/\alpha)\log^2n\log^2(\log (n)/\alpha))$$
processing time. The worst-case processing time of a
tournament event is $O(\log^2(\log (n)/\alpha))$. The data structure is also {\it local}, in the sense that each point
is stored, at any given time, at only $O(\log^2n/\alpha^2)$ places in the structure.
% amortized time which are processed in 
%$O((n/\alpha)^2\beta_{r+2}(s\log n)\log^2 n\log^2(k\log n))$ time.
%Processing each individual tournament event takes\\  $O(\log^2\log n+\log^2 k)$ time.
\end{theorem}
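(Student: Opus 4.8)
The plan is to assemble the estimates that the preceding discussion has already prepared, so the argument is mostly a bookkeeping exercise together with the substitution $s=\Theta(1/\alpha)$. First I would settle correctness: that the graph $\G$ built from the \emph{reduced} tournaments $\dirtour_\ell[p]$ is indeed a $(10\alpha,\alpha)$-stable Delaunay graph. This amounts to re-running the proofs of Theorems~\ref{Thm:Completeness} and~\ref{Thm:Soundness} and verifying that every witness point they invoke still appears in the tournament in which it must win (or dethrone the purported winner). For completeness, Corollary~\ref{Corol:ExtremalPair}(i) together with Lemma~\ref{Lemma:Compatible} guarantees that for a $10\alpha$-long edge $e_{pq}$ the point $q$ lies in $\NN_{\ell'}[p]$ for the eight consecutive indices $\ell'$ it is supposed to win, hence $q\in\dirtour_{\ell'}[p]$ for all of them. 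For soundness, the point $r'=\xi^B_{i,j}(w)$ produced in the proof of Theorem~\ref{Thm:Soundness} has, by the choice of $j$ and Lemma~\ref{Lemma:Compatible}, index $j$ in the appropriate $\zeta(i,\ell+4)$-window, so $r'\in\NN_{\ell+4}[p]$ and therefore participates in $\dirtour_{\ell+7}[p]$ (resp.\ $\dirtour_{\ell+1}[p]$), which is exactly what is needed to derive the contradiction there.

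Next I would account for the space and locality. The range trees $\T_i$, $0\le i<s$, now carry $O(s)$ extreme points $\xi^R_{i,\ell}(w),\xi^B_{i,\ell}(w)$ at every secondary node (an $O(s)$ factor more than in~\cite{KineticNeighbors}), hence use $O(s^2n\log n)$ storage; the sorted lists $L_i,K_i$ use $O(sn)$ space; each set $\Pi_{i,\ell}$ receives $O(1)$ pairs per secondary node, hence has size $O(n\log n)$, for a total of $O(s^2n\log n)$ over the $O(s^2)$ choices of $(i,\ell)$; and each pair feeds a constant number of reduced tournaments, so the total tournament size is also $O(s^2n\log n)$. Summing and plugging in $s=\Theta(1/\alpha)$ gives $O((n/\alpha^2)\log n)$. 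Locality follows the same way: a fixed point $p$ lies in $O(\log^2 n)$ secondary nodes of each $\T_i$, hence in $O(\log^2 n)$ pairs of each $\Pi_{i,\ell}$ and in $O(s^2\log^2 n)=O(\log^2 n/\alpha^2)$ pairs altogether, each placing $p$ in $O(1)$ reduced tournaments.

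For the kinetic cost I would separate the two event types. The swap events are the $O(sn^2)=O(n^2/\alpha)$ instances at which two points exchange order in one of the $2s$ lists; a $K_i$-swap changes $O(\log^2 n)$ nodes of one $\T_i$ and an $L_j$-swap changes $O(s\log^2 n)$ extreme points across the trees, so in either case $O(s\log^2 n)$ pairs of the $\Pi_{i,\ell}$, hence $O(s\log^2 n)$ tournament members, change; recomputing the strong-extremality flags of the affected pairs by querying $\T_{i-3},\dots,\T_{i+3}$ costs $O(s\log^4 n)$, which dominates the per-swap cost, i.e.\ $O(\log^4(n)/\alpha)$. Over all swaps this produces $O(s^2n^2\log^2 n)$ updates to reduced tournaments, each of size $O(s\log^2 n)$; feeding this into Theorem~\ref{thm:kinetic-tour} yields $O(s^2n^2\log^2 n\,\beta_{r+2}(s\log n)\log(s\log n))$ tournament events, processed in total time $O(s^2n^2\log^2 n\,\beta_{r+2}(s\log n)\log^2(s\log n))$, each in $O(\log^2(s\log n))=O(\log^2(\log(n)/\alpha))$ worst-case time. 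Substituting $s=\Theta(1/\alpha)$ converts all of these into the stated bounds.

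The only genuinely delicate step, and the one I would write out in most detail, is the correctness check in the first paragraph: one must verify that shrinking each tournament from $\NN[p]$ to $\bigcup_{\ell'=\ell-3}^{\ell+3}\NN_{\ell'}[p]$ never discards a point that the arguments of Theorems~\ref{Thm:Completeness} and~\ref{Thm:Soundness} rely on, which is precisely where Lemma~\ref{Lemma:Compatible} and the seven-cone slack in the index windows are used. Everything else is the routine arithmetic already performed in this subsection; the theorem is a summary, and its proof merely collects these estimates and sets $s=\Theta(1/\alpha)$.
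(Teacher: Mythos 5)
Your proposal is correct and follows essentially the same route as the paper: the proof of Theorem~\ref{Thm:ReducedS} is indeed the bookkeeping summary you describe, with the one substantive ingredient being the re-verification of Theorems~\ref{Thm:Completeness} and~\ref{Thm:Soundness} for the reduced tournaments via Lemma~\ref{Lemma:Compatible}, exactly as you flag in your final paragraph. Your space, locality, swap-event, and tournament-event counts all match the paper's accounting (with $s=\Theta(1/\alpha)$), so there is nothing to add.
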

Concerning locality, we note that a point participates in $O(s)$ projection tournaments at each of $O(s\log^2n)$ tree nodes. If it wins in at least one of the projection tournaments at a node, it is fed to $O(s)$ 
directional tournaments. So it appears in $O(s^2\log n)$ places.

\noindent {\bf Remarks:} (1) Comparing this algorithm with the space-inefficient one of Section~\ref{sec:Prelim}, we note that they both use the 
same kind of tournaments, but here much fewer pairs of points 
($O^*(n/\alpha^2)$ instead of $O(n^2/\alpha)$) participate in the 
tournaments. The price we have to pay is that the test for an edge $pq$
to be stable is more involved. Moreover, keeping track of the subset of 
pairs that participate in the tournaments requires additional work,
which is facilitated by the range trees $\T_i$.

\medskip\noindent
(2) To be fair, we note that our $O^*(\cdot)$ notation hides polylogarithmic factors in $n$. Hence, comparing the analysis in this section with Theorem \ref{Thm:MaintainSDGPolyg}, we gain when $n$ is smaller than some threshold, which is exponential in $1/\alpha$.
\section{Properties of  SDG}\label{Sec:SDGProperties}
We conclude the paper by establishing some of the properties of stable Delaunay graphs. 

\paragraph{Near cocircularities do not show up in an SDG.}
Consider a critical event during the kinetic maintenance of the full
Delaunay triangulation, in which four points $a,b,c,d$ become cocircular,
in this order, along their circumcircle, with this circle being empty.
Just before the critical event, the Delaunay triangulation involved
two triangles, say, $abc$, $acd$. The Voronoi edge $e_{ac}$ shrinks
to a point (namely, to the circumcenter of $abcd$ at the critical event),
and, after the critical cocircularity, is replaced by the Voronoi edge
$e_{bd}$, which expands from the circumcenter as time progresses.

Our algorithm will detect the possibility of such an event before the criticality occurs,
when $e_{ac}$ becomes $\alpha$-short (or even before this happens). It will then remove this edge from the stable subgraph,
so the actual cocircularity will not be recorded. The new edge $e_{bd}$
will then be detected by the algorithm only when it becomes sufficiently long
(if this happens at all), and will then enter the stable Delaunay graph. In short,
critical cocircularities do not arise {\em at all} in our scheme.

As noted in the introduction, a Delaunay edge $ab$ (interior to the hull) is just about to become $\alpha$-short or $\alpha$-long when the sum of the opposite angles in its two adjacent Delaunay triangles is $\pi-\alpha$ (see Figure \ref{Fig:LongDelaunay}). This shows that changes in the stable Delaunay graph occur when the
``cocircularity defect'' of a nearly cocircular quadruple (i.e., the difference between $\pi$ and the sum of opposite angles in the quadrilateral spanned by the quadruple) is between
$\alpha$ and $c\alpha$, where $c$ is the constant used in our definitions in Section \ref{sec:ViaPolygonal} or Section \ref{Sec:ReduceS}.
Note that a degenerate case of cocircularity is a collinearity on the convex
hull.
Such collinearities also do not show up in the stable
Delaunay graph.\footnote{Even if they did show up, no real damage would be done, because the number of such collinearities is only $O^*(n^2)$; see, e.g., \cite{SA95}.} A hull collinearity between three nodes $a, b, c$ is
detected before it happens, when (or before) the corresponding Voronoi edge
becomes $\alpha$-short, in which case the angle $\angle acb$, where $c$ is the middle point of the (near-)collinearity becomes $\pi-\alpha$ (see Figure \ref{collinearity}). 
Therefore a hull edge is removed from the $\SDG$ if the Delaunay triangle is
almost collinear. The edge (or any new edge about to replace it) re-appears in the $\SDG$ when its corresponding
Voronoi edge is long enough, as before. 

\begin{figure}
\begin{center}
\input{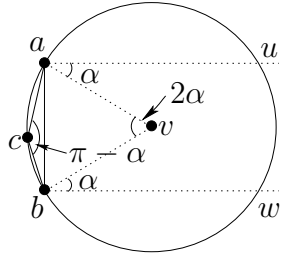} 
\caption{\small \sf The near collinearity that corresponds to a Voronoi edge
becoming $\alpha$-short.} \label{collinearity}
\end{center}
\end{figure}

\paragraph{SDGs are not too sparse.}
Consider the Voronoi cell $\Vor(p)$ of a point $p$, and suppose that $p$ has only one $\alpha$-long edge $e_{pq}$. Since the angle at which $p$ sees $e_{pq}$ is at most $\pi$, the sum of the angles at which $p$ sees the other edges is at least $\pi$, so $\Vor(p)$ has at least $\pi/\alpha$ $\alpha$-short edges. Let $m_1$ denote the number of points $p$ with this property. Then the sum of their degrees in $\DT(P)$ is at least $m_1(\pi/\alpha+1)$. Similarly, if $m_0$ points do not have any $\alpha$-long Voronoi edge, then the sum of their degrees is at least $2\pi m_0/\alpha$. Any other point at least two $\alpha$-long Voronoi edges and its degree is at least 3 if it is an interior point, or at least 2 otherwise. So the number of $\alpha$-long
edges is at least (recall that each $\alpha$-long edge is counted twice)
\begin{equation}
\label{Eq:long-edge}
n-m_1-m_0+m_1/2=n-(m_1+2m_0)/2 .
\end{equation}
Let $h$ denote the number of hull vertices. Since the sum of the degrees is $6n-2h-6$, we get

$$3(n-h-m_1-m_0)+2h+m_1\left(\frac{\pi}{\alpha}+1\right)+2m_0\frac{\pi}{\alpha}
\leq 6n-2h-6,$$
implying that
%\smallskip
%\noindent or \quad \quad \quad $\begin{displaystyle} m_1(\pi/\alpha-2)+2m_0(\pi/\alpha-3/2)\leq 3n,\end{displaystyle}$\\
%\smallskip
%\noindent or \quad \quad \quad \quad \quad $\begin{displaystyle}(\pi/\alpha-2)(m_1+2m_0)\leq 3n,\end{displaystyle}$\\
$$m_1+2m_0\leq \frac{3n}{\pi/\alpha-2}.$$
Plugging this inequality in (\ref{Eq:long-edge}), we conclude that the number of $\alpha$-long edges is at least
$$ n\left[1-\frac{3}{2(\pi/\alpha-2)}\right].$$
%$$\frac{2(n-m_1-m_0)+m_1}{2}=n-\frac{m_1+2m_0}{2}\geq n\left[1-\frac{3}{2(\pi/\alpha-2)}\right].$$
As $\alpha$ decreases, the number of edges in the SDG is always at 
least a quantity that gets closer to $n$.
This is nearly tight, since there exist $n$-point sets for which the number of stable edges is only roughly $n$, see Figure \ref{Fig:ShiftedGrid}.

\begin{figure}
\begin{center}
\input{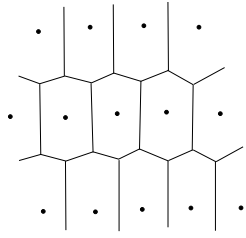}
\caption{\small \sf If the points of $P$ lie on a sufficiently spaced shifted grid then the number of $\alpha$-long edges in $\VD(P)$ (the vertical ones) is close to $n$.} \label{Fig:ShiftedGrid}
\end{center}
\end{figure}

\paragraph{Closest pairs, crusts, $\beta$-skeleta, and the SDG.}
Let $\beta\geq 1$, and let $P$ be a set of $n$ points in the plane. 
The \textit{$\beta$-skeleton} of $P$ is a graph on $P$ that 
consists of all the edges $pq$ such that the union of the two disks of 
radius $(\beta/2)d(p,q)$, touching $p$ and $q$, does not contain any 
point of $P\setminus\{p,q\}$. See, e.g., \cite{Crusts,Skeletons} for 
properties of the $\beta$-skeleton, and for its applications in surface reconstruction. 
We show that the edges of the $\beta$-skeleton are $\alpha$-stable 
in $\DT(P)$, provided $\beta\geq 1+\Omega(\alpha^2)$.
In Figure \ref{Fig:Skeleton} we sketch a straightforward proof of the fact that the edges of the $\beta$-skeleton are $\alpha$-stable in $\DT(P)$, provided that $\beta\geq 1+\Omega(\alpha^2)$.

\begin{figure}[htbp]
\begin{center}
\input{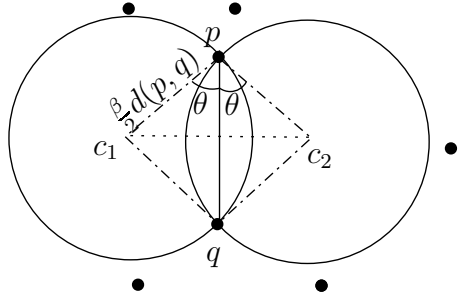}
\caption {\small \sf An edge $pq$ of the $\beta$-skeleton of $P$ (for $\beta>1$). $c_1$ and $c_2$ are centers of the two $P$-empty disks of radius $(\beta/2)d(p,q)$ touching $p$ and $q$. Clearly, each of $p,q$ sees the Voronoi edge $e_{pq}$ at an angle at least $2\theta=\angle c_1pq+\angle c_2pq$ (so it is $2\theta$-stable). We have $1/\beta=\cos \theta\approx 1-\theta^2/2$ or $\beta=1+\Theta(\theta^2)$. That is, for $\beta\geq 1+\Omega(\alpha^2)$ every edge of the $\beta$-skeleton is $\alpha$-stable.}\label{Fig:Skeleton}
\end{center}
\end{figure}

 A similar argument shows that the stable Delaunay graph contains the 
closest pair in $P(t)$ as well as the crust of a set of points sampled
sufficiently densely along a 1-dimensional curve (see \cite{Amenta,Crusts} for the definition of crusts and their applications in surface 
reconstruction). 
We only sketch the argument for closest pairs: If $(p,q)$ is a closest pair then $pq\in \DT(P)$, and the two adjacent Delaunay triangles $\triangle pqr^+,\triangle pqr^-$ are such that their angles of $r^+,r^-$ are at most $\pi/3$ each, so $e_{pq}$ is $(\pi/3)$-long, ensuring that $pq$ belongs to any stable subgraph for $\alpha$ sufficiently small; see \cite{KineticNeighbors} for more details. 
We omit the proof for crusts, which is fairly straightforward.

\begin{figure}
\begin{center}
\input{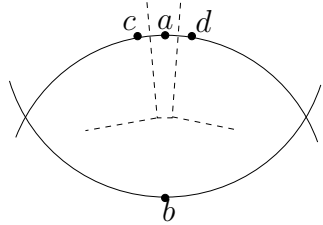} 
\caption{\small \sf $ab$ is an edge of the relative neighborhood graph but not of
$\SDG$.}
\label{norng1}
\end{center}
\end{figure}

In contrast, stable Delaunay graphs need not contain all the
edges of several other important subgraphs of the Delaunay
triangulation, including the Euclidean minimum spanning tree, the
Gabriel graph, the relative neighborhood graph, and the
all-nearest-neighbors graph. An illustration for the relative neighborhood graph is given in Figure \ref{norng1}. As a matter of fact, the stable
Delaunay graph need not even be connected, as is illustrated in
Figure~\ref{norng2}.

\begin{figure}
\begin{center}
\input{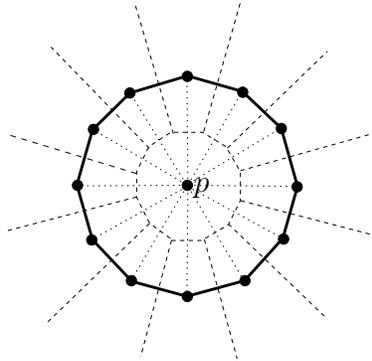}
\caption{\small \sf A wheel-like configuration that disconnects $p$ in the
stable Delaunay graph. The Voronoi diagram is drawn with dashed
lines, the stable Delaunay edges are drawn as solid, and the remaining
Delaunay edges as dotted edges. The points of the ``wheel" need not be cocircular.}
\label{norng2}
\end{center}
\end{figure}

\paragraph{Completing SDG into a triangulation.}
As argued above, the Delaunay edges that are missing in the stable
subgraph correspond to nearly cocircular quadruples of points, or
to nearly collinear triples of points near the boundary of the convex
hull. Arguably, these missing edges carry little information, because
they may ``flicker" in and out of the Delaunay triangulation even when the points
move just slightly (so that all angles determined by the triples of points change only slightly). Nevertheless, in many applications it is desirable
(or essential) to complete the stable subgraph into {\em some} triangulation,
preferrably one that is also stable in the combinatorial sense---it undergoes
only nearly quadratically many topological changes.

By the analysis in Section \ref{Sec:polygProp} we can achieve part of this goal by maintaining the full Delaunay triangulation $\DT^\poly(P)$ under the polygonal norm induced by the regular $k$-gon $Q_k$. This diagram experiences only a nearly quadratic number of topological changes, is easy to maintain, and contains all the stable Euclidean Delaunay edges, for an appropriate choice of $k\approx 1/\alpha$. Moreover, the union of its triangles is simply connected --- it has no holes. Unfortunately, in general it is not a triangulation of the entire convex hull of $P$, as illustrated in Figure \ref{Fig:AlmostTriangulation}.

\begin{figure}
\begin{center}
\input{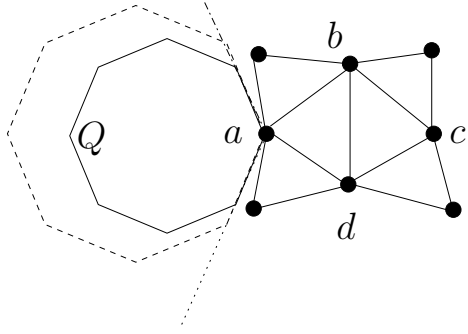}
\caption{\small \sf The triangulation $\DT^\poly(P)$ of an 8-point set $P$.
The points $a,b,c,d$, which do not lie on the convex hull of $P$, still lie on the boundary of the union of the triangles of $\DT^\poly(P)$
because, for each of these points we can place an arbitrary large homothetic interior-empty copy of $Q$ which touches that point.}
\label{Fig:AlmostTriangulation}
\end{center}
\end{figure}

For the time being, we leave it as an open problem to come up with a 
simple and ``stable" scheme for filling the gaps between the triangles 
of $\DT^\poly(P)$ and the edges of the convex hull. 
It might be possible to extend the kinetic triangulation scheme 
developed in \cite{KRS}, so as to kinetically maintain a triangulation of the
``fringes" between $\DT^\poly(P)$ and the convex hull of $P$, which is simple to define, easy to maintain, and undergoes only nearly quadratically many topological changes.

Of course, if we only want to maintain a triangulation of $P$ that experiences only a nearly quadratically many topological changes, then we can use the scheme in \cite{KRS}, or the earlier, somewhat more involved scheme in \cite{AWY}. However, if we want to keep the triangulation ``as Delaunay as possible", we should include in it the stable portion of $\DT$, and then the efficient completion of it, as mentioned above, becomes an issue, not yet resolved.

\paragraph{Nearly Euclidean norms and some of their properties.}
One way of interpreting the results of Section 3 is that the stability of Delaunay edges is preserved, in an appropriately defined sense, if we replace the Euclidean norm by the polygonal norm induced by the regular $k$-gon $Q_k$ (for $k\approx 1/\alpha$). That is, stable edges in one Delaunay triangulation are also edges of the other triangulation, and are stable there too. Here we note that there is nothing special about $Q_k$: The same property holds if we replace the Euclidean norm by any sufficiently close norm (or convex distance function \cite{CD}).

Specifically, let $Q$ be a closed convex set in the plane that is contained in the
unit disk $D_0$ and contains the disk $D'_0 = (\cos\alpha) D_0$ that
is concentric with $D_0$ and scaled by the factor $\cos\alpha$.
This
is equivalent to requiring that the Hausdorff distance $H(Q,D_0)$ 
between $Q$ and $D_0$ be at most $1-\cos\alpha\approx \alpha^2/2$. 
We define the center of $Q$ to coincide with the common center of 
$D_0$ and $D'_0$.

%\natan{New! We have fixed $\alpha$ before everything. Drop?}
%\noindent{\it Remark.} Let us assume that $\alpha$ attains the smallest possible value that enables the choice of $D_0$ and $D'_0$, as above.
%Hence, $Q$ touches the boundaries of both of these disks.

$Q$ induces a convex distance function $d_Q$, defined by $d_Q(x,y)=\min \{\lambda\mid y\in x+\lambda Q\}$. Consider the Voronoi diagram $\Vor^Q(P)$
of $P$ induced by $d_Q$, and the corresponding Delaunay triangulation $\DT^Q(P)$. We omit here the detailed analysis of the structure of these diagrams, which is similar to that for the norm induced by $Q_k$, as presented in Section \ref{Sec:polygProp}. See also \cite{Chew,CD} for more details. Call an edge $e_{pq}$ of $\Vor^Q(P)$ $\alpha$-stable if the following property holds: Let $u$ and $v$ be the endpoints of $e_{pq}$, and let $Q_u,Q_v$ be the two homothetic copies of $Q$ that are centered at $u,v$, respectively, and touch $p$ and $q$. Then we require that the angle between the 
supporting lines at $p$
(for simplicity, assume that $Q$ is smooth, and so has a unique supporting line at $p$ (and at $q$); otherwise, the condition should hold for any pair of supporting lines at $p$ or at $q$) 
to $Q_u$ and $Q_v$ is at least $\alpha$, and that the same holds at $q$.
In this case we refer to the edge $pq$ of $\DT^Q(P)$ as $\alpha$-stable.

Note that $Q_k$-stability was (implicitly) defined in a different manner in Section \ref{Sec:polygProp}, based on the number of breakpoints of the corresponding Voronoi edges. Nevertheless, it is easy to verify that the two definitions are essentially identical.
\begin{figure}[hbt]
\begin{center}
\input{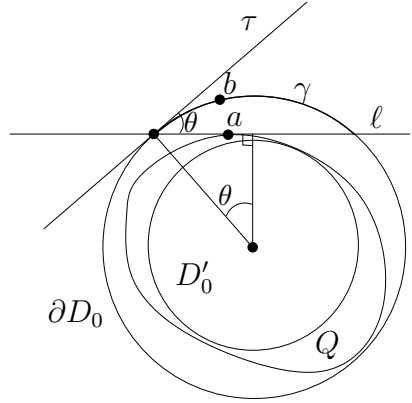}
\caption{\small \sf An Illustration for Claim \ref{Q1}. 
 \label{fig:norm1}}
\end{center}
\end{figure}

A useful property of such a set $Q$ is the following: 
\begin{claim} \label{Q1}
Let $a$ be a point on $\bd Q$ and let
$\ell$ be a supporting line to $Q$ at $a$.
Let $b$ be the point on $\bd D_0$
closest to $a$ ($a$ and $b$ lie on the same radius from the center
$o$).
Let $\gamma$ be the arc of  $\bd D_0$, containing $b$, and bounded by
the intersection points of $\ell$ with  $\bd D_0$.
 Then the angle between $\ell$ and  the tangent, $\tau$, to $D_0$ at any point
along $\gamma$, 
 is at most $\alpha$. 
\end{claim}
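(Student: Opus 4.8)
The plan is to place the common center $o$ of $D_0$ and $D'_0$ at the origin and to reduce the whole statement to a single scalar $h$, the distance from $o$ to $\ell$. Let $m$ be the foot of the perpendicular from $o$ to $\ell$, so that $\ell \perp om$ and $|om| = h$; and recall that at each point $c$ of $\bd D_0$ the tangent $\tau$ is perpendicular to the radius $oc$. Consequently, for any such $c$, the angle between the lines $\ell$ and $\tau$ equals the angle $\angle(\overrightarrow{om},\overrightarrow{oc})$, provided the latter is at most $\pi/2$ (two lines and their normals enclose the same angle). Hence the whole claim reduces to showing that for every $c$ on the arc $\gamma$ one has $\angle(\overrightarrow{om},\overrightarrow{oc}) \le \alpha$, and I plan to derive this from (i) $h \ge \cos\alpha$, and (ii) $\gamma$ being the \emph{smaller} of the two caps that $\ell$ cuts off $D_0$, which together give $\angle(\overrightarrow{om},\overrightarrow{oc}) \le \arccos h \le \arccos(\cos\alpha) = \alpha$ for all $c \in \gamma$.

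For (i): since $\ell$ supports $Q$, the convex set $Q$ lies in one of the closed halfplanes bounded by $\ell$; as $D'_0 = (\cos\alpha)D_0 \subseteq Q$, the interior $\intr D'_0$ avoids $\ell$, so the distance from $o$ to $\ell$ is at least the radius $\cos\alpha$ of $D'_0$, i.e. $h \ge \cos\alpha$. The complementary bound $h \le 1$, needed for (ii), follows from $a \in \ell \cap Q \subseteq D_0$, which gives $h \le |oa| \le 1$. (If $h = 1$ then $\ell$ is tangent to $\bd D_0$ at $a = b$, the arc $\gamma$ degenerates to the point $b$, and $\tau = \ell$, so the claim is trivial; hence I may assume $\cos\alpha \le h < 1$ and that $\ell \cap \bd D_0$ consists of two points.)

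For (ii): write $\hat n$ for the unit vector pointing from $o$ to $m$, so $\ell = \{x : \langle x,\hat n\rangle = h\}$ and the smaller cap (the one not containing $o$) is $\{x \in D_0 : \langle x,\hat n\rangle \ge h\}$. Since $a$ lies on $\ell$, $\langle a,\hat n\rangle = h$, and since $b$ is the point of $\bd D_0$ on the ray from $o$ through $a$, we have $b = a/|oa|$ with $|oa| \le 1$, hence $\langle b,\hat n\rangle = h/|oa| \ge h$; thus $b$ lies on the smaller cap, and therefore the arc $\gamma$ of $\bd D_0$ that contains $b$ and is bounded by $\ell \cap \bd D_0$ is exactly that cap's bounding arc. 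Its endpoints lie on radii making angle $\arccos h$ with $om$, so $\angle(\overrightarrow{om},\overrightarrow{oc}) \le \arccos h$ for every $c \in \gamma$; combined with the reduction in the first paragraph and with $h \ge \cos\alpha$, this gives the asserted bound $\le \alpha$.

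The argument is entirely elementary; the one point that genuinely needs care — and the only place the hypothesis $Q \subseteq D_0$ is used, as opposed to merely $Q \supseteq D'_0$ — is step (ii), the identification of $\gamma$ with the \emph{minor} cap. Were $\gamma$ the major arc, points $c$ on the far side of $D_0$ would have $\tau$ nearly perpendicular to $\ell$ and the bound would fail outright. Everything else is routine bookkeeping about tangents to a circle and distances of a line from the center.
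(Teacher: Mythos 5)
Your proof is correct and follows essentially the same approach as the paper: both arguments reduce the claim to the observation that the distance $h$ from the center $o$ to the supporting line $\ell$ satisfies $h \ge \cos\alpha$ (because $D'_0 \subseteq Q$ and $\ell$ supports $Q$), and that the angle in question is bounded by $\arccos h$. You make explicit two steps the paper treats as obvious — that the extreme case occurs at the endpoints of $\gamma$, and that $\gamma$ is the minor arc (the one not containing $o$) because $b$ lies in the far cap — but the underlying argument is the same.
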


\begin{proof}
Denote this angle by $\theta$.
Clearly $\theta$ is maximized when $\tau$ is tangent to $D_0$
at an intersection of $\ell$ and $\bd D_0$.
See Figure \ref{fig:norm1}.
It is easy to verify that
the distance from 
$o$ to $\ell$ is $\cos\theta$. But this distance has to be at least
$\cos\alpha$, or else $\bd Q$ would have contained a point inside 
$D'_0$, contrary to assumption. Hence we have 
$\cos\theta > \cos\alpha$, and thus $\theta < \alpha$, as claimed.
\end{proof}

We need a few more properties:

\begin{figure}[hbt]
\begin{center}
\input{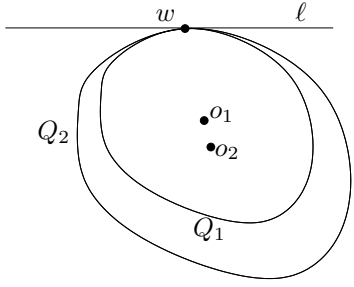}
\caption{\small \sf An Illustration for Claim \ref{Q2}. 
 \label{fig:norm2}}
\end{center}
\end{figure}

\begin{claim} \label{Q2}
Let $Q_1$ and $Q_2$ be two homothetic copies of $Q$ and let $w$ be a
point such that (i) $w$ lies on $\bd Q_1$ and on $\bd Q_2$, and
(ii) $w$ and the respective centers $o_1$, $o_2$ of $Q_1$, $Q_2$
are collinear. Then $Q_1$ and $Q_2$ are tangent to each other at $w$;
more precisely, they have a common supporting line at $w$, and, assuming $\partial Q$ to be smooth, $w$ is the only point of intersection of $\partial Q_1\cap \partial Q_2$ (otherwise, $\partial Q_1\cap \partial Q_2$ is a single connected arc containing $w$.).
\end{claim}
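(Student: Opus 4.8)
The plan is to push everything through one structural fact about homotheties: a homothety of positive ratio carries a supporting line of $Q$ to a parallel supporting line of its image, and carries the boundary point of $Q$ lying in a given direction from the center of $Q$ to the boundary point of $Q_i$ lying in the same direction from $o_i$. First I would normalize coordinates so that the center of $Q$ is the origin and write $Q_i=o_i+\lambda_i Q$ with $\lambda_i>0$, $i=1,2$. Hypothesis (i) then reads $w=o_i+\lambda_i a_i$ with $a_i\in\partial Q$, i.e.\ $w-o_i=\lambda_i a_i$, and hypothesis (ii) says the nonzero vectors $w-o_1$ and $w-o_2$ are parallel; hence $a_1$ and $a_2$ are parallel vectors issued from the origin. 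Since $D_0'\subseteq Q$, the origin is interior to $Q$, so every ray from it crosses $\partial Q$ exactly once, and therefore $a_1$ and $a_2$ point in the same direction and coincide, $a_1=a_2=:a$. (Collinearity alone also permits the antipodal alternative, in which $w$ lies strictly between $o_1$ and $o_2$; I would dismiss it by the remark that in every situation where the claim is invoked the two centers lie on a common ray emanating from $w$ — i.e.\ the copies meet ``internally'' at $w$ rather than straddling it.)

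Next I would rewrite $Q_i=w+\lambda_i(Q-a)$ and, taking w.l.o.g.\ $\lambda_1\le\lambda_2$, invoke the elementary fact that $\mu C\subseteq C$ for any convex $C\ni 0$ and any $\mu\in[0,1]$, applied to $C=Q-a$ (which contains the origin because $a\in Q$): this yields $Q_1\subseteq Q_2$, with $w\in\partial Q_1\cap\partial Q_2$. Now pick any supporting line $\ell_0$ of $Q$ at $a$ (unique if $\partial Q$ is smooth at $a$); its image under $x\mapsto w+\lambda_i(x-a)$ is a supporting line $\ell_i$ of $Q_i$ at $w$, and since this map is a positive dilation followed by a translation, $\ell_1$ and $\ell_2$ are parallel and both pass through $w$, so $\ell_1=\ell_2=:\ell$. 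Because $Q_1\subseteq Q_2$ and $Q_2$ lies in one of the closed half-planes bounded by $\ell$, so does $Q_1$; hence $\ell$ is a common supporting line of $Q_1$ and $Q_2$ at $w$ — the asserted tangency. (If $\partial Q$ is not smooth at $a$, the whole normal cone at $a$ is shared and any such line works.)

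For the intersection of the two boundaries I would argue directly. If $z\in\partial Q_1\cap\partial Q_2$, write $z=w+\lambda_1(c_1-a)=w+\lambda_2(c_2-a)$ with $c_1,c_2\in\partial Q$; this forces $c_2=a+(\lambda_1/\lambda_2)(c_1-a)$, so $c_2$ lies on the segment $[a,c_1]$, strictly between its endpoints unless $c_1=a$. If $c_1=a$ then $z=w$. If $c_1\ne a$, then a point of $\partial Q$ lies in the relative interior of the chord $[a,c_1]\subseteq Q$, and a supporting line of $Q$ at that point must contain the whole chord; hence $[a,c_1]\subseteq\partial Q$ and $a$ is an endpoint of a genuine boundary segment $E$ of $Q$. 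When $\partial Q$ contains no such segment through $a$ (in particular when it is smooth and strictly convex there, the case contemplated by the claim) this is impossible, so $\partial Q_1\cap\partial Q_2=\{w\}$. Otherwise the same computation identifies $\partial Q_1\cap\partial Q_2$ with $w+\lambda_1(E-a)$, an edge of $Q_1$ through $w$ contained (since $\lambda_1\le\lambda_2$) in the parallel edge $w+\lambda_2(E-a)$ of $Q_2$ — a single connected arc containing $w$, as claimed.

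The step I expect to be the main obstacle is the reduction $a_1=a_2$: collinearity of $w,o_1,o_2$ by itself leaves open the degenerate possibility that $w$ separates the two centers, and ruling it out requires appealing to the way the claim is applied (the two copies touch internally at $w$). Once that is settled, the rest is routine bookkeeping with the homothety $x\mapsto w+\lambda_i(x-a)$ and the convexity of $Q$.
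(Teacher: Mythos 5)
Your proof follows essentially the same route as the paper: map $Q_1$ and $Q_2$ back to the standard placement of $Q$ by the inverse homotheties, observe that $w$ is sent to a single point $a\in\partial Q$, and push a supporting line at $a$ forward to obtain a common supporting line of $Q_1,Q_2$ at $w$. Where you go beyond the paper is in two respects, both to your credit. First, you make explicit (and actually prove) the ``only point of intersection'' assertion: the paper declares it ``routine'' and omits it, whereas you reduce it cleanly to the fact that if $c_2$ lies strictly inside the chord $[a,c_1]$ and belongs to $\partial Q$, then the whole chord lies on $\partial Q$. Your case analysis is essentially right; the only slight imprecision is that $a$ need not be an \emph{endpoint} of the maximal boundary segment it lies on (it could be interior to it, or a vertex where two boundary edges meet), but in all of those cases the resulting intersection is still a connected arc through $w$, so the conclusion is unaffected. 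Second, and more importantly, you noticed a genuine hidden hypothesis. The statement as written allows $w$ to lie strictly between $o_1$ and $o_2$, in which case the two inverse homotheties send $w$ to two \emph{antipodal} points of $\partial Q$; since $Q$ is not assumed centrally symmetric, the supporting lines there can be different, and the conclusion fails. The paper's one-line claim that ``both transformations map $w$ to the same point $w_0$'' silently rules this out. Your resolution---dismissing it because in every invocation (namely the construction of $Q_{c'}$ from $Q_c$) the two centers lie on a common ray from $w$---is exactly the right fix for a blind reconstruction; a cleaner alternative would be to strengthen hypothesis (ii) in the claim to say the two centers lie on a common ray emanating from $w$, which is what is actually used.
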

\begin{proof}
Map each of $Q_1$, $Q_2$ back to the standard placement of $Q$, by
translation and scaling, and note that both transformations map $w$
to the same point $w_0$ on $\bd Q$. Let $\ell_0$ be a supporting line
of $Q$ at $w_0$, and let $\ell_1$, $\ell_2$ be the forward images of
$\ell$ under the mappings of $Q$ to $Q_1$ and to $Q_2$, respectively.
Clearly, $\ell_1$ and $\ell_2$ coincide, and are a common supporting
line of $Q_1$ and $Q_2$ at $w$.
See Figure \ref{fig:norm2}. The other asserted property follows immediately if $\partial Q$ is smooth, and can easily be shown to hold in the non-smooth case too; we omit the routine argument.
\end{proof}

\begin{claim} \label{Q3}
Let $a$ and $b$ be two points on $\bd Q$, and let $\ell_a$ and $\ell_b$
be supporting lines of $Q$ at $a$ and $b$, respectively. Then the
difference between the angles that $\ell_a$ and $\ell_b$ form with
$ab$ is at most $2\alpha$.
\end{claim}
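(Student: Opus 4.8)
The plan is to bound, for each boundary point $x$ of $Q$, the angle that a supporting line $\ell_x$ at $x$ makes with a \emph{reference chord}, and then to take the difference. Concretely, fix the chord $ab$. For a point $x\in\{a,b\}$, let $b_x$ be the point of $\bd D_0$ on the same radius from the center $o$ as $x$ (so $x$ lies between $o$ and $b_x$, since $Q\subseteq D_0$ and $D'_0\subseteq Q$), and let $\tau_x$ be the tangent to $D_0$ at $b_x$. By Claim \ref{Q1}, the angle between $\ell_x$ and $\tau_x$ is at most $\alpha$. Hence the angle $\ell_a$ forms with $ab$ differs by at most $\alpha$ from the angle $\tau_a$ forms with $ab$, and similarly at $b$. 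So it suffices to show that the angle $\tau_a$ makes with $ab$ equals the angle $\tau_b$ makes with $ab$ — i.e., the claim for the Euclidean disk, with the chord taken between the \emph{projected} points $b_a,b_b$ on $\bd D_0$ replaced by the chord between $a,b$ themselves.

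The first subtlety is that $a$ and $b$ need not lie on $\bd D_0$, so $\tau_a,\tau_b$ are tangents at $b_a,b_b$ while the chord is $ab$, not $b_ab_b$. To handle this I would argue as follows. The tangent $\tau_a$ at $b_a$ is perpendicular to the radius $ob_a$, which is the same line as $oa$; likewise $\tau_b\perp ob$. Therefore the angle between $\tau_a$ and $\tau_b$ equals the angle $\angle(oa,ob)$ between the two radii, call it $\psi$. Now consider the (generally non-isoceles) triangle $oab$: the angle it makes at $a$ between $ao$ and $ab$, and the angle at $b$ between $bo$ and $ba$, are the two base-type angles, summing to $\pi-\psi$. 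Thus the angle between $\tau_a$ and $ab$ is $|\,\tfrac\pi2 - \angle oab\,|$ and the angle between $\tau_b$ and $ab$ is $|\,\tfrac\pi2-\angle oba\,|$, and their difference is $|\angle oab - \angle oba|$. This is \emph{not} zero in general — it vanishes only when $oa=ob$, e.g.\ when $a,b\in\bd D_0$. So the Euclidean picture alone does not give exact equality; it gives a difference controlled by how far $|oa|$ and $|ob|$ can diverge, which in turn is controlled by $\cos\alpha\le|ox|\le 1$.

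Putting the pieces together, the angle $\ell_a$ makes with $ab$ and the angle $\ell_b$ makes with $ab$ differ by at most $2\alpha$ (from the two applications of Claim \ref{Q1}) plus $|\angle oab-\angle oba|$ (from the Euclidean comparison). The remaining step is to show $|\angle oab-\angle oba|$ is itself small enough to absorb into the stated bound — but here I would instead choose the reference lines more cleverly to avoid paying this term twice: rather than comparing each $\ell_x$ to the tangent at the \emph{projection} $b_x$, compare $\ell_a$ to the line through $a$ perpendicular to $oa$ and $\ell_b$ to the line through $b$ perpendicular to $ob$ (these differ from $\ell_a,\ell_b$ by $\le\alpha$ each, still by Claim \ref{Q1}, since that claim really bounds the angle between a supporting line at $a$ and the direction perpendicular to $oa$). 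Then the two perpendiculars, compared against $ab$, give exactly $|\angle oab-\angle oba|$, and one checks by elementary trigonometry in triangle $oab$ — using $\cos\alpha\le|oa|,|ob|\le 1$ and the law of sines — that this is $O(\alpha)$, in fact at most $2\alpha$ once constants are chased (with room to spare for $\alpha$ small). I expect the main obstacle to be precisely this last quantitative estimate: showing that the ``tilt'' of the chord $ab$ relative to the two radii is $O(\alpha)$, uniformly over all admissible $a,b$, and verifying that the accumulated constant is $\le 2$ rather than some larger multiple of $\alpha$. If the clean constant $2$ proves annoying, the honest statement is ``at most $c\alpha$ for an absolute constant $c$,'' which is all that is needed downstream; I would flag that the $2\alpha$ in the claim may require $\alpha$ below an absolute threshold.
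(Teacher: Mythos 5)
You correctly pinpoint the crux: the tangent--chord equality equates the two reference angles only for a chord of $D_0$, not for $ab$ itself. But the residual you defer is misidentified, and the estimate you promise for it is false. The ``two perpendiculars compared against $ab$'' give $\bigl|\,|\pi/2-\angle oab|-|\pi/2-\angle oba|\,\bigr|$, which equals $|\angle oab-\angle oba|$ \emph{only} when both angles are acute; and $|\angle oab-\angle oba|$ by itself is \emph{not} $O(\alpha)$ under only the radius constraints $\cos\alpha\le|oa|,|ob|\le 1$. Indeed the law of sines gives $\tan\tfrac{\angle oab-\angle oba}{2}=\tfrac{\rho-1}{\rho+1}\cot\tfrac{\angle aob}{2}$ with $\rho=|ob|/|oa|$; the prefactor is $O(\alpha^2)$ but the cotangent is unbounded, and taking $\rho=\sec\alpha$, $\angle aob=\alpha^2$ makes $|\angle oab-\angle oba|\approx 0.93$. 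Such $a,b$ cannot both lie on $\bd Q$ -- if $|oa|=\cos\alpha$ then $\ell_a$ must be tangent to $D'_0$, and $b\in Q$ forces $\angle aob\ge\alpha$ -- but your plan never invokes that supporting-line/convexity constraint, and without it the deferred estimate fails. (With the correct residual and a case analysis one can show it is $\le\alpha$, but then your decomposition yields $3\alpha$, not the claimed $2\alpha$.)

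The intended proof eliminates the residual entirely by using Claim~\ref{Q1} at full strength: it bounds the angle between $\ell_a$ and the tangent to $D_0$ at \emph{any} point of the arc $\gamma_a$, not only at the co-radial projection of $a$; your paraphrase of Q1 as ``the perpendicular to $oa$'' discards exactly the generality that is needed. Let $A,B$ be the two points where the line through $a$ and $b$ meets $\bd D_0$, ordered $A,a,b,B$ along the line. Since $b\in Q$ lies on the $Q$-side of $\ell_a$, the extension beyond $a$ crosses $\ell_a$ to the non-$Q$ side, so $A$ lies on the arc $\gamma_a$; symmetrically $B\in\gamma_b$. Claim~\ref{Q1} therefore bounds the angle between $\ell_a$ and the tangent $\tau_A$ at $A$ by $\alpha$, and likewise for $\ell_b$ and $\tau_B$; the tangent--chord identity makes the angles of $\tau_A$ and $\tau_B$ with the chord $AB$ equal; and $AB$ \emph{is} the line $ab$. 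The triangle inequality for angles between lines now gives $|\theta_a-\theta_b|\le 2\alpha$ directly, with nothing left over. (The paper phrases this with the co-radial projections $a',b'$; the version with $A,B$ is the one in which the reference chord is literally $ab$ and the bookkeeping is immediate.)
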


\begin{proof}
Denote the two angles in the claim by $\theta_a$ and $\theta_b$,
respectively.
Let $a'$ (resp., $b'$) be the point on $\bd D_0$ nearest to (and
co-radial with) $a$ (resp., $b$). Let $\tau_1$, $\tau_2$ denote the
respective tangents to $D_0$ at $a'$ and at $b'$.  Clearly, the 
respective angles $\theta_1$, $\theta_2$ between the chord $a'b'$ 
of $D_0$ and $\tau_1$, $\tau_2$ are equal. By Claim~\ref{Q1}, we 
have $|\theta_1-\theta_a|\le\alpha$ and
$|\theta_2-\theta_b|\le\alpha$, and the claim follows.
\end{proof}

\paragraph{The connection between Euclidean stability and $Q$-stability.} 
Let $e_{pq}$ be a $t\alpha$-long Voronoi edge of the Euclidean diagram, for $t\ge 9$, 
and let $u,v$ denote its endpoints. 
Let $D_u$ and $D_v$ denote the disks centered respectively
at $u,v$, whose boundaries pass through $p$ and $q$, and let $D$ be a
disk whose boundary passes through $p$ and $q$, so that 
$D\subset D_u\cup D_v$ and the angles between the tangents to $D$ and
to $D_u$ and $D_v$ at $p$ (or at $q$) are at least $m\alpha$ each, where
$m \geq 4$. (Recall that the angle between the tangents to $D_u$ and $D_v$ us at least $t\alpha\geq 9\alpha$.)

\begin{figure}[hbt]
\begin{center}
\input{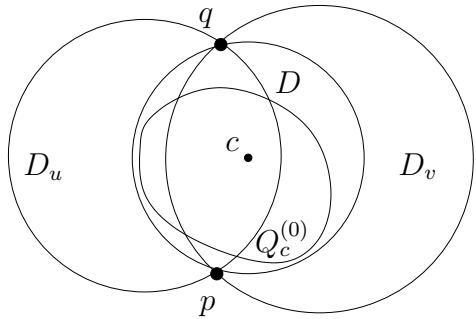}
\caption{\small \sf The homothetic copy $Q^{(0)}_c$.
 \label{fig:norm3}}
\end{center}
\end{figure}

Let $c$ and $\rho$ denote  the center and radius of $D$, respectively.
Note that $c$ lies on $e_{pq}$ ``somewhere in the middle'', because of
the angle condition assumed above.
Let $Q^{(0)}_c$ denote the homothetic copy of $Q$ centered at $c$ and
scaled by $\rho$, so $Q^{(0)}_c$ is fully contained in $D$ and thus
also in $D_u\cup D_v$, implying that $Q^{(0)}_c$ is {\em empty}---it
does not contain any point of $P$ in its interior. (This scaling makes the
unit circle $D_0$ bounding $Q$ coincide with $D$.) See Figure \ref{fig:norm3}.

Expand $Q^{(0)}_c$ about its center $c$ until the first time it 
touches either $p$ or $q$. Suppose, without loss of generality, 
that it touches $p$. Denote this placement of $Q$ as $Q_c$.
Let $\ell_p$ denote a supporting line of $Q_c$ at $p$. We claim that the angle between $\ell_p$ and the tangent $\tau_p$ to $D$ at $p$ is at most $\alpha$. Indeed, let $\ell_p^-,\ell_p^+$ denote the tangents from $p$ to $Q_c^{(0)}$. By Claim \ref{Q1}, the angles that they form with the tangent $\tau_p$ to $D$ at $p$ are at most $\alpha$ each. As $Q_c^{(0)}$ is expanded to $Q_c$, these tangents rotate towards each other, one clockwise and one counterclockwise so when they coincide (at $Q_0$) the resulting supporting line $\ell_p$ lies inside the double wedge between them. Since $\tau_p$ also lies inside this double wedge, and forms an angle of at most $\alpha$ with each of them, it follows that $\ell_p$ must form an angle of at most $\alpha$ with $\tau_p$, as claimed.

Since the angle between the tangent $\tau_p$ to $D$ at $p$ and the tangent
$\tau_p^v$
to $D_v$ at $p$ is at least $m\alpha$ it follows that the angle between
$\ell_p$ and $\tau_p^v$ is at least $(m-1) \alpha$. 
A similar argument shows that the angle between $\ell_p$ and 
the tangent $\tau_p^u$ to $D_u$ at $p$ is at least  $(m-1) \alpha$. 
 
\begin{figure}[hbt]
\begin{center}
\input{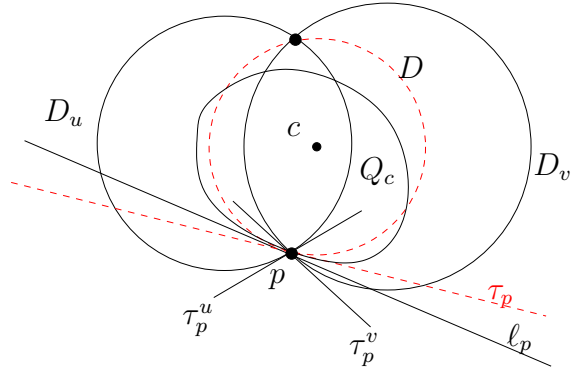}
\caption{\small \sf The homothetic copy $Q_c$.
 \label{fig:norm4}}
\end{center}
\end{figure}

Now expand $Q_c$ by moving its center along the line passing
through $p$ and $c$, away from $p$, and scale it appropriately so 
that its boundary continues to pass through $p$, until it touches 
$q$ too. Denote the center of the new placement as $c'$, and
the placement itself as $Q_{c'}$. Let $D_{c'}$ be the 
corresponding homothetic copy of $D_0$ centered at $c'$ and bounding
$Q_{c'}$. See Figure \ref{fig:norm4}.

\begin{figure}[hbt]
\begin{center}
\input{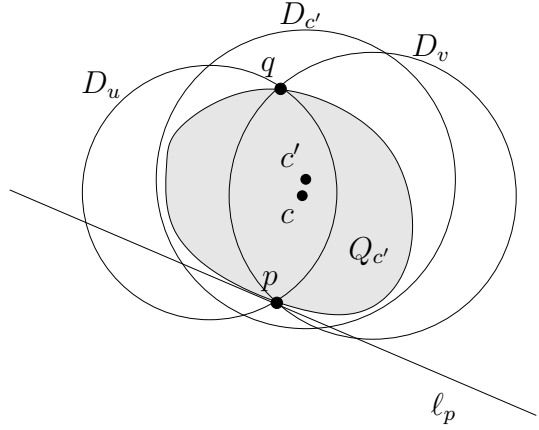}
\caption{\small \sf The homothetic copy $Q_{c'}$.
 \label{fig:norm5}}
\end{center}
\end{figure}

We argue that  $Q_{c'}$ is empty.
 By Claim~\ref{Q2}, $\ell_p$ is also
a supporting line of $Q_{c'}$ at $p$.
Refer to Figure \ref{fig:norm6}.
We denote by $x_p$ and $y_p$ the intersections of the supporting line
$\ell_p$ with $\partial D_{c'}$ and $\partial D_v$, respectively.
 We denote by $z$ the intersection of $\partial D_{c'}$ and $\partial D_v$ that lies on the same side of $\ell_p$ as $q$.
The angle $\angle pzx_p$ is at most $\alpha$ since by Claim \ref{Q1}
the angle between $\ell_p$ and the tangent to $D_{c'}$ at $x_p$ is at most 
$\alpha$.
On the other hand the
 angle $\angle pzy_p$ is at least $(m-1)\alpha$ since the angle
between $\ell_p$ and $\tau_p^v$ at $p$ is at least 
$(m-1)\alpha$. So it follows that
the segment $px_p$ is fully contained in $D_v$.
Since the ray $\overline{zp}$ meets $\partial D_v$ (at $p$) before meeting $\partial D_{c'}$, and the ray $\overline{zx_p}$ meets $\partial D_{c'}$ (at $x_p$) before meeting $\partial D_v$, it follows that $\partial D_{c'}$ and $\partial D_v$ intersect at a point on a ray between $\overline{zp}$ and $\overline{zx_p}$.

\begin{figure}[hbt]
\begin{center}
\input{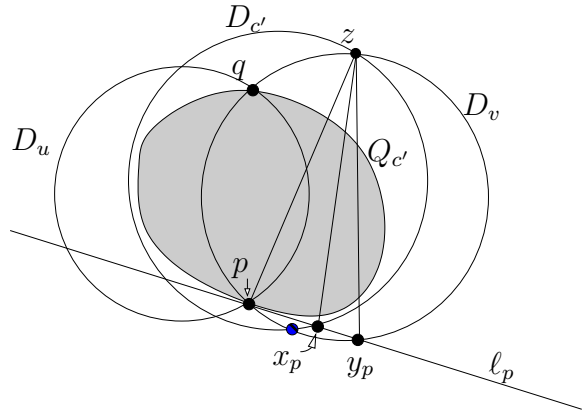}
\caption{\small \sf The segment $px_p$ is fully contained in $D_v$. The circles
$\partial D_{c'}, \partial D_v$ intersect at a point on a ray emanating from $z$ between $zp$ and $z_{x_p}$.
 \label{fig:norm6}}
\end{center}
\end{figure}

Let $\ell_q$ denote a 
supporting line of $Q_{c'}$ at $q$. By Claim~\ref{Q3}, the angles
between $pq$ and the lines $\ell_p$, $\ell_q$ differ by at most
$2\alpha$.
Since each of the angles between $\ell_p$ and 
the two tangents
$\tau_p^v$ 
and $\tau_p^u$ is at least $(m-1) \alpha$, it follows that
each of the angles between $\ell_q$ and the two 
tangents $\tau_q^u$ and  $\tau_q^v$ to $D_u$ and $D_v$,
respectively, at $q$, is at least $(m-3)\alpha$.

Refer now to Figure \ref{fig:norm7}.
We denote by $z'$ the intersection of $D_{c'}$ and $D_v$ distinct from $z$,
and we denote by $x_q,y_q$ the intersections between $\ell_q$ and
 $D_{c'},D_v$, respectively. An argument analogous to the one given
before shows that $\angle qz'x_q \le \alpha$ while 
$\angle qz'y_q \ge  (m-3)\alpha$. It follows that the segment 
$qx_q$ is fully contained in $D_v$ and we have an intersection between 
$\partial D_{c'}$ and $\partial D_v$ on a ray emanating from $z'$ between the ray from $z'$ to
$q$ and the ray from $z'$ to $x_q$.

\begin{figure}[hbt]
\begin{center}
\input{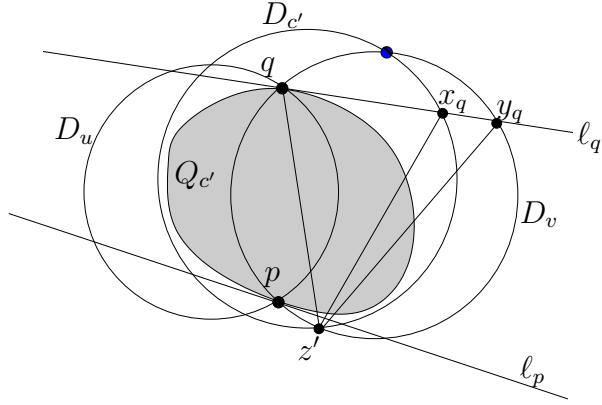}
\caption{\small \sf The segment $qx_q$ is fully contained in $D_v$. The circles
$\partial D_{c'}, \partial D_v$ intersect at a point on a ray emanating from $z'$ between $z'q$ and $z'x_q$.
 \label{fig:norm7}}
\end{center}
\end{figure}

Our argument about the position of the intersections between
$D_{c'}$ and $D_v$ implies that the entire section of 
$\partial D_{c'}$ between $x_p$ and $x_q$ is contained $D_v$. Therefore
the portion of $Q_{c'}$ to the right of the line through $p$ and $q$ (in the configuration depicted in the figures) is fully contained in $D_v$.
A symmetric argument shows that the portion of 
$Q_{c'}$ to the left of the line 
through $p$ and $q$ is fully contained in $D_u$. Since $D_u \cup D_v$ is empty we conclude that
$Q_{c'}$ is empty.

 The emptiness of $Q_{c'}$ implies that $p$ and 
$q$ are neighbors in the $Q$-Voronoi diagram, and that $c'$ lies on 
their common $Q$-Voronoi edge $e^Q_{pq}$.

We thus obtain the following theorem.
\begin{theorem} 
Let $P$, $\alpha$, and $Q$ be as above. Then (i) every $9\alpha$-stable edge of the Euclidean Delaunay triangulation is an $\alpha$-stable edge of $\DT^Q(P)$. (ii) Conversely, every $9\alpha$-stable edge of $\DT^Q(P)$ is also an $\alpha$-stable edge in the Euclidean norm.
\end{theorem}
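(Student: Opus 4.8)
The plan is to obtain both directions from the construction carried out in the paragraphs immediately preceding the statement, invoked twice (once near each endpoint of the relevant Voronoi edge) and combined with a monotonicity observation about how supporting lines rotate along a Voronoi edge.

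\emph{Part (i).} Let $e_{pq}$ be a $9\alpha$-stable (i.e., $9\alpha$-long) edge of the Euclidean $\DT(P)$, with Voronoi endpoints $u,v$ and associated empty disks $D_u,D_v$; by $9\alpha$-longness the tangents $\tau^u_p,\tau^v_p$ to $D_u,D_v$ at $p$ span an angle $\ge 9\alpha$, and similarly at $q$. I would first pick two auxiliary Euclidean disks $D^{(1)},D^{(2)}$ through $p$ and $q$, with centers on the segment $uv$ (so each is contained in $D_u\cup D_v$), one ``close to $D_u$'' and one ``close to $D_v$'', whose tangent directions at $p$ each make an angle $\ge 4\alpha$ with both $\tau^u_p$ and $\tau^v_p$ -- possible because the span is $\ge 9\alpha$. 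Feeding $D^{(1)}$ and $D^{(2)}$ (with $m=4$) into the construction preceding the theorem produces two empty homothetic copies $Q^{(1)},Q^{(2)}$ of $Q$ through $p$ and $q$; since every empty homothet of $Q$ through $p,q$ has its center on $b^Q_{pq}$, this already shows $e^Q_{pq}$ is an edge of $\Vor^Q(P)$, with the centers of $Q^{(1)},Q^{(2)}$ lying on it. The construction also bounds the supporting line of $Q^{(i)}$ at $p$ (and at $q$) to within $O(\alpha)$ of the tangent of $D^{(i)}$ there; choosing the two auxiliary tangent directions as far apart as the available clearance permits, the supporting lines of $Q^{(1)}$ and $Q^{(2)}$ at $p$ (and at $q$) differ by at least $\alpha$. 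Finally, as a homothet of $Q$ touching $p$ and $q$ slides with its center along $e^Q_{pq}$, its contact point with $p$ moves monotonically along $\bd Q$ (a consequence of the one-intersection property of $Q$-bisectors recalled in Section \ref{Sec:PolygonalBackground}), so the direction of its supporting line at $p$ rotates monotonically; hence the supporting lines at $p$ (and at $q$) to the two extreme copies $Q_{u^Q},Q_{v^Q}$ at the endpoints $u^Q,v^Q$ of $e^Q_{pq}$ are at least as far apart as those of $Q^{(1)},Q^{(2)}$, i.e., at least $\alpha$ apart. This is $\alpha$-stability of $pq$ in $\DT^Q(P)$.

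\emph{Part (ii).} I would run the mirror image of this argument, exchanging the roles of Euclidean disks and homothets of $Q$. The three auxiliary facts have clean dual forms: the tangents to $D_0$ at the two endpoints of a chord make \emph{equal} angles with the chord (Claim \ref{Q3} with $2\alpha$ sharpened to $0$); two disks through a common point with collinear centers are tangent there (the analogue of Claim \ref{Q2}); and a supporting line $\ell$ to $Q$ at a point $a$, compared with the tangent to $D_0$ at the radially nearest point of $\bd D_0$, makes angle $\le\alpha$ -- the dual of Claim \ref{Q1}, proved by the same computation, now using $D'_0\subseteq Q$ to bound the distance from the center to $\ell$ from below by $\cos\alpha$ while $|a|\le 1$. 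Given a $9\alpha$-stable edge $e^Q_{pq}$ of $\DT^Q(P)$ with extreme copies $Q_{u^Q},Q_{v^Q}$, one picks two homothets of $Q$ through $p,q$ near the two ends of $e^Q_{pq}$ (each then contained in $Q_{u^Q}\cup Q_{v^Q}$, hence empty), inscribes in each a Euclidean disk through $p$ and $q$, slides its center along the line through $p$ and its contact point to expand it until it also passes through $q$, and verifies -- exactly as in the forward direction, with the dual claims -- that the resulting disk is empty. This produces two empty Euclidean disks through $p$ and $q$, so $e_{pq}\in\Vor(P)$, whose tangents at $p$ (and at $q$) differ by $\ge\alpha$; monotonicity of the tangent direction along $e_{pq}$ then upgrades this to the extreme disks $D_u,D_v$, giving $\alpha$-longness of $pq$ in the Euclidean norm.

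\emph{Main obstacle and loose ends.} The delicate part is the angular bookkeeping: each call to the construction degrades the controlled supporting-line direction by up to $\alpha$, and the emptiness step genuinely needs clearance ($m\ge 4$) from both extreme copies, so one must check that a span of $9\alpha$ leaves enough room to seat the two auxiliary objects far enough apart; should the constants fail to compose with room to spare, one simply replaces $9$ throughout by a larger absolute constant, which affects nothing later. Two routine points also require care: the periphery case, where $D_v$ (or $Q_{v^Q}$) degenerates to a halfplane or a wedge -- handled, as in Lemma \ref{Thm:LongEucPoly} and its extension, by letting the corresponding endpoint recede to infinity -- and the non-smooth case, where ``the supporting line at a point'' must be read as ``some/any supporting line,'' applying Claims \ref{Q1}--\ref{Q3} to the extreme ones.
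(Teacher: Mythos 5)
Your Part (i) tracks the paper's own argument, just with more explicit bookkeeping. The paper observes that the rays $\overline{pc'}$ for which the preceding construction certifies $Q_{c'}$ empty sweep a cone of angular extent $(t-8)\alpha$ and then simply appeals to this cone being large; you instantiate the construction twice, at two concretely placed auxiliary disks, and then invoke monotonic rotation of the supporting line along $e^Q_{pq}$ to pass to the two extreme placements. Both are driven by the same construction and Claims~\ref{Q1}--\ref{Q3}, so I would call this the same route; in fact your version spells out the conversion from ``wide cone of rays'' to ``wide angle between supporting lines,'' which is left implicit in the paper (and which, via Claim~\ref{Q1}, loses an extra $2\alpha$ -- so your caveat that $9$ may need to grow is well taken and applies to the paper's bound too).

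For Part (ii) you take a genuinely different route, and here a comparison is worthwhile. The paper's proof is a one-liner: it reruns the contradiction argument of Lemma~\ref{Thm:LongPolygEuc}, i.e.\ assumes the Euclidean edge fails to be $\alpha$-long, extracts a witness $r$, and applies Part (i) to the three-point set $\{p,q,r\}$ to contradict the assumed $Q$-stability. This costs no new geometry once Part (i) is in hand. You instead propose to mirror the construction directly, swapping circles and $Q$-homothets. The swap is not as frictionless as ``almost verbatim'' suggests: the forward argument crucially encloses $Q_{c'}$ in its circumscribing \emph{circle} $D_{c'}$ and then locates the two-point intersection $\partial D_{c'}\cap\partial D_v$; the naive mirror would have to enclose the final disk in a circumscribing $Q$-homothet and then argue about $\partial Q$-intersections, where circle-versus-$Q$-boundary intersections are not two-point, so one must retreat to the pseudo-disk property of $Q$-homothets and a blow-up of the disk by a factor $\sec\alpha$ before the containment argument can be repeated. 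None of this is fatal, but it is additional work that your sketch glosses over, and it is precisely what the paper's reduction to Part (i) lets one avoid.

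Finally, one small terminological point: you repeatedly call your three auxiliary statements ``duals'' of Claims~\ref{Q1}--\ref{Q3}, but as you yourself note they are really the same claims read in the other direction (the Euclidean specializations being immediate). There is nothing new to prove there; the real asymmetry in the dualization is the containment step just described, not the angle-comparison lemmas.
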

%The proof follows the analysis in Section \ref{Sec:polygProp}, and adapts it to handle the more general situation that arises here, requiring the %derivation of several additional geometric properties of $Q$-stable edges.
Note that parts (i) and (ii) are generalizations of Lemmas \ref{Thm:LongEucPoly} and \ref{Thm:LongPolygEuc}, respectively (with weaker constants).
\begin{proof}
Part (i) follows directly from the preceding analysis. Indeed, let $pq$ be a $t\alpha$-stable Delaunay edge, for $t\geq 9$, whose Voronoi counterpart has endpoints $u$ and $v$. Let $Q_{c'}$ be the homothetic placement of $Q$, with center $c'$, that touches $p$ and $q$. We have shown that $Q_{c'}$ has empty interior if the ray $\rho=\overline{pc'}$ lies between $\overline{pu}$ and $\overline{pv}$ and spans an angle of at least $4\alpha$ with each of them. Assuming $t\geq 9$, such rays $\rho$ form a cone of size $(t-8)\alpha>\alpha$, which, in turn, gives the first part of the theorem. 

Part (ii) follows from part (i) by repeating, almost verbatim, the proof of Lemma \ref{Thm:LongPolygEuc}.
\end{proof}

There are many interesting open problems that arise here. One of the main problems is to extend the class of sets $Q$ for which a near quadratic bound on the number of topological changes in $\DT^Q(P)$, under algebraic motion of bounded degree of the points of $P$, can be established.

%\balance

\paragraph{Acknowledgements.}
{\small Pankaj Agarwal was supported by NSF under grants CNS-05-40347, CCF-06 -35000, CCF-09-40671
               and DEB-04-25465, by ARO grants
               W911NF-07-1-0376 and W911NF-08-1-0452, by an
               NIH grant 1P50-GM-08183-01, by a DOE grant
               OEG-P200A070505, and by Grant 2006/194 from the
                U.S.--Israel Binational Science Foundation.
    Leo Guibas was supported by NSF grants CCR-0204486,
    ITR-0086013, ITR-0205671, ARO grant DAAD19-03-1-0331, as well as by
    the Bio-X consortium at Stanford.
Haim Kaplan was partially supported by Grant 2006/204 from the U.S.--Israel
Binational Science Foundation, project number 2006204, and by Grants 975/06 and 822/10 from
the 
Israel Science Fund.
 Micha Sharir was supported by NSF Grants CCF-05-14079 and CCF-08-30272, 
    by Grant 338/09 from the Israel Science Fund,
    and by the Hermann Minkowski--MINERVA Center for Geometry at Tel
    Aviv University. Natan Rubin was supported by Grants 975/06 and 338/09 from the Israel Science Fund.}


\begin{thebibliography}{10}
%\small

%\bibitem{StableFull} P. K. Agarwal, J. Gao, L. Guibas, H. Kaplan, V. Koltun, N. Rubin and M. Sharir,
%Kinetic stable Delaunay graphs,\\
%\texttt{\small http://www.cs.tau.ac.il/\~{ }rubinnat/StableF.pdf}.

%\bibitem{AAE00}
%P.~K. Agarwal, L.~Arge, and J.~Erickson,
%Indexing moving points,
%{\em Proc. 19th Annu. ACM Sympos. Principles Database Syst.}, 2000, pp.~175--186.

%\bibitem{ABBGH}
%P.~K. Agarwal, J.~Basch, M.~de~Berg, L.~J. Guibas, and J.~Hershberger,
%Lower bounds for kinetic planar subdivisions,
%{\em Proc. 15th Annu. ACM Sympos. Comput. Geom.}, 1999, pp.~247--254.

%\bibitem{ABGHZ}
%P.~K. Agarwal, J.~Basch, L.~J. Guibas, J.~Hershberger, and L.~Zhang,
%Deformable free-space tilings for kinetic collision detection,
%{\em Internat. J. Robotics Research} 21 (3) (2002), 179--197.

\bibitem{KineticNeighbors}
P. K. Agarwal, H. Kaplan and M. Sharir, Kinetic and dynamic data
structures for closest pair and all nearest neighbors, \emph{ACM
Trans. Algorithms} 5 (1) (2008), Art.~4.

\bibitem{AWY}
P. K. Agarwal, Y. Wang and H. Yu,
A 2D kinetic triangulation with near-quadratic topological changes,
\textit{Discrete Comput. Geom.} 36 (2006), 573--592.

%\bibitem{agmr-vdmp-98}
%G.~Albers, L.~J. Guibas, J.~S.~B. Mitchell, and T.~Roos,
%Voronoi diagrams of moving points,
%{\em Internat. J. Comput. Geom. Appl.} 8 (1998), 365--380.

\bibitem{Amenta}
N. Amenta and M. Bern, 
Surface reconstruction by Voronoi filtering,
{\em Discrete Comput. Geom.}, 22 (1999), 481--504.

\bibitem{Crusts}
N. Amenta, M. W. Bern and D. Eppstein, The crust and beta-skeleton: combinatorial curve reconstruction, 
{\it Graphic. Models and Image Processing} 60 (2) (1998), 125--135. 

%\bibitem{a-dcg-83}
%M.~J. Atallah,
%Dynamic computational geometry,
%{\em Proc. 24th Annu. IEEE Sympos. Found. Comput. Sci.}, pages 92--99, 1983.

%\bibitem{a-sdcgp-85}
%M.~J. Atallah,
%Some dynamic computational geometry problems,
%{\em Comput. Math. Appl.} 11 (12) (1985), 1171--1181.

\bibitem{AK}
F. Aurenhammer and R. Klein,
Voronoi diagrams,
in {\it Handbook of Computational Geometry},
J.-R. Sack and J. Urrutia, Eds.,
Elsevier, Amsterdam, 2000,
pages 201--290.

\bibitem{bgh-dsmd-99}
J.~Basch, L.~J. Guibas and J.~Hershberger,
Data structures for mobile data,
{\em J. Algorithms} 31 (1) (1999), 1--28.

\bibitem{Chew}
L. P. Chew,
Near-quadratic bounds for the $L_1$ Voronoi diagram of moving points,
{\em Comput. Geom. Theory Appl.}  7 (1997), 73--80.

\bibitem{CD}
L. P. Chew and R. L. Drysdale,
Voronoi diagrams based on convex distance functions,
{\em Proc. First Annu. ACM Sympos. Comput. Geom.}, 1985, pp.~235--244.

%\bibitem{bkos-cgaa-97}
%M.~de~Berg, M.~van Kreveld, M.~Overmars, and O.~Schwarzkopf,
%{\em Computational Geometry: Algorithms and Applications},
%2nd edition, Springer-Verlag, Berlin, 2000.

\bibitem{d-slsv-34}
B.~Delaunay,
Sur la sph{\`e}re vide. {A} la memoire de {Georges} {Voronoi},
{\em Izv. Akad. Nauk SSSR, Otdelenie Matematicheskih i Estestvennyh
Nauk} 7 (1934), 793--800.

\bibitem{TOPP}
E.~D.~Demaine, J.~S.~B.~Mitchell, and J.~O'Rourke,\\
The Open Problems Project,
\texttt{http://www.cs.smith.edu/\~{ }orourke/TOPP/}.

\bibitem{Ed2}
H. Edelsbrunner,
{\em Geometry and Topology for Mesh Generation},
Cambridge University Press, Cambride, 2001.

\bibitem{285869}
L.~J. Guibas,
Modeling motion,
In J.~E. Goodman and J.~O'Rourke, editors, {\em Handbook of Discrete
and Computational Geometry}. CRC Press, Inc., Boca Raton, FL, USA, second
edition, 2004, pages 1117--1134.

\bibitem{g-kdssar-98}
L.~J. Guibas,
Kinetic data structures --- a state of the art report,
In P.~K. Agarwal, L.~E. Kavraki and M.~Mason, editors, {\em Proc.
Workshop Algorithmic Found. Robot.}, pages 191--209. A. K. Peters, Wellesley,
MA, 1998.

\bibitem{gmr-vdmpp-92}
L.~J. Guibas, J.~S.~B. Mitchell and T.~Roos,
Voronoi diagrams of moving points in the plane,
{\em Proc. 17th Internat. Workshop Graph-Theoret. Concepts Comput.
Sci.}, volume 570 of {\em Lecture Notes Comput. Sci.}, pages 113--125.
Springer-Verlag, 1992.

\bibitem{IKLM}
C. Icking, R. Klein, N.-M. L\^{e} and L. Ma,
Convex distance functions in 3-space are different, {\it Fundam. Inform.} 22 (4) (1995), 331--352.

\bibitem{KRS}
H. Kaplan, N. Rubin and M. Sharir, 
A kinetic triangulation scheme for moving points in the plane, {\it Comput. Geom. Theory Appl.} 44 (2011), 191--205.

\bibitem{KLPS}
K. Kedem, R. Livne, J. Pach, and M. Sharir, On the union of Jordan regions and collision-free translational motion amidst polygonal obstacles, {\em Discrete Comput. Geom.} 1 (1986), 59--70.

\bibitem{Skeletons}
D. Kirkpatrick and J. D. Radke, A framework for computational morphology, \textit{Computational Geometry} (G. Toussaint, ed.), North-Holland (1985), 217--248.

\bibitem{LS}
D. Leven and M. Sharir,
Planning a purely translational motion for a convex object in
two--dimensional space using generalized Voronoi diagrams,
{\it Discrete Comput. Geom.} 2 (1987), 9--31.

\bibitem{Mehlhorn} 
K. Mehlhorn, \textit{Data Structures and Algorithms 1: Sorting and Searching}, Springer Verlag, Berlin 1984.

\bibitem{NR73} 
J. Nievergelt and E. M. Reingold, Binary search trees of bounded balance, {\it SIAM J. Comput.} 2 (1973),
33--43.

\bibitem{SA95}
M.~Sharir and P.~K. Agarwal,
{\em Davenport-Schinzel Sequences and Their Geometric Applications},
Cambridge University Press, New York, 1995.

\end{thebibliography}
\end{document}